\newcommand{\app}{\ensuremath{\:}}
\newcommand{\mathname}[1]{\ensuremath{\text{\emph{#1}}}}
\newcommand{\ensuretext}[1]{\ifmmode \text{#1} \else #1 \fi}
\newcommand{\boldg}[1]{\ifmmode \mathbf{#1} \else \textbf{#1} \fi}
\newcommand{\keyword}[1]{\boldg{#1}}
\newcommand{\tuple}[1]{\ensuremath{(#1)}}
\newcommand{\defeq}{\ensuremath{\stackrel{\Delta}{=}}}
\newtheorem{theorem}{Theorem}[section]
\newtheorem{lemma}[theorem]{Lemma}
\newtheorem{proposition}[theorem]{Proposition}
\newtheorem{definition}[theorem]{Definition}
\newcommand*{\textcal}[1]{%
      \textit{\fontfamily{qzc}\selectfont#1}%
}
\newcommand{\partialfun}{\ensuremath{\rightharpoonup}}
\newcommand{\many}[1]{\ensuremath{\overline{#1}}}
\newcommand{\cod}{\mathrm{cod}}
\newcommand{\List}{\ensuremath{\text{\textbf{List}}}}
\newcommand{\pset}{\ensuremath{\mathcal P}}
\newcommand{\psetfin}{\ensuremath{\pset_{\mathname{fin}}}}
\newcommand{\msetfin}{\ensuremath{\textbf {Mset}_{\mathname{fin}}}}
\newcommand{\tfun}{\ensuremath{\rightarrow}}
\newcommand{\tprod}{\ensuremath{\times}}
\newcommand{\compose}{\ensuremath{;}}
\newcommand{\hastype}{\ensuremath{\app : \app}}
\newcommand{\portnames}{\ensuremath{\mathbb A}}
\newcommand{\pointers}{\ensuremath{\mathbb P}}
\newcommand{\ports}{\ensuremath{\text{\textcal{Port}}}}
\newcommand{\at}[1]{\ensuremath{^{(\const{#1})}}}
\newcommand{\lO}{\const O}
\newcommand{\lP}{\const P}
\newcommand{\const}[1]{\ensuremath{\keyword{#1}}}
\newcommand{\dual}{\ensuremath{^{*}}}
\newcommand{\port}[2]{\tuple{#1,#2}}
\newcommand{\supp}{\mathname{sup}}
\newcommand{\tensor}{\ensuremath{\otimes}}
\newcommand{\fun}{\ensuremath{\Rightarrow}}
\newcommand{\interfaces}{\ensuremath{\mathcal I}}
\newcommand{\pneq}{\ensuremath{=_\portnames}}
\newcommand{\pnptreq}{\ensuremath{=_{\portnames\pointers}}}
\newcommand{\codes}{\ensuremath{\mathcal C}}
\newcommand{\instrs}{\ensuremath{\text{\textcal{Instr}}}}
\newcommand{\datas}{\ensuremath{\mathcal D}}
\newcommand{\integers}{\ensuremath{\mathbb Z}}
\newcommand{\engine}[2]{\tuple{#1,#2}}
\newcommand{\engines}{\ensuremath{\mathcal E}}
\newcommand{\engineconf}[2]{\tuple{#1,#2}}
\newcommand{\engineconfs}{\ensuremath{\mathcal K}}
\newcommand{\enginestep}[3]{\ensuremath{\xrightarrow[#1,#2]{#3}}}
\newcommand{\hmap}[1]{\ensuremath{\mapsto \tuple{#1}}}
\newcommand{\initial}{\mathname{initial}}
\newcommand{\singleton}{\mathname{singleton}}
\newcommand{\instr}[1]{\ensuremath{\text{\texttt{#1}}}}
\newcommand{\seq}{\ensuremath{; \app}}
\newcommand{\ileap}{\instr{spark} \app}
\newcommand{\iend}{\instr{end}}
\newcommand{\icopyi}{\instr{cci}}
\newcommand{\icopyq}{\instr{ccq}}
\newcommand{\icopya}{\instr{cca}}
\newcommand{\iexti}{\instr{exi}}
\newcommand{\iextq}{\instr{exq}}
\newcommand{\iflip}{\instr{flip}}
\newcommand{\iset}[1]{\ensuremath{\instr{set} \app #1}}
\newcommand{\nialloc}{\instr{new}}
\newcommand{\niget}{\instr{get}}
\newcommand{\nifree}{\instr{free}}
\newcommand{\niset}{\instr{set}}
\newcommand{\niflip}{\instr{flip}}
\newcommand{\regupd}[2]{\ensuremath{[#1 := #2]}}
\newcommand{\niifzero}{\instr{ifzero}}
\newcommand{\nispark}{\instr{spark}}
\newcommand{\niupdate}{\instr{update}}
\newcommand{\rmsg}[2]{\tuple{#1,#2}}
\newcommand{\thread}[2]{\tuple{#1,#2}}
\newcommand{\msgf}{\mathname{msg}}
\newcommand{\regsf}{\mathname{regs}}
\newcommand{\msgs}{\ensuremath{\mathcal M}}
\newcommand{\msg}[4]{\tuple{#1,#2,#3,#4}}
\newcommand{\lmsg}[5]{\tuple{#1,\msg{#2}{#3}{#4}{#5}}}
\newcommand{\net}[3]{\tuple{#1,#2,#3}}
\newcommand{\nets}{\ensuremath{\mathcal S}}
\newcommand{\netconf}[2]{\tuple{#1,#2}}
\newcommand{\netconfs}{\ensuremath{\mathcal N}}
\newcommand{\netstep}[1]{\ensuremath{\xrightarrow{#1}}} 
\newcommand{\conn}{\ensuremath{\chi}}
\newcommand{\rthread}[2]{\tuple{#1,#2}}
\newcommand{\denot}[1]{\ensuremath{\llbracket #1 \rrbracket}}
\newcommand{\traces}{\mathname{traces}}
\newcommand{\ndel}{\ensuremath{{-}}}
\newcommand{\del}{\ensuremath{\downharpoonright}}
\newcommand{\emptytrace}{\ensuremath{\epsilon}}
\newcommand{\snoc}{\ensuremath{{::}}}
\newcommand{\netcat}{\keyword{HRAMnet}}
\newcommand{\id}[1]{\ensuremath{\mathname{id}_{#1}}}
\newcommand{\assocempty}{\ensuremath{\alpha}}
\newcommand{\assoc}[3]{\ensuremath{\assocempty_{#1,#2,#3}}}
\newcommand{\lunitor}[1]{\ensuremath{\lambda_{#1}}}
\newcommand{\runitor}[1]{\ensuremath{\rho_{#1}}}
\newcommand{\commutempty}{\ensuremath{\gamma}}
\newcommand{\commut}[2]{\ensuremath{\commutempty_{#1,#2}}}
\newcommand{\unit}[1]{\ensuremath{\eta_{#1}}}
\newcommand{\counit}[1]{\ensuremath{\varepsilon_{#1}}}
\newcommand{\mf}[1]{\ensuremath{\mathfrak{#1}}}
\newcommand{\ginterface}[4]{\tuple{#1,#2,#3,#4}}
\newcommand{\ginterfaces}{\interfaces_{\mf G}}
\newcommand{\qst}[1]{\ensuremath{\mathname{qst}_{\mf{#1}}}}
\newcommand{\ans}[1]{\ensuremath{\mathname{ans}_{\mf{#1}}}}
\newcommand{\ini}[1]{\ensuremath{\mathname{ini}_{\mf{#1}}}}
\newcommand{\opp}[1]{\ensuremath{\mathname{opp}_{\mf{#1}}}}
\newcommand{\prop}[1]{\ensuremath{\mathname{prop}_{\mf{#1}}}}
\newcommand{\enables}[1]{\ensuremath{\vdash_{\mf{#1}}}}
\newcommand{\com}{\const{com}}
\newcommand{\cond}[1]{^{\mathname{#1}}}
\newcommand{\polarities}{\ensuremath{L}}
\newcommand{\sat}{\mathname{sat}}
\newcommand{\gnet}[2]{\tuple{#1,#2}}
\newcommand{\enabled}{\mathname{enabled}}
\newcommand{\cp}{\mathname{cp}}
\newcommand{\fp}{\mathname{fp}}
\newcommand{\ptrs}{\mathname{ptrs}}
\newcommand{\gcompose}{\ensuremath{;_{\mf G}}}
\newcommand{\hereditary}{\ensuremath{\upharpoonright}}
\newcommand{\swapping}{\ensuremath{\preccurlyeq}}
\newcommand{\gcomp}{\ensuremath{K}}
\newcommand{\copycat}{C\!\!\!C}
\newcommand{\Copycat}{c\!c}
\newcommand{\ready}{\mathname{ready}}
\newcommand{\geval}{\mathname{eval}}
\newcommand{\sink}{\ensuremath{!}}
\newcommand{\expp}{\ensuremath{\mathtt{exp}}}
\newcommand{\comm}{\ensuremath{\mathtt{com}}}
\newcommand{\bijpi}{\tilde{\pi}}
\newcommand{\pilmsg}[5]{\ensuremath{
  \lmsg {#1}{\bijpi_\portnames(#2)}{\bijpi_\pointers(#3)}{\bijpi_\pointers(#4)}{#5}
  }}
\begin{document}
\title{Abstract machines for game semantics, revisited\footnote{An extended abstract
of this paper is due to appear in the
Twenty-Eighth Annual ACM/IEEE Symposium on Logic in Computer Science (LICS 2013), June 25-28, 2013, New Orleans, USA.}}

\author{Olle Fredriksson}
\author{Dan R. Ghica}
\affil{University of Birmingham, UK}
\maketitle
\begin{abstract}
We define new abstract machines for game semantics which
correspond to networks of conventional computers, and can be used as an
intermediate representation for compilation targeting distributed systems. This is
achieved in two steps. First we introduce the HRAM, a \emph{Heap and Register
Abstract Machine}, an abstraction of a conventional computer, which can be
structured into HRAM nets, an abstract point-to-point network model. HRAMs are
multi-threaded and subsume communication by tokens (\emph{cf.} IAM) or jumps.
Game Abstract Machines (GAM), are HRAMs with additional
structure at the interface level, but no special operational capabilities. We
show that GAMs cannot be naively composed, but composition must be mediated
using appropriate HRAM combinators. HRAMs are flexible enough to allow the
representation of game models for languages with state (non-innocent games) or
concurrency (non-alternating games). We illustrate the potential of this
technique by implementing a toy distributed compiler for ICA, a higher-order
programming language with shared state concurrency, thus significantly
extending our previous distributed PCF compiler. We show that compilation is
sound and memory-safe, i.e. no (distributed or local) garbage collection is
necessary.
\end{abstract}

\section{Introduction}

One of the most profound discoveries in theoretical computer science is the
fact that logical and computational phenomena can be subsumed by relatively
simple communication protocols. This understanding came independently from
Girard's work on the Geometry of Interaction (GOI)~\cite{girard1989geometry}
and Milner's work on process calculi~\cite{DBLP:conf/icalp/Milner90}, and
had a profound influence on the subsequent development of game semantics
(see~\cite{DBLP:conf/lics/Ghica09} for a historical survey). Of the three, game
semantics proved to be particularly effective at producing precise mathematical
models for a large variety of programming languages, solving a long-standing
open problem concerning higher-order sequential
computation~\cite{DBLP:journals/iandc/AbramskyJM00,DBLP:journals/iandc/HylandO00}.

One of the most appealing features of game semantics is that it has a dual
denotational and operational character. By \emph{denotational} we mean that it
is compositionally defined on the syntax and by \emph{operational} we mean that
it can be effectively presented and can form a basis for
compilation~\cite{DBLP:conf/memocode/Ghica11}. This feature was apparent from
the earliest presentations of game semantics~\cite{DBLP:conf/fpca/HylandO95}
and is not very surprising, although the operational aspects are less
perspicuous than in interpretations based on process calculi or GOI, which
quickly found applications in compiler~\cite{DBLP:conf/popl/Mackie95} or
interpreter~\cite{DBLP:journals/jfp/Benton05} development and optimisation. 

An important development, which provided essential inspiration for this work,
was the introduction of the \emph{Pointer Abstract Machine} (PAM) and the
\emph{Interaction Abstract Machine} (IAM), which sought to fully restore the
operational intuitions of game semantics~\cite{DBLP:conf/lics/DanosHR96} by
relating them to two kinds of abstract machines, one based on term rewriting
(PAM) and one based on networks of automata (IAM) profoundly inspired by GOI. A
further optimisation of IAM, the \emph{Jumping Abstract Machine} (JAM) was
introduced subsequently to avoid the overheads of the
IAM~\cite{DBLP:journals/tcs/DanosR99}.

\paragraph*{Contribution} In this paper we are developing the line of work on
the PAM, IAM, and JAM, in order to define new abstract machines which correspond more
closely to \emph{networks of conventional computers} and can be used as
an intermediate representation for compilation targeting distributed systems. This is
achieved in two steps. First we introduce the HRAM, a \emph{Heap and Register
Abstract Machine}, an abstraction of a conventional computer, which can be
structured into HRAM nets, an abstract point-to-point network model. HRAMs are
multi-threaded and subsume communication by tokens (cf. IAM) or jumps.
GAMs, \emph{Game Abstract Machines}, are
HRAMs with additional structure at the interface level, but no special
operational capabilities. We show that GAMs cannot be naively composed, but
composition must be mediated using appropriate HRAM combinators. Starting from a
formulation of game semantics in the nominal
model~\cite{DBLP:journals/entcs/GabbayG12} has two benefits. First, pointer
manipulation requires no encoding or decoding, as in integer-based
representations, but exploits the HRAM ability to create locally fresh
\emph{names}. Second, token size is constant as only names are passed around;
the computational history of a token is stored by the HRAM rather than passing
it around (cf. IAM). HRAMs are also flexible enough to allow the representation
of game models for languages with state (\emph{non-innocent} games) or
concurrency (\emph{non-alternating} games). We illustrate the potential of this
technique by implementing a compiler targeting distributed systems for ICA, a higher-order
programming language with shared state
concurrency~\cite{DBLP:journals/apal/GhicaM08}, thus significantly extending
our previous distributed PCF compiler~\cite{GhicaF12}. We show that compilation
is sound and memory-safe, i.e.\ no (distributed or local) garbage collection is
necessary.\footnote{Available from \url{http://veritygos.org/gams}.}

\paragraph*{Other related and relevant work} The operational intuitions of GOI
were originally confined to the sequential setting, but more recent work on
Ludics showed how they can be applied to
concurrency~\cite{DBLP:conf/lics/FaggianM05} through an abstract treatment not
immediately applicable to our needs. Whereas our work
takes the IAM/JAM as the starting point, developing abstract machines akin to
the PAM revealed interesting syntactic and operational connections between game
semantics and B\"{o}hm trees~\cite{DBLP:journals/corr/abs-0706-2544}. The
connection between game semantics, syntactic recursion schemes and automata
also had several interesting applications to verifying higher-order computation
(see e.g.~\cite{DBLP:conf/esop/Ong08}). Finally, the connection between game
semantics and operational semantics can be made more directly by eliding all
the semantic structure in the game and reducing them to a very simple
communication mechanism between a program and its environment, which is useful
in understanding hostile opponents and verifying security
properties~\cite{DBLP:journals/entcs/GhicaT12}.

\section{Simple nets}
In this section we introduce a class of basic abstract machines for
manipulating heap structures, which also have  primitives for communications
and control. They represent a natural intermediate stage for
compilation to machine language, and will be used as such in
Sec.~\ref{sec:disc}. The machines can naturally be organised into
communication networks which give an abstract representation of distributed
systems. We find it formally convenient to work in a nominal model in order to
avoid the difficulties caused by concrete encoding of game structures,
especially \emph{justification pointers}, as integers. We assume a certain
familiarity from the reader with basic nominal concepts. The interested reader
is referred to the literature (\cite{DBLP:conf/lics/GabbayP99} is a starting
point).

\subsection{Heap and register abstract machines (HRAM)}
We fix a set of \emph{port names} ($\portnames$) and a set of \emph{pointer
names} ($\pointers$) as disjoint sets of atoms. Let $L \defeq \{\lO, \lP \}$ be
the set of polarities of a port. To maintain an analogy with game semantics from
the beginning, port names correspond to game-semantic \emph{moves} and
input/output polarities correspond to opponent/proponent. A \emph{port
structure} is a tuple $\port l a \in \ports = L \tprod \portnames$.
An \emph{interface} $A \in \psetfin(\ports)$ is a set of port structures such that
all port names are unique, i.e. $\forall p = \port la, p'=\port {l'}{a'}
\in A$, if $a=a'$ then $p=p'$. Let the support of an interface be $\supp(A)
\defeq \{a \mid \port la \in A\}$, its set of port names.

The \emph{tensor} of two interfaces is defined as  $A \tensor B \defeq A \cup
B$, where $\supp(A) \cap \supp(B) = \emptyset$. The dual of an interface is
defined as $A \dual \defeq \{ p \dual \mid p \in A\}$ where $\port la \dual
\defeq \port {l\dual}a$, $\lO \dual \defeq \lP$ and $\lP \dual \defeq \lO$.
An arrow interface is defined in terms of tensor and dual, $A \fun B \defeq A
\dual \tensor B$.

We introduce notation for opponent ports of an interface  $A \at O \defeq
\{\port \lO a \in A\}$. The player ports of an interface $A \at P$ is defined
analogously. The set of all interfaces is denoted by~$\interfaces$. We say that
two interfaces \emph{have the same shape} if they are equivariant, i.e.\ there
is a permutation $\pi:\portnames\rightarrow\portnames$ such that $\{\pi \cdot
p  \mid p \in A_1 \} = A_2$, and we write $\pi\vdash A_1 \pneq A_2$, where $\pi
\cdot \port la \defeq \port l{\pi(a)}$ is the permutation action of $\pi$. We
may only write $A_1 \pneq A_2$ if $\pi$ is obvious or unimportant. 

Let the set of data $\datas$ be $\emptyset \in \mathbbm 1$, pointer names $a
\in \pointers$ or integers $n\in\integers$. Let the set of instructions
$\instrs$ be as below, where $i,j,k\in\mathbb N + \mathbbm 1$ (which
permits ignoring results and allocating ``null'' data).
\begin{itemize}
  \item
    $i \leftarrow \nialloc\ j,k$ allocates a new pointer in the heap and
    populates it with the values stored in registers $j$ and $k$, storing the
    pointer in register $i$.
  \item
    $i,j \leftarrow \niget\ k$ reads the tuple pointed at by the name in the
    register $k$ and stores it in registers $i$ and $j$.
  \item
    $\niupdate\ i,j$ writes the value stored in register $j$ to the
    second component of the value pointed to by the name in register $i$.
  \item
    $\nifree\ i$ releases the memory pointed to by the name in the register $i$
    and resets the register.
  \item
    $\niflip\ i,j$ flips the values of registers $i$ and $j$.
  \item
    $i \leftarrow \iset\ j$ sets register $i$ to value $j$.
\end{itemize}

Let code fragments $\codes$ be $\codes ::= \instrs \seq \codes \mid \niifzero\
{\mathbb N}\ \codes\ \codes \mid \ileap a \mid \iend$. The port names occurring
in the code fragment are $\supp \in \codes \tfun \psetfin(\portnames)$, defined
in the obvious way (only the $\nispark\ a$ instruction can contribute names).
An $\niifzero\ i$ instruction will branch according to the value stored in
register~$i$. A $\nispark\ a$ will either jump to $a$ or send a message to $a$,
depending on whether $a$ is a local port or not.

An \emph{engine} is an interface together with a port map, $E=\engine AP \in
\interfaces \tprod (\supp(A \at O) \tfun \codes)$ such that for each code
fragment $c \in \cod\ P$ and each port name $a \in \supp(c)$, $\port \lP a \in
A$, meaning that ports that are ``sparked'' must be output ports of the
interface $A$. The set of all engines is $\engines$.

Engines have threads and shared heap. All threads have a fixed number of
registers $r$, which is a global constant. For the language ICA we will need
four registers, but languages with more kinds of pointers in the game model,
e.g.\ control pointers~\cite{DBLP:conf/esop/Laird02}, may need and use more
registers.

A \emph{thread} is a tuple $t=(c,\overline d)\in T = \codes \tprod
\datas^r$: a code fragment and an $r$-tuple of data register values.

An \emph{engine configuration} is a tuple $k = \engineconf {\many t}h \in
\engineconfs = \psetfin(T) \tprod (\pointers \partialfun \pointers \tprod \datas)$:
a set of threads and a heap that maps pointer names to pairs of pointer names
and data items.

A pair consisting of an engine configuration and an engine will be written
using the notation $k \hastype E \in \engineconfs \tprod \engines$.  Define the
function $\initial \in \engines \tfun \engineconfs \tprod \engines$ as
$\initial(E) \defeq \engineconf \emptyset \emptyset \hastype E$ for an engine
$E$. This function pairs the engine up with an engine configuration consisting
of no threads and an empty heap.

HRAMs communicate using \emph{messages}, each consisting of a port name
and a vector of data items of size $r_m$: $m = (x,\overline d)\in \msgs =
\portnames \tprod \datas^{r_m}$.  The constant $r_m$ specifies the size of the
messages in the network, and has to fulfil $r_m \leq r$. For a set $X \subseteq
\portnames$, define $\msgs_X = X \tprod  \datas^{r_m}$, the subset of $\msgs$
whose port names are limited to those of $X$.

We specify the operational semantics of an engine $E = \engine AP$ as a
transition relation $- \enginestep E \conn - - \subseteq \engineconfs \tprod
(\{\bullet\} \cup (L \tprod \msgs)) \tprod \engineconfs$.  The relation is
either labelled with $\bullet$ --- a silent transition --- or a polarised
message --- an observable transition. The messages will be constructed simply
from the first $r_m$ registers of a thread, meaning that on certain actions
part of the register contents become observable in the transition relation.

To aid readability, we use the following shorthands:
\begin{itemize}
  \item
    $n \enginestep E \conn {} {n'}$ means $n \enginestep E \conn {\bullet}
    {n'}$ (silent transitions).
  \item
    $n \enginestep E \conn {\rmsg a{\many d}} {n'}$ means $n \enginestep E
    \conn {\tuple {\lP, \rmsg a{\many d}}} {n'}$ (output transitions).
  \item
    $n \enginestep E \conn {{\rmsg a{\many d}}^\bullet} {n'}$ means $n
    \enginestep E \conn {\tuple {\lO, \rmsg a{\many d}}} {n'}$ (input
    transitions).
\end{itemize}

We use the notation $\many d$ for $n$-tuples of registers and then $d_i$
for the (zero-based) $i$-th component of $\many d$, and $d_\emptyset \defeq \emptyset$. For
updating a register, we use $\many d \regupd i d \defeq (d_0, \cdots, d_{i-1},
d, d_{i+1}, \cdots, d_{n-1})$ and $\many d \regupd \emptyset d \defeq \many d$.

To construct messages from the register contents of a thread, we use the
functions $\msgf \in \datas^{r} \tfun \datas^{r_m}$, which takes the
first $r_m$ components of its input, and $\regsf \in \datas^{r_m} \tfun
\datas^{r}$, which pads its input with $\emptyset$ at the end (i.e.
$\regsf(\many d) \defeq \tuple{d_0,\ldots,d_{r_m-1},\emptyset,\ldots}$).

The network connectivity is specified by the function $\conn$, which will be
described in more detail in the next sub-section. For a port name $a$,
$\conn(a)$ can be read as ``the port that $a$ is connected to''.
The full operational rules for HRAMs are given in Fig.~\ref{fig:hramop}.
The interesting rule is that for $\nispark$ because it depends on whether the
port where the next computation is ``sparked'' is local or not. If the port is
local then $\nispark$ makes a jump, and if the port is non-local then it
produces an output token and the current thread of execution is terminated,
similar to the IAM.
\begin{figure} \label{fig:hramop}
\small
\begin{framed}
\[
  \engineconf {\rthread {i \leftarrow \nialloc\ j,k \seq C}{\many d}
               \cup \many t
               }
               h
  \enginestep E \conn {}
  \engineconf {\rthread C{\many d \regupd i p}
               \cup \many t
               }
               {h \cup \{p \hmap {d_j,d_k}\}}
\text{ if $p \notin \supp(h)$}
\]

\[
  \begin{split}
  \engineconf {\rthread {i,j \leftarrow \niget\ k \seq C}{\many d}
               \cup \many t
               }
               {h \cup \{d_k \hmap {d, d'} \}}
  \enginestep E \conn {} \\
  \engineconf {\rthread C{\many d \regupd i d \regupd j {d'}}
               \cup \many t
               }
               {h \cup \{d_k \hmap {d,d'}\}}
  \end{split}
\]

\[
  \begin{split}
  \engineconf {\rthread {\niupdate\ i,j \seq C}{\many d}
               \cup \many t
               }
               {h \cup \{d_i \hmap {d, d'} \}}
  \enginestep E \conn {} \\
  \engineconf {\rthread C{\many d \regupd i d \regupd j {d'}}
               \cup \many t
               }
               {h \cup \{d_i \hmap {d,d_j}\}}
  \end{split}
\]

\[
  \engineconf {\rthread {\nifree\ i \seq C}{\many d}
               \cup \many t
               }
               {h \cup \{d_i \hmap {d, d'} \}}
  \enginestep E \conn {}
  \engineconf {\rthread C{\many d \regupd i \emptyset}
               \cup \many t
               }
               h
\]

\[
  \engineconf {\rthread {\niflip\ i,j \seq C}{\many d}
               \cup \many t
               }
               h
  \enginestep E \conn {}
  \engineconf {\rthread C{\many d \regupd i {d_j} \regupd j {d_j}}
               \cup \many t
               }
               h
\]

\[
  \engineconf {\rthread {i \leftarrow \niset\ j \seq C}{\many d}
               \cup \many t
               }
               h
  \enginestep E \conn {}
  \engineconf {\rthread C{\many d \regupd i j}
               \cup \many t
               }
               h
\]

\[
  \engineconf {\rthread {\niifzero\ i\ {c_1}\ {c_2} \seq C}{\many d \regupd i 0}
               \cup \many t
               }
               h
  \enginestep E \conn {}
  \engineconf {\rthread {c_1}{\many d \regupd i \emptyset}
               \cup \many t
               }
               h
\]

\[
  \engineconf {\rthread {\niifzero\ i\ {c_1}\ {c_2} \seq C}{\many d \regupd i {n+1}}
               \cup \many t
               }
               h
  \enginestep E \conn {}
  \engineconf {\rthread {c_2}{\many d \regupd i \emptyset}
               \cup \many t
               }
               h
\]

\[
  \engineconf {\rthread {\nispark\ a}{\many d}
               \cup \many t
               }
               h
  \enginestep E \conn {\rmsg {\conn(a)}{\msgf(\many d)}}
  \engineconf {\many t}
               h
\text{ if $\port \lO {\conn(a)} \notin A$}
\]

\[
  \engineconf {\rthread {\nispark\ a}{\many d}
               \cup \many t
               }
               h
  \enginestep E \conn {}
  \engineconf {\rthread {P(\conn(a))}{\regsf(\msgf(\many d))}
               \cup \many t
               }
               h
\text{ if $\port \lO {\conn(a)} \in A$}
\]

\[
  \engineconf {\many t}
               h
  \enginestep E \conn {{\rmsg a{\many d}}^\bullet}
  \engineconf {\rthread {P(a)}{\regsf(\many d)}
               \cup \many t
               }
               h
\text{ if $\port \lO a \in A$}
\]

\[
  \engineconf {\rthread {\iend}{\many d}
               \cup \many t
               }
               h
  \enginestep E \conn {}
  \engineconf {\many t}
               h
\]
\end{framed}
\caption{Operational semantics of HRAMs}
\end{figure}

\begin{figure}
\centering

  \begin{tikzpicture}
    \node[scale=0.9]{
      \setlength{\unitlength}{4144sp}%
\begingroup\makeatletter\ifx\SetFigFont\undefined%
\gdef\SetFigFont#1#2#3#4#5{%
  \reset@font\fontsize{#1}{#2pt}%
  \fontfamily{#3}\fontseries{#4}\fontshape{#5}%
  \selectfont}%
\fi\endgroup%
\begin{picture}(2364,1644)(169,-1873)
\thinlines
{\color[rgb]{0,0,0}\put(1441,-511){\line( 1, 0){1080}}
}%
{\color[rgb]{0,0,0}\put(1441,-1051){\line( 1, 0){1080}}
}%
{\color[rgb]{0,0,0}\put(1441,-1321){\framebox(1080,1080){}}
}%
{\color[rgb]{0,0,0}\multiput(1441,-781)(58.37838,0.00000){19}{\line( 1, 0){ 29.189}}
}%
{\color[rgb]{0,0,0}\put(1981,-511){\line( 0,-1){540}}
}%
{\color[rgb]{0,0,0}\multiput(2251,-511)(0.00000,-60.00000){5}{\line( 0,-1){ 30.000}}
}%
{\color[rgb]{0,0,0}\multiput(1711,-511)(0.00000,-60.00000){5}{\line( 0,-1){ 30.000}}
}%
{\color[rgb]{0,0,0}\put(991,-1321){\line( 0,-1){ 90}}
\put(991,-1411){\line( 1, 0){720}}
\put(1711,-1411){\line( 0, 1){ 90}}
}%
{\color[rgb]{0,0,0}\put(451,-1321){\line( 0,-1){180}}
}%
{\color[rgb]{0,0,0}\put(2251,-1321){\line( 0,-1){180}}
}%
{\color[rgb]{0,0,0}\put(181,-1771){\framebox(2340,270){}}
}%
{\color[rgb]{0,0,0}\multiput(451,-1501)(0.00000,-60.00000){5}{\line( 0,-1){ 30.000}}
}%
{\color[rgb]{0,0,0}\multiput(2251,-1501)(0.00000,-60.00000){5}{\line( 0,-1){ 30.000}}
}%
{\color[rgb]{0,0,0}\put(451,-1771){\line( 0,-1){ 90}}
}%
{\color[rgb]{0,0,0}\put(2251,-1771){\line( 0,-1){ 90}}
}%
{\color[rgb]{0,0,0}\put(181,-511){\line( 1, 0){1080}}
}%
{\color[rgb]{0,0,0}\put(181,-1051){\line( 1, 0){1080}}
}%
{\color[rgb]{0,0,0}\put(181,-1321){\framebox(1080,1080){}}
}%
{\color[rgb]{0,0,0}\multiput(181,-781)(58.37838,0.00000){19}{\line( 1, 0){ 29.189}}
}%
{\color[rgb]{0,0,0}\put(721,-511){\line( 0,-1){540}}
}%
{\color[rgb]{0,0,0}\multiput(991,-511)(0.00000,-60.00000){5}{\line( 0,-1){ 30.000}}
}%
{\color[rgb]{0,0,0}\multiput(451,-511)(0.00000,-60.00000){5}{\line( 0,-1){ 30.000}}
}%
\put(1981,-421){\makebox(0,0)[b]{\smash{{\SetFigFont{10}{12.0}{\familydefault}{\mddefault}{\updefault}{\color[rgb]{0,0,0}$h'$}%
}}}}
\put(1576,-691){\makebox(0,0)[b]{\smash{{\SetFigFont{10}{12.0}{\familydefault}{\mddefault}{\updefault}{\color[rgb]{0,0,0}$d_1'$}%
}}}}
\put(1846,-691){\makebox(0,0)[b]{\smash{{\SetFigFont{10}{12.0}{\familydefault}{\mddefault}{\updefault}{\color[rgb]{0,0,0}$d_2'$}%
}}}}
\put(2116,-691){\makebox(0,0)[b]{\smash{{\SetFigFont{10}{12.0}{\familydefault}{\mddefault}{\updefault}{\color[rgb]{0,0,0}$d_3'$}%
}}}}
\put(2386,-691){\makebox(0,0)[b]{\smash{{\SetFigFont{10}{12.0}{\familydefault}{\mddefault}{\updefault}{\color[rgb]{0,0,0}$d_4'$}%
}}}}
\put(1711,-961){\makebox(0,0)[b]{\smash{{\SetFigFont{10}{12.0}{\familydefault}{\mddefault}{\updefault}{\color[rgb]{0,0,0}$t_1'$}%
}}}}
\put(2251,-961){\makebox(0,0)[b]{\smash{{\SetFigFont{10}{12.0}{\familydefault}{\mddefault}{\updefault}{\color[rgb]{0,0,0}$t_2'$}%
}}}}
\put(1351,-1681){\makebox(0,0)[b]{\smash{{\SetFigFont{10}{12.0}{\familydefault}{\mddefault}{\updefault}{\color[rgb]{0,0,0}$B$}%
}}}}
\put(1981,-1231){\makebox(0,0)[b]{\smash{{\SetFigFont{10}{12.0}{\familydefault}{\mddefault}{\updefault}{\color[rgb]{0,0,0}$A'$}%
}}}}
\put(316,-691){\makebox(0,0)[b]{\smash{{\SetFigFont{10}{12.0}{\familydefault}{\mddefault}{\updefault}{\color[rgb]{0,0,0}$d_1$}%
}}}}
\put(586,-691){\makebox(0,0)[b]{\smash{{\SetFigFont{10}{12.0}{\familydefault}{\mddefault}{\updefault}{\color[rgb]{0,0,0}$d_2$}%
}}}}
\put(856,-691){\makebox(0,0)[b]{\smash{{\SetFigFont{10}{12.0}{\familydefault}{\mddefault}{\updefault}{\color[rgb]{0,0,0}$d_3$}%
}}}}
\put(1126,-691){\makebox(0,0)[b]{\smash{{\SetFigFont{10}{12.0}{\familydefault}{\mddefault}{\updefault}{\color[rgb]{0,0,0}$d_4$}%
}}}}
\put(721,-421){\makebox(0,0)[b]{\smash{{\SetFigFont{10}{12.0}{\familydefault}{\mddefault}{\updefault}{\color[rgb]{0,0,0}$h$}%
}}}}
\put(451,-961){\makebox(0,0)[b]{\smash{{\SetFigFont{10}{12.0}{\familydefault}{\mddefault}{\updefault}{\color[rgb]{0,0,0}$t_1$}%
}}}}
\put(991,-961){\makebox(0,0)[b]{\smash{{\SetFigFont{10}{12.0}{\familydefault}{\mddefault}{\updefault}{\color[rgb]{0,0,0}$t_2$}%
}}}}
\put(721,-1231){\makebox(0,0)[b]{\smash{{\SetFigFont{10}{12.0}{\familydefault}{\mddefault}{\updefault}{\color[rgb]{0,0,0}$A$}%
}}}}
\end{picture}%
    };
  \end{tikzpicture}

\caption{Example HRAM net}
\label{fig:exhramnet}
\end{figure}

\subsection{HRAM nets}

A well-formed \emph{HRAM net} $S\in \nets$ is a set of engines, a function
over port names specifying what ports are connected, and an external interface,
$ S = \net {\many E}\conn A$, where $E \in \engines, A\in \interfaces$, and
$\conn$ is a bijection between the net's output and input port names. Specifically, $\conn$ has to
be in $ \supp(A \at O \tensor A_{\many E}\at P) \tfun
        \supp(A \at P \tensor A_{\many E}\at O)$,
where $A_{\many E} = \tensor\{A \mid \engine A P \in \many E\}$.

Fig.~\ref{fig:exhramnet} shows a diagram of an HRAM net with two HRAMs
(interfaces $A, A'$, two ports each), each with two running threads ($t_i,
t'_i$) with local registers ($d_i, d'_i$) and shared heaps ($h, h'$). Two of
the HRAM ports are connected and two are part of the global interface $B$.

The function $\conn$ gives the net connectivity.  It being in $\supp(A \at O
\tensor A_{\many E}\at P) \tfun \supp(A \at P \tensor A_{\many E}\at O)$ means
that it maps each input port name of the net's interface and output port name
of the net's engines to either an output port name of the net's interface or an
input port name of one of its engines.  Since it is a bijection, each port name
(and thus port) is connected to exactly one other port name, so the abstract
network model we are using is point-to-point.

For an engine $e = \engine A P$, we define a \emph{singleton} net with $e$ as
its sole engine as $ \singleton(e) = \net {\{e\}}\conn {A'}$,
where $A'$ is an interface such that $\conn \vdash A \pneq A'$ and $\conn$ is
given by:
\begin{align*}
    \conn(a) & \defeq \pi(a)      \text{ if } a \in \supp(A\at P)
 \\ \conn(a) & \defeq \pi^{-1}(a) \text{ if } a \in \supp({A'}\at O)
\end{align*}

A \emph{net configuration} is a set of tuples of engine configurations and engines
and a multiset of pending messages: $n=\netconf{\many {e \hastype E}}{\many m} \in
\netconfs = \psetfin(\engineconfs \tprod \engines) \tprod \msetfin(\msgs)$.
Define the function $\initial \in \nets \tfun \netconfs$ as $\initial\net
{\many E}\conn A \defeq \netconf {\{\initial(E) \mid E \in \many E\}} \emptyset$,
a net configuration with only initial engines and no pending messages.

The operational semantics of a net $S=\net {\many E} \conn A$ is specified as a
transition relation $- \netstep - - \subseteq \netconfs \tprod (\{\bullet\}
\cup (L \tprod \msgs_{\supp(A)})) \tprod \netconfs$. The semantics is given in
the style of the Chemical Abstract Machine
(CHAM)~\cite{DBLP:conf/popl/BerryB90}, where HRAMs are ``molecules'' and the
pending messages of the HRAM net is the ``solution''. HRAM inputs (outputs) are
to (from) the set of pending messages. Silent transitions of any HRAM are
silent transitions of the net. The rules are given in Fig.~\ref{fig:netop}.

\begin{figure} \label{fig:netop}
\begin{framed}

\begin{prooftree}
\AxiomC{$e \enginestep E \conn {} e'$}
\UnaryInfC{$
  \netconf {e \hastype E \cup \many {e \hastype E}} {\many m}
    \netstep{}
  \netconf {e' \hastype E \cup \many {e \hastype E}} {\many m}
$}
\end{prooftree}

\begin{prooftree}
\AxiomC{$e \enginestep E \conn {m} e'$}
\UnaryInfC{$
  \netconf {e \hastype E \cup \many {e \hastype E}} {\many m}
    \netstep{}
  \netconf {e' \hastype E \cup \many {e \hastype E}} {\{m\} \uplus \many m}
  $}
\end{prooftree}

\begin{prooftree}
\AxiomC{$e \enginestep E \conn {m^\bullet} e'$}
\UnaryInfC{$
\netconf {e \hastype E \cup \many {e \hastype E}} {\{m\} \uplus \many m}
    \netstep{}
  \netconf {e' \hastype E \cup \many {e \hastype E}} {\many m}
  $}
\end{prooftree}

\begin{prooftree}
\AxiomC{$\port \lP a \in A$}
\UnaryInfC{$
  \netconf {\many {e \hastype E}} {\{\rmsg a{\many d}\} \uplus \many m}
    \netstep{\rmsg a{\many d}}
  \netconf {\many {e \hastype E}} {\many m}
  $}
\end{prooftree}

\begin{prooftree}
\AxiomC{$\port \lO a \in A$}
\UnaryInfC{$
  \netconf {\many {e \hastype E}} {\many m}
    \netstep{{\rmsg a{\many d}}^\bullet}
  \netconf {\many {e \hastype E}} {\{\rmsg {\conn(a)}{\many d}\} \uplus \many m}
$}
\end{prooftree}

\end{framed}
\caption{Operational semantics of HRAM nets}
\end{figure}

\subsection{Semantics of HRAM nets}
We define $\List[A]$ for a set $A$ to be finite sequences of elements from $A$,
and use $s \snoc s'$ for concatenation.  A \emph{trace} for a net $\net {\many
E}\conn A$ is a \emph{finite sequence} of messages with polarity: $s \in
\List[L \tprod \msgs_{\supp(A)}]$. Write $\alpha \in L \tprod \msgs_{\supp(A)}$
for single polarised messages. We use the same notational convention as before
to identify inputs ($-^\bullet$). 

For a trace $s = \alpha_1 \snoc \alpha_2\snoc  \cdots\snoc  \alpha_n $, define $\netstep{s}$
to be the following composition of relations on net configurations:
$
  \netstep{\alpha_1}\netstep{}^*
  \netstep{\alpha_2}\netstep{}^*
  \cdots \netstep{\alpha_n}
$,
where $\netstep{}^*$ is the reflexive transitive closure of
$\netstep{}$, i.e.\ any number of silent steps are
allowed in between those that are observable.

Write $\traces_A$ for the set $\List[L \tprod \msgs_{\supp(A)}]$.  The
denotation $\denot S \subseteq \traces_A$ of a net $S = \net {\many E}\conn A$
is the set of traces of observable transitions reachable
from the initial net configuration $\initial(S)$ using the transition relation:
\[ \denot S \defeq \{s \in \traces_A \mid \exists n. \initial(S) \netstep{s}
n\} \] 
The denotation of a net includes the empty trace and is
prefix-closed by construction.

As with interfaces, we are not interested in the actual port names occurring
in a trace, so we define \emph{equivariance} for sets of traces. Let $S_1
\subseteq \traces_{A_1}$ and $S_2 \subseteq \traces_{A_2}$ for $A_1,A_2 \in
\interfaces$. $S_1 \pneq S_2$ if and only if there is a permutation $\pi \in
\portnames \tfun
\portnames$ such that $\{\pi \cdot s \mid s \in S_1\} = S_2$, where
$\pi \cdot \emptytrace \defeq \emptytrace$ and
$\pi \cdot (s \snoc \tuple {l, \rmsg a{\many d}}) \defeq (\pi \cdot s) \snoc \tuple {l, \rmsg
{\pi(x)}{\many d}}$.

Define the \emph{deletion} operation $s \ndel A$ which removes from a trace all
elements $\tuple{l, \rmsg x{\many d}}$ if $ x \in \supp(A)$ and define the
interleaving of sets of traces $S_1 \subseteq \traces_A$ and $S_2
\subseteq \traces_B$ as
$
  S_1 \tensor S_2 \defeq
    \{ s \mid s \in \traces_{A\tensor B}
      \wedge s \ndel B \in S_1
      \wedge s \ndel A \in S_2 \}
$. 

Define the composition of the sets of traces $S_1 \subseteq \traces_{A \fun B}$
and $S_2 \subseteq \traces_{B' \fun C}$ with $\pi \vdash B \pneq B'$ as the usual \emph{synchronisation and hiding} in trace semantics:
\[
  S_1 \compose S_2 \defeq
    \{s \ndel B \mid s \in \traces_{A \tensor B \tensor C}
      \wedge s \ndel C \in S_1
      \wedge \pi \cdot {s^{*B}} \ndel A \in S_2 \}
\]
(where $s^{*B}$ is $s$ where the messages from $B$ have reversed polarity.)

Two nets, $f = \net {\many E_f}{\conn_f}{I_f}$ and $g = \net {\many E_g}{\conn_g}{I_g}$
are said to be \emph{structurally equivalent} if they are graph-isomorphic, i.e.\
$\pi \cdot \many E_f = \many E_g$, $\pi \vdash I_f \pneq I_g$
and $\conn_g\circ\pi=\pi\circ\conn_f$. 

\begin{theorem} \label{thm:structisdenot}
If $S_1$ and $S_2$ are structurally equivalent nets, then $\denot{S_1} \pneq
\denot{S_2}$.
\end{theorem}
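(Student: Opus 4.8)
The plan is to show that the very permutation $\pi$ witnessing the structural equivalence of $S_1$ and $S_2$ also witnesses the equivariance $\denot{S_1} \pneq \denot{S_2}$. The essential observation is that the operational rules of Fig.~\ref{fig:hramop} and Fig.~\ref{fig:netop} never mention any particular atom: they are defined uniformly in the structure, hence equivariant. Since $\pi$ permutes port names only (it acts as the identity on pointer names and on data, and does not touch polarities), I first fix how $\pi$ acts on every object involved: on a code fragment it renames the port names occurring in $\nispark$ instructions; on a thread $(c,\many d)$ it acts on $c$ and leaves the registers $\many d$ fixed; on a heap $h$ it is the identity, since heaps store only pointer names and data; on an engine $\engine A P$ it gives $\engine{\pi \cdot A}{\pi \cdot P}$ with $(\pi \cdot P)(\pi(b)) = \pi \cdot (P(b))$; on a message $\rmsg x {\many d}$ it gives $\rmsg{\pi(x)}{\many d}$; and on a net configuration it acts componentwise. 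Because $\pi \cdot \many E_f = \many E_g$ and $\pi \vdash I_f \pneq I_g$, the map $\pi$ sends $\initial(S_1)$ to $\initial(S_2)$, as $\initial$ merely pairs each engine with the empty configuration and the empty message multiset.

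Next I would prove the key single-step equivariance lemma at both levels: $k \enginestep{E}{\conn_f}{\ell} k'$ iff $\pi \cdot k \enginestep{\pi \cdot E}{\conn_g}{\pi \cdot \ell} \pi \cdot k'$ for engines, and $n \netstep{\alpha} n'$ iff $\pi \cdot n \netstep{\pi \cdot \alpha} \pi \cdot n'$ for nets, by case analysis over the rules of Fig.~\ref{fig:hramop} and Fig.~\ref{fig:netop}. The heap rules (new, get, update, free), the register rules (flip, set) and the branch rule (ifzero) are immediate, since $\pi$ fixes heaps, registers and data. The only rules that consult port names are the two $\nispark$ rules and the input rule of Fig.~\ref{fig:hramop}, together with the two message-exchange rules of Fig.~\ref{fig:netop}; for all of these the commutation $\conn_g \circ \pi = \pi \circ \conn_f$ yields $\conn_g(\pi(a)) = \pi(\conn_f(a))$, so the looked-up port, the polarity side conditions $\port \lO{\conn(a)} \in A$, the resumed code $(\pi \cdot P)(\conn_g(\pi(a))) = \pi \cdot (P(\conn_f(a)))$, and the emitted or routed message $\rmsg{\conn_g(\pi(a))}{\msgf(\many d)} = \pi \cdot \rmsg{\conn_f(a)}{\msgf(\many d)}$ all match, while $\pi \vdash I_f \pneq I_g$ preserves the external side conditions $\port l a \in A$.

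With the step lemma in hand, I would lift it to whole traces by induction on the length of $s$, interleaving it with the reflexive-transitive closure $\netstep{}^*$ of the silent steps in the evident way, to obtain $\initial(S_1) \netstep{s} n$ iff $\initial(S_2) \netstep{\pi \cdot s} \pi \cdot n$. Taking images of reachable configurations then gives $\pi \cdot \denot{S_1} = \denot{S_2}$, that is, $\denot{S_1} \pneq \denot{S_2}$; by symmetry (applying $\pi^{-1}$, noting $S_1$ is structurally equivalent to $S_2$ via $\pi^{-1}$) a single containment in fact suffices.

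The step I expect to be most delicate is the nondeterministic allocation of fresh pointers. Because $\pi$ does not rename pointers, the data carried in observable messages must coincide \emph{exactly}, not merely up to permutation, in the two runs, so I cannot simply quotient by renaming of pointer names. The resolution is that the heap component is genuinely $\pi$-invariant and the side condition $p \notin \supp(h)$ depends only on the heap, so for any allocation chosen in a run of $S_1$ the same pointer $p$ is available and may be chosen in the matching run of $S_2$; this keeps heaps and registers --- hence all transmitted data --- literally identical along corresponding runs, which is what makes the ``iff'' in the step lemma go through constructively rather than only up to the equivariance of the relation.
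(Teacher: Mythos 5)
Your proposal is correct and is essentially the paper's own argument --- the paper simply states ``a straightforward induction on the trace length, in both directions,'' and what you have written is that induction carried out in full, with the equivariance of the single-step relations as the inductive workhorse and the commutation $\conn_g\circ\pi=\pi\circ\conn_f$ doing the real work in the \instr{spark} and message-routing cases. Your closing observation --- that $\pi$ fixes pointer names, so matching runs must choose literally the same fresh pointers, which is possible because the freshness side condition depends only on the ($\pi$-invariant) heap --- is exactly the point that makes the trace data line up on the nose rather than merely up to renaming, and is the only place where care is genuinely needed.
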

\begin{proof}
  A straightforward induction on the trace length, in both directions.
\end{proof}

\subsection{HRAM nets as a category}

In this sub-section we will show that HRAM nets form a symmetric compact-closed
category.  This establishes that our definitions are sensible and that HRAM
nets are equal up to topological isomorphisms. This result also shows that the
structure of HRAM nets is very loose.

The category, called \netcat, is defined as follows:

\begin{itemize}
  \item Objects are interfaces $A \in \psetfin(\ports)$ identified up to
    \portnames-equivalence.
  \item A morphism $f \hastype A \tfun B$ is a well-formed net on the form \net
    {\many E}\conn{A \fun B}, for some $\many E$ and $\conn$.
    We will identify morphisms that have the same denotation, i.e. if $\denot f
    \pneq \denot g$ then $f = g$ (in the category).
  \item The identity morphism for an object $A$ is
    \[\id A \defeq \net \emptyset \conn {A \fun A'}\]
    for an $A'$ such that $\pi \vdash A \pneq A'$
    and
    \begin{align*}
      \conn(a) & \defeq \pi(a) \text{ if } a \in \supp({A \dual} \at O)
      \\ \conn(a) & \defeq \pi^{-1}(a) \text{ if } a \in \supp({A'} \at O) \text{.}
    \end{align*}
    Note that $A \fun A' = A \dual \cup A'$.
    This means that the identity is pure connectivity.
  \item Composition of two morphisms $f = \net {\many E_f}{\conn_f}{A \fun B}
  \hastype A \tfun B$ and $g = \net {\many E_g}{\conn_g}{B' \fun C} \hastype B'
  \tfun C$, such that $\pi \vdash B \pneq B'$, is
  \[f \compose g = \net {\many E_f \cup \many E_g}{\conn_{f\compose g}}{A \fun C}
      \hastype A \tfun C
    \]
  where
    \begin{align*}
      \conn_{f\compose g}(a) & \defeq \conn_f(a)  \text{ if }
          a \in \supp({A\dual} \at O \tensor I_f \at P)
          \wedge \conn_f(a) \notin \supp(B)
     \\ \conn_{f\compose g}(a) & \defeq \conn_g(a)  \text{ if }
          a \in \supp(C \at O \tensor I_g \at P)
          \wedge \conn_g(a) \notin \supp(B')
     \\ \conn_{f\compose g}(a) & \defeq \conn_g(\pi(\conn_f(a)))  \text{ if }
          a \in \supp({A\dual} \at O \tensor I_f \at P)
          \wedge \conn_f(a) \in \supp(B)
     \\ \conn_{f\compose g}(a) & \defeq \conn_f(\pi^{-1}(\conn_g(a)))  \text{ if }
          a \in \supp(C \at O \tensor I_g \at P)
          \wedge \conn_g(a) \in \supp(B')
    \end{align*}
  and
    \begin{align*}
         I_f & \defeq \tensor \{ A \mid \engine A P \in \many E_f \}
      \\ I_g & \defeq \tensor \{ A \mid \engine A P \in \many E_g \}
        \text{.}
    \end{align*}
\end{itemize}

\paragraph*{Note}
  We identify HRAMs with interfaces of the same shape in the category, which
  means that our objects and morphisms are in reality unions of equivariant
  sets.  In defining the operations of our category we use
  \emph{representatives} for these sets, and require that the representatives
  are chosen such that their sets of port names are disjoint (but same-shaped
  when the operation calls for it). The composition operation may appear to be
  partial because of this requirement, but we can always find equivariant
  representatives that fulfil it.

  It is possible to find other representations of interfaces that do not rely
  on equivariance. For instance, an interface could simply be two natural
  numbers --- the number of input and output ports.  Another possibility would
  be to make the tensor the \emph{disjoint} union operator. Both of these
  would, however, lead to a lot of bureaucracy relating to injection functions
  to make sure that port connections are routed correctly.  Our formulation,
  while seemingly complex, leads to very little bureaucracy, and is easy to
  implement.

\begin{proposition} \label{prop:netcat}
  \netcat is a category.
\end{proposition}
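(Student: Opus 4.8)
The plan is to discharge the three category axioms: that composition is \emph{well-defined} (the composite of two morphisms is again a well-formed net), that the identities are left and right units, and that composition is associative. Since morphisms are identified up to equivariant denotation, and Theorem~\ref{thm:structisdenot} already tells us that structurally equivalent nets have equal denotations, I would establish the two equational axioms by exhibiting graph isomorphisms between the relevant nets and then appealing to that theorem, rather than computing with traces directly.

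For \textbf{well-definedness} I would first check the bookkeeping on port names. The four clauses defining $\conn_{f\compose g}$ partition its intended domain $\supp((A\fun C)\at O \tensor (I_f\tensor I_g)\at P)$ into the ports handled by $\conn_f$ whose image avoids the hidden boundary $B$, those handled by $\conn_g$ whose image avoids $B'$, and the two ``crossing'' cases. Using that $\conn_f$ and $\conn_g$ are bijections of the prescribed types and that $\pi\vdash B\pneq B'$ identifies the hidden boundary bijectively and polarity-preservingly, I would argue the resulting map is a total bijection onto $\supp((A\fun C)\at P \tensor (I_f\tensor I_g)\at O)$, so that $f\compose g$ is well-formed of type $A\fun C$. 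The delicate point is that a crossing need not exit the boundary in one step: $\conn_g\circ\pi\circ\conn_f$ may land again in ${B'}\at O$ (a ``U-turn'' across the interface, precisely the phenomenon underlying the compact-closed units), so the crossing clauses must be read as iterated until a genuine external or engine port is reached. Here I would note that the internal moves form a \emph{partial injection} on the finite set of boundary ports; a chain beginning at a genuine domain port of $A\fun C$ or at an engine port cannot enter a cycle (injectivity would be violated), so it terminates at a genuine port of the composite. This makes the composite connectivity well-defined.

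For the \textbf{unit laws}, consider $\id A \compose f$ where $\id A$ wires $A$ to a fresh copy and $f=\net{\many E_f}{\conn_f}{A'\fun B}$ with $\pi\vdash A\pneq A'$. The identity contributes no engines, so the composite has engine set $\emptyset\cup\many E_f=\many E_f$, and its only effect is to relabel the $A$-side ports through the identity's pure wiring. I would check that, after following those wires, $\conn_{\id A\compose f}$ coincides with $\conn_f$ up to $\pi$, so that $\id A\compose f$ and $f$ are structurally equivalent via the permutation that is $\pi$ on the $A$-boundary and the identity on all engine and $B$ ports; the right unit law $f\compose\id B=f$ is symmetric. Theorem~\ref{thm:structisdenot} then yields $\denot{\id A\compose f}\pneq\denot f$, i.e.\ equality in the category.

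The main obstacle is \textbf{associativity}. For $f\hastype A\tfun B$, $g\hastype B\tfun C$, $h\hastype C\tfun D$, both $(f\compose g)\compose h$ and $f\compose(g\compose h)$ carry the engine set $\many E_f\cup\many E_g\cup\many E_h$ (union being associative), so the entire content is that their connectivities agree up to renaming. Each is obtained by gluing the three wirings along the shared boundaries and then hiding $B$ and $C$; the two bracketings differ only in the order in which $B$ and $C$ are eliminated. I would prove they compute the same function by the same path-following argument as above: in either order, $\conn$ sends a port to the endpoint reached by chasing wires through $\conn_f,\conn_g,\conn_h$ (and the identifications $B\pneq B'$, $C\pneq C'$) until one exits to $A\fun D$ or to an engine port, and this endpoint is manifestly independent of whether $B$ or $C$ is hidden first. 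Making this confluence precise — ideally by induction on the length of the chasing path, showing each intermediate boundary port is visited at most once — is the fiddly heart of the argument; once the connectivities match, structural equivalence together with Theorem~\ref{thm:structisdenot} closes the case.
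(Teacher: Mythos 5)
Your proposal has the same overall skeleton as the paper's proof: well-definedness of $\conn_{f\compose g}$ by bookkeeping with the bijections $\conn_f$, $\conn_g$ and $\pi$; well-formedness of $\id A$; and the unit laws obtained by exhibiting a structural equivalence between $\id A\compose f$ and $f$ and appealing to Theorem~\ref{thm:structisdenot}, which is exactly the paper's argument. The genuine divergence is associativity: the paper rewrites both $\conn_{(f\compose g)\compose h}$ and $\conn_{f\compose(g\compose h)}$ via an auxiliary $\mathname{connect}$ operator and checks thirteen cases (three of them impossible), whereas you argue once that either bracketing computes the endpoint of the wire obtained by chasing through $\conn_f,\conn_g,\conn_h$ and the boundary identifications until it exits to $A\fun D$ or to an engine port, which cannot depend on whether $B$ or $C$ is hidden first. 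Your route is more conceptual and scales better; the induction on chase length, with injectivity guaranteeing no port is revisited, is the right way to make it precise, and the level of detail is comparable to the paper's ``the other cases are done similarly''.

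The one substantive difference is your ``U-turn'' observation, and there you are being more careful than the paper. The paper's crossing clauses are strictly one-step --- clause 3 reads $\conn_{f\compose g}(a)\defeq\conn_g(\pi(\conn_f(a)))$ with no further test --- and its well-definedness argument only establishes that the clause domains are disjoint and that the map is injective; it never checks that the resulting value avoids $\supp(B')$. Nothing in the definition of a well-formed net prevents $\conn_g$ from wiring an input port name of the ${B'}\dual$ component of its interface directly to an output port name of that same component (the counit $\counit A$, which the paper needs for compact closure, is precisely such a pure wiring), so under the literal definition the composite connection function can land on a hidden port and the composite fails to be well-formed. Your iterated reading, with termination by injectivity on the finite boundary, is the standard repair; just be aware that you are then proving the proposition for a (correctly) amended notion of composition rather than the one the paper writes down --- as stated, this gap sits in the paper's definition and proof, not in your argument.
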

\begin{proof}
  \begin{itemize}
    \item Composition is well-defined, i.e. it preserves well-formedness.

      Let $f = \net {\many E_f}{\conn_f}{A \fun B} \hastype A \tfun B$ and
          $g = \net {\many E_g}{\conn_g}{B' \fun C} \hastype B' \tfun C$
      be morphisms such that $\pi \vdash B \pneq B'$, and their composition $f
      \compose g = \net {\many E_f \cup \many E_g}\conn{A \fun C} \hastype A
      \tfun C$ be as in the definition of composition. To prove that this is
      well-formed, we need to show that
      \begin{align*}
        \conn \in \supp((A \fun C) \at O \tensor I_{fg}\at P) & \tfun
        \supp((A \fun C) \at P \tensor I_{fg}\at O) =
        \\ \supp({A \dual} \at O \tensor C \at O \tensor I_f \at P \tensor I_g \at P) & \tfun
           \supp({A \dual} \at P \tensor C \at O \tensor I_f \at O \tensor I_g \at O)
      \end{align*}
      where $I_{fg} = \tensor\{A \mid \engine A P \in \many E_f
      \cup \many E_g\}$, and that it is a bijection.

      We are given that
      \begin{align*}
        \conn_f & \in \supp({A \dual} \at O  \tensor B \at O \tensor I_f \at P)
                \tfun \supp({A \dual} \at P \tensor B \at P \tensor I_f \at O)
        \\ \conn_g & \in \supp({{B'} \dual} \at O \tensor C \at O \tensor I_g \at P)
                   \tfun \supp({{B'} \dual} \at P \tensor C \at P \tensor I_g \at O)
        \\ \pi & \in \supp(B) \tfun \supp(B')
      \end{align*}
      are bijections.

      It is relatively easy to see that the domains specified in the clauses of
      the definition  of $\conn$ are mutually disjoint sets and that their
      union is the domain that we are after.

      Since $\conn$ is defined in clauses each of which defined
      using either $\conn_f$ or $\conn_g$ and/or $\pi$ (which are bijections with
      disjoint domains and codomains), it is enough to show that the set of
      port names that $\conn_f$ is applied to in clause 1 and 4 are disjoint, and
      similarly for $\conn_g$ in clause 2 and 3:

      \begin{itemize}
        \item In clause 4, we have $\conn_g(a) \in \supp(B')$, and so
          $\pi^{-1}(\conn_g(a)) \in \supp(B)$, which is disjoint from
          $\supp({A\dual} \at O \tensor I_f \at P)$ in clause 1.
        \item In clause 3, we have $\conn_f(a) \in \supp(B)$, and so
          $\pi(\conn_f(a)) \in \supp(B')$, which is disjoint from
          $\supp(C \at O \tensor I_g \at P)$ in clause 2.
      \end{itemize}

    \item Composition is associative.

      Let
      \begin{align*}
           f & = \net {\many E_f}{\conn_f}{A \fun B} \hastype A \tfun B \text{,}
        \\ g & = \net {\many E_g}{\conn_g}{B' \fun C} \hastype B' \tfun C \text{, and}
        \\ h & = \net {\many E_h}{\conn_h}{C' \fun D} \hastype C' \tfun D
      \end{align*}
          be nets such that $\pi_1 \vdash B \pneq B'$ and $\pi_2 \vdash C \pneq C'$.
          Then we have:
      \[
             (f \compose g) \compose h
         = \net {\many E_f \cup \many E_g \cup \many E_h}
                {\conn_{(f \compose g) \compose h}}{A \fun D}
      \]
      and
      \[
             f \compose (g \compose h)
         = \net {\many E_f \cup \many E_g \cup \many E_h}
                {\conn_{f \compose (g \compose h)}}{A \fun D}
      \]
      according to the definition of composition.
      We need to show that $\conn_{(f \compose g) \compose h} = \conn_{f
      \compose (g \compose h)}$, which implies that $(f \compose g) \compose h
      = f \compose (g \compose h)$.

    We do this by expanding the definitions, simplified using the following
    auxiliary function:
    \newcommand{\connect}[2]{\mathname{connect}(#1,#2)}
    \begin{align*}
         \connect c {A}(a) & \defeq a & \text{ if } a \notin \supp(A)
      \\ \connect c {A}(a) & \defeq c(a) & \text{ if } a \in \supp(A)
    \end{align*}

    $f \compose g = \net {\many E_f \cup \many E_g}{\conn_{f\compose g}}{A \fun C}$
    and $g \compose h = \net {\many E_g \cup \many E_h}{\conn_{g\compose h}}{B' \fun D}$
      where
    \begin{align*}
         \conn_{f\compose g}(a) & \defeq
           \connect {\conn_g \circ \pi_1}B(\conn_f(a))
            \text{ if } a \in \supp({A\dual} \at O \tensor I_f \at P)
      \\ \conn_{f\compose g}(a) & \defeq
           \connect {\conn_f \circ \pi_1^{-1}}{B'}(\conn_g(a))
            \text{ if } a \in \supp(C \at O \tensor I_g \at P)
      \\ \conn_{g\compose h}(a) & \defeq
           \connect {\conn_h \circ \pi_2}C(\conn_g(a))
            \text{ if } a \in \supp({{B'}\dual} \at O \tensor I_g \at P)
      \\ \conn_{g\compose h}(a) & \defeq
           \connect {\conn_g \circ \pi_2^{-1}}{C'}(\conn_h(a))
            \text{ if } a \in \supp(D \at O \tensor I_h \at P)
    \end{align*}

    Now $\conn_{(f \compose g) \compose h}$ and $\conn_{f \compose (g \compose h)}$ are defined as follows:
    \begin{align*}
         \conn_{(f\compose g) \compose h}(a) & \defeq
            \connect {\conn_h \circ \pi_2}C(\conn_{f \compose g}(a))
             \text{ if } a \in \supp({A\dual} \at O \tensor I_{f \compose g} \at P)
      \\ \conn_{(f\compose g) \compose h}(a) & \defeq
            \connect {\conn_{f \compose g} \circ \pi_2^{-1}}{C'}(\conn_h(a))
             \text{ if } a \in \supp(D \at O \tensor I_h \at P)
      \\ \conn_{f\compose (g \compose h)}(a) & \defeq
            \connect {\conn_{g \compose h} \circ \pi_1}B(\conn_f(a))
             \text{ if } a \in \supp({A\dual} \at O \tensor I_f \at P)
      \\ \conn_{f\compose (g \compose h)}(a) & \defeq
            \connect {\conn_f \circ \pi_1^{-1}}{B'}(\conn_{g \compose h}(a))
             \text{ if } a \in \supp(D \at O \tensor I_{g \compose h} \at P)
    \end{align*}

    One way to see that these two bijective functions are equal is to view them
    as case trees, and consider every case. There are 13 such cases to
    consider, out of which three are not possible.

    We show three cases:
    \begin{enumerate}
      \item If $a \in \supp({A \dual} \at O \tensor I_f \at P)$,
        $\conn_f(a) \notin \supp(B)$, and $\conn_f(a) \notin \supp(C)$, then
        \begin{align*}
          & \conn_{(f \compose g) \compose h}(a)
            \\ = & \connect {\conn_h \circ \pi_2}C(\conn_{f \compose g}(a))
            \\ = & \connect {\conn_h \circ \pi_2}C(\conn_f(a))
            \\ = & \conn_f(a)
        \end{align*}
        and
        \begin{align*}
          & \conn_{f \compose (g \compose h)}(a)
            \\ = & \connect {\conn_{g \compose h} \circ \pi_1}B(\conn_f(a))
            \\ = & \conn_f(a)
        \end{align*}
        and thus equal.
      \item Consider the case where $a \in \supp({A \dual} \at O \tensor I_f
        \at P)$, $\conn_f(a) \notin \supp(B)$, and $\conn_f(a) \in \supp(C)$.
        This case is not possible, since $\supp(C)$ is not a subset of the
        codomain of $\conn_f(a)$, which is $\supp({A \dual} \at P \tensor B \at
        P \tensor I_f \at O)$.
      \item If $a \in \supp(D \at O \tensor I_h \at P)$, $\conn_h(a) \in
        \supp(C')$, $\pi_2^{-1}(\conn_h(a)) \in \supp(C \at O \tensor I_g \at
        P)$, and $\conn_g(\pi_2^{-1}(\conn_h(a))) \in \supp(B')$, then
        \begin{align*}
          & \conn_{(f \compose g) \compose h}(a)
            \\ = & \connect {\conn_{f \compose g} \circ \pi_2^{-1}}{C'}(\conn_h(a))
            \\ = & \conn_{f \compose g}(\pi_2^{-1}(\conn_h(a)))
            \\ = & \connect {\conn_f \circ \pi_1^{-1}}{B'}(\conn_g(\pi_2^{-1}(\conn_h(a))))
            \\ = & \conn_f(\pi_1^{-1}(\conn_g(\pi_2^{-1}(\conn_h(a)))))
        \end{align*}
        and
        \begin{align*}
          & \conn_{f \compose (g \compose h)}(a)
          \\ = & \connect {\conn_f \circ \pi_1^{-1}}{B'}(\conn_{g \compose h}(a))
          \\ = & \connect {\conn_f \circ \pi_1^{-1}}{B'}(\connect {\conn_g \circ \pi_2^{-1}}{C'}(\conn_h(a)))
          \\ = & \connect {\conn_f \circ \pi_1^{-1}}{B'}(\conn_g(\pi_2^{-1}(\conn_h(a))))
          \\ = & \conn_f(\pi_1^{-1}(\conn_g(\pi_2^{-1}(\conn_h(a)))))
        \end{align*}
        and thus equal.
    \end{enumerate}

    The other cases are done similarly.

    \item $\id A$ is well-formed.
      For any interface $A$,
      \[\id A \defeq \net \emptyset \conn {A \fun A'}\]
      for an $A'$ such that $\pi \vdash A \pneq A'$
      and
      \begin{align*}
        \conn(a) & \defeq \pi(a) \text{ if } a \in \supp({A \dual} \at O)
        \\ \conn(a) & \defeq \pi^{-1}(a) \text{ if } a \in \supp({A'} \at O \text{.})
      \end{align*}
      according to the definition.

      We need to show that $\conn$ is a bijection:
    \begin{align*}
    \conn & \in \supp((A \fun A') \at O)
          \tfun \supp((A \fun A') \at P)
      \\ & = \supp({A \dual} \at O \cup {A'} \at O)
       \tfun \supp({A \dual} \at P \cup {A'} \at P)
    \end{align*}
    This is true since $\pi$ is a bijection in $\supp(A) \tfun \supp(A')$.

    \item $\id A$ is an identity. For any morphism $f \hastype A \tfun B$
    we observe that $\id A \compose f$ is structurally equivalent to $f$, so by
    Theorem~\ref{thm:structisdenot}, $\denot {\id A \compose f} \pneq \denot
    f$.

    The case for $f \compose \id B$ is similar.
    \end{itemize}
\end{proof}

  We will now show that \netcat is a symmetric monoidal category:
  \begin{itemize}
    \item The tensor product of two objects $A,B$, $A \tensor B$ has already
      been defined.  We define the tensor of two morphisms
        $f = \net {\many E_f}{\conn_f}{A \fun B},
         g = \net {\many E_g}{\conn_g}{C \fun D}$
      as $f \tensor g = \net {\many E_f \cup \many E_g}{\conn_f \tensor \conn_g}{A
      \tensor C \fun B \tensor D}$.
    \item The unit object is the empty interface, $\emptyset$.
    \item Since $A \tensor (B \tensor C) = A \cup B \cup C = (A \tensor B) \tensor C$ we
      define the associator $\assoc A B C \defeq \id {A \tensor B \tensor C}$ with the
      obvious inverse.
    \item Similarly, since $\emptyset \tensor A = \emptyset \cup A = A = A \cup \emptyset = A \tensor \emptyset$, we define the left unitor $\lunitor A \defeq \id A$ and the right unitor
    $\runitor A \defeq \id A$.
    \item Since $A \tensor B = A \cup B = B \cup A = B \tensor A$ we define the
      commutativity constraint $\commut A B \defeq \id {A \tensor B}$.
  \end{itemize}
  \begin{proposition}
    \netcat is a symmetric monoidal category.
  \end{proposition}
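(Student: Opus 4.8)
The plan is to verify the defining axioms of a symmetric monoidal category for the data just introduced, exploiting the crucial fact that the tensor on objects is literally set union: $A \tensor (B \tensor C)$ and $(A \tensor B) \tensor C$ are the \emph{same} interface, $\emptyset$ is a strict unit, and $A \tensor B = B \tensor A$ on the nose. Consequently the associator, the unitors, and the braiding are all genuine identity morphisms on equal objects, so the underlying monoidal structure is in fact strict, and most coherence obligations will degenerate. The real content to be discharged is therefore (i) that $\tensor$ is a bifunctor and (ii) that the structural morphisms, being identities, automatically satisfy naturality, the pentagon and triangle, the hexagon, and the symmetry law.

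First I would establish that $\tensor$ is a bifunctor. Preservation of identities, $\id A \tensor \id B = \id{A \tensor B}$, is immediate: both sides are empty-engine nets whose connectivity is the disjoint union of two instances of pure connectivity, and they are structurally equivalent via the identity permutation, hence equal by Theorem~\ref{thm:structisdenot}. The substantive obligation is the interchange law, $(f \compose f') \tensor (g \compose g') = (f \tensor g) \compose (f' \tensor g')$. The key observation is that tensoring, unlike composition, introduces \emph{no} new port connections: since $\conn_f \tensor \conn_g$ acts as $\conn_f$ and as $\conn_g$ on their respective disjoint supports, the two nets on either side of the equation carry the same engine set $\many E_f \cup \many E_{f'} \cup \many E_g \cup \many E_{g'}$ and connectivity functions that agree blockwise. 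Rather than repeat the clause-by-clause case analysis used above for associativity, I would argue directly that the two nets are structurally equivalent (graph-isomorphic under the identity permutation) and therefore equal as morphisms by Theorem~\ref{thm:structisdenot}.

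With bifunctoriality in hand the remaining axioms collapse. Naturality of the associator reduces to $(f \tensor g) \tensor h = f \tensor (g \tensor h)$, which holds on the nose because the tensor of morphisms is union of engines together with the associative disjoint-union operation on connectivity; naturality of the unitors and of the braiding is similarly degenerate, since each structural map is an identity and the objects it relates are literally equal. The pentagon and triangle diagrams commute trivially, as every edge is an identity morphism and each diagram reduces to a trivial equality between identity morphisms. For the symmetric structure I would check the hexagon and the symmetry condition $\commut A B \compose \commut B A = \id{A \tensor B}$; since $\commut A B = \id{A \tensor B}$ and $\commut B A = \id{B \tensor A} = \id{A \tensor B}$ (the two objects being the same set), the symmetry law is just composition of identities, and the hexagon, having only identity edges, commutes automatically.

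The main obstacle in the whole development is thus the interchange law: it is the one place where the actual definitions of composition and of tensor-of-morphisms must be reconciled, and everything else is a consequence of strictness. The cleanest route is to avoid the connectivity bookkeeping entirely by exhibiting the required structural equivalence and appealing to Theorem~\ref{thm:structisdenot}, exactly as for the identity and unit laws in Proposition~\ref{prop:netcat}.
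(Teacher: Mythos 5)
Your proposal is correct and follows essentially the same route as the paper: exploit the strictness of the tensor (union on interfaces) so that all coherence data are identities, reducing the proof to well-definedness and bifunctoriality of $\tensor$. The only difference is that you discharge the interchange law via structural equivalence and Theorem~\ref{thm:structisdenot} where the paper simply asserts it ``by the definition of composition and tensor on morphisms''; this is a reasonable and slightly more explicit way of making the same point.
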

  \begin{proof}
    \begin{itemize}
      \item The tensor product is well-defined, i.e. for two morphisms $f,g$,
        $f \tensor g$ is a well-formed net. This is easy to see since $f$ and
        $g$ are well-formed.
      \item The tensor product is a bifunctor:
        \begin{itemize}
          \item $\id A \tensor \id B = \net \emptyset {\conn_1 \tensor \conn_2} {A \tensor
            B \fun A' \tensor B'} = \id {A \tensor B}$ by the definition of $\id {A \tensor B}$.
          \item $(f \compose g) \tensor (h \compose i) = f \tensor h \compose g \tensor i$
          by the definition of composition and tensor on morphisms.
        \end{itemize}
      \item The coherence conditions of the natural isomorphisms are trivial
        since the isomorphisms amount to identities.
    \end{itemize}
  \end{proof}

  Next we show that \netcat is a compact-closed category:
  \begin{itemize}
    \item We have already defined the dual $A \dual$ of an object $A$.
    \item Since $\emptyset \fun (A \dual \tensor A') = \emptyset \dual \cup (A
      \dual \cup A') = A \fun A'$ we can define the unit $\unit A \defeq \id A$
      and since $A \tensor {A'}\dual \fun \emptyset = (A \cup {A'}\dual)\dual \cup
      \emptyset = A \dual \cup A' = A \fun A'$ we can define the counit
      $\counit A \defeq \id A$.

  \end{itemize}

  This leads us directly to the following result --- what we set out to show:
  \begin{proposition}
    \netcat is a symmetric compact-closed category.
  \end{proposition}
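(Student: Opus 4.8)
The plan is to verify the two yanking (snake) equations, since the monoidal structure together with the candidate dual $A\dual$, unit $\unit A$ and counit $\counit A$ have already been supplied. Concretely, writing the triangle identities in diagrammatic order, I must show
\[
  (\id A \tensor \unit A) \compose \associnv A {A\dual}{A} \compose (\counit A \tensor \id A) = \id A
\]
and the mirror-image identity obtained by dualising, relating $\id{A\dual}$ to $\counit A$ and $\unit A$ on the opposite side. (I suppress the unitors, which are identities, and treat the associator as an identity as well.) Because every structural isomorphism of the monoidal structure — associator, unitors and symmetry — was defined to be of the form $\id{\cdots}$, and because $\unit A \defeq \id A$ and $\counit A \defeq \id A$ by definition, each factor on the left-hand side is an empty-engine net, i.e.\ pure connectivity.

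First I would observe that composition of empty-engine nets is determined entirely by their connectivity functions: when $\many E_f = \many E_g = \emptyset$, the clauses defining $\conn_{f\compose g}$ reduce to routing each external port name either directly through $\conn_f$ or $\conn_g$, or across the hidden interface via the renaming $\pi$. Hence both sides of each yanking equation are empty-engine nets, and it suffices to compare their connectivity bijections. I would carry this out by the same device used in the associativity argument of Proposition~\ref{prop:netcat}: express each connectivity as a case tree and check the finitely many cases, or, more economically, exhibit a permutation witnessing that the composite net is structurally equivalent to $\id A$ and invoke Theorem~\ref{thm:structisdenot} to conclude denotational, and hence categorical, equality.

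The heart of the computation is that the zigzag collapses: an input port of the left-hand $A$ is connected by $\id A$ to a copy, that copy is connected by $\unit A$ to the interior $A\dual$, the interior is matched by $\counit A$ to the final copy of $A$, and the final $\id A$ carries it to the corresponding output port. After hiding the two internal interfaces, this chain of renamings composes to a single bijection between the external input and output ports of $A$ — precisely the pure connectivity of $\id A$. The main obstacle, exactly as in the associativity proof, is the bookkeeping of the several same-shape renaming permutations $\pi$ introduced at each composition (one per application of composition, since composition demands disjoint but equivariant port names): I must check that these permutations cancel along the zigzag so that the net effect is the identity connectivity up to \portnames-equivalence. Once this is established for one snake equation, the second follows by the symmetry of the construction (the symmetry constraint being an identity), completing the proof that \netcat is compact closed.
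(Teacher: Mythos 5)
Your proposal is correct and matches the paper's (implicit) argument: the paper offers no written proof, asserting only that the result follows ``directly'' from the fact that the unit, counit, and all structural isomorphisms are defined as identities, i.e.\ empty-engine nets of pure connectivity, and your verification of the yanking equations by chasing the composite connectivity bijection and invoking structural equivalence together with Theorem~\ref{thm:structisdenot} is exactly the natural expansion of that assertion. The one point worth keeping in view is that the snake composite is not literally a composite of identity \emph{morphisms} (the intermediate objects differ), so the connectivity-chasing you describe is genuinely needed rather than a formal consequence of the identity laws.
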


The following two theorems can be proved by induction on the trace length,
and provide a connection between the \netcat tensor and composition and
trace interleaving and composition.
\begin{theorem} \label{thm:tensordenot}
  If $f \hastype A \tfun B$ and $g \hastype C \tfun D$ are morphisms of \netcat
  then $\denot{f \tensor g} = \denot{f} \tensor \denot{g}$.
\end{theorem}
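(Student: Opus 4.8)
The plan is to reduce Theorem~\ref{thm:tensordenot} to a \emph{decomposition lemma} stating that the net $f \tensor g$ runs as two completely independent copies of $f$ and $g$, and then to push this correspondence through the definitions of $\denot{-}$ and of trace interleaving by induction on trace length. Note first that the external interfaces match as sets: $(A\tensor C)\fun(B\tensor D) = A\dual \tensor C\dual \tensor B \tensor D = (A\fun B)\tensor(C\fun D)$, since $\tensor$ is union and $\dual$ is pointwise, so the codomain $\traces$ of all three sets in the statement coincide. With the port names of the two sides chosen disjoint (as per the representative convention of the Note), the interleaving $\denot f \tensor \denot g$ is literally $\{s \mid s\ndel(C\fun D)\in\denot f \wedge s\ndel(A\fun B)\in\denot g\}$.

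First I would exploit that the tensor of interfaces forces $\supp(A\fun B)\cap\supp(C\fun D)=\emptyset$ and that the engines $\many E_f$ and $\many E_g$ carry disjoint port names, disjoint heaps, and disjoint threads. Hence the combined connectivity $\conn_f\tensor\conn_g$ never routes an $f$-port to a $g$-port or back: it acts as $\conn_f$ on the $f$-side and as $\conn_g$ on the $g$-side. I would use this to exhibit a bijection between net configurations of $f\tensor g$ and pairs $(n_f,n_g)$ of net configurations of $f$ and $g$: the engine configurations of engines in $\many E_f$ together with the pending messages whose port lies on the $f$-side form $n_f$, and symmetrically for $g$ (every pending message lives on exactly one side, since outputs are routed by $\conn_f\tensor\conn_g$). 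Under this bijection $\initial(f\tensor g)$ corresponds to $(\initial(f),\initial(g))$.

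Next I would prove the single-step correspondence by inspecting each rule of Fig.~\ref{fig:netop}. Every silent step of $f\tensor g$ is either a silent step of $f$ on the first component leaving the second fixed, or a silent step of $g$ on the second leaving the first fixed: the $\nialloc$, $\niget$, $\niupdate$, $\nispark$, etc.\ rules touch only one engine's threads and heap, and the routing rules respect the side partition. Likewise every observable step of $f\tensor g$ whose label sits on an $A$- or $B$-port is an observable step of $f$ with the same label (second component fixed), and symmetrically $C,D$-labels belong to $g$. The converse liftings hold too, since any step of $f$ can be replayed on the $f$-side of $f\tensor g$ with the $g$-side untouched.

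With this correspondence the theorem follows by induction on trace length, in both directions; the base case is $\emptytrace$, which lies in all three sets. For the forward inclusion I take $s\in\denot{f\tensor g}$, i.e.\ $\initial(f\tensor g)\netstep{s}n$, and show $s\ndel(C\fun D)\in\denot f$ and $s\ndel(A\fun B)\in\denot g$: each observable step contributes its label to exactly one projection and leaves the other run stationary, while the interspersed silent steps split the same way, so both projected runs are genuine runs of $f$ and $g$. For the reverse inclusion I take $s$ with both projections valid and interleave the two runs in the order dictated by the labels of $s$, slotting in the silent steps of each side where they occur; the decomposition lemma guarantees each interleaved step is a legal transition of $f\tensor g$ producing exactly $s$. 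I expect the main obstacle to be purely the bookkeeping of the silent-closure $\netstep{}^{*}$: matching the arbitrary silent steps allowed on each side against those of the combined machine, and verifying that the label-free stretches interleave without mutually constraining each other. The genuine independence furnished by the decomposition lemma is precisely what removes any real difficulty here, so once that lemma is stated carefully the induction is routine.
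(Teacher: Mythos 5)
Your proposal is correct and takes essentially the same approach as the paper, which states only that the result ``can be proved by induction on the trace length'' without further detail; your decomposition of configurations of $f\tensor g$ into independent pairs, the single-step correspondence, and the two-directional induction are exactly the expected elaboration of that remark.
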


\begin{theorem} \label{thm:compdenot}
  If $f \hastype A \tfun B$ and $g \hastype B' \tfun C$ are morphisms of
  \netcat such that $\pi \vdash B \pneq B'$ then $\denot{f \compose g} =
  \denot{f} \compose \denot{g}$.
\end{theorem}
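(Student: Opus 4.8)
The plan is to establish a tight operational correspondence between runs of the composed net $f \compose g$ and synchronised pairs of runs of $f$ and $g$, and then lift this correspondence to full traces by induction on trace length, proving the two inclusions separately. The denotational operation $\compose$ is exactly synchronisation-and-hiding, so the task is to show that the operational ``fusion'' of the two nets along $B$/$B'$ realises precisely that synchronisation.

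The crux is that the connectivity $\conn_{f\compose g}$ is engineered so that every port which $\conn_f$ would send into $B$ is instead routed, via $\pi$, to exactly the engine input that $\conn_g$ associates with the matching $B'$ port (clause~3), and symmetrically for ports that $\conn_g$ sends into $B'$ (clause~4). Consequently a boundary communication in $f \compose g$ --- an $f$-engine executing $\nispark\ a$ with $\conn_f(a) \in \supp(B)$, producing a message immediately consumed by a $g$-engine --- simulates precisely the pair consisting of an external $B$-output of $f$ on port $\conn_f(a)$ followed by the external $B'$-input of $g$ on port $\pi(\conn_f(a))$. From the spark and input rules one checks that the payload $\msgf(\many d)$ is preserved and that the $g$-engine resumes in $P_g(\conn_g(\pi(\conn_f(a))))$ with registers $\regsf(\msgf(\many d))$ --- identically to how $g$ alone reacts to that $B'$-input, with the $g\!\to\!f$ direction handled symmetrically. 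Because the engines of $\many E_f \cup \many E_g$ carry disjoint heaps and a net thread is always input-enabled on its opponent ports, these two silent steps can be performed back-to-back, and neither appears in the external trace, which ranges only over $\supp(A) \cup \supp(C)$.

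For $\denot{f\compose g} \subseteq \denot f \compose \denot g$ I would take a run $\initial(f\compose g) \netstep{t} n$ and reconstruct a witness $s \in \traces_{A\tensor B\tensor C}$ by inserting, at each internal boundary step, the corresponding $B$-message (port $\conn_f(a)$, payload $\msgf(\many d)$), leaving every external $A$- and $C$-transition untouched so that $s\ndel B = t$. Projecting the run onto the $f$-engines, treating boundary outputs as external $B$-outputs, yields a run witnessing $s\ndel C \in \denot f$; projecting onto the $g$-engines, treating boundary inputs as external $B'$-inputs under the $\pi$-renaming and polarity flip, witnesses $\pi\cdot {s^{*B}}\ndel A \in \denot g$. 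Heap disjointness is what makes these projections genuine runs of the component nets. For the reverse inclusion I would start from $s$ with $s\ndel C \in \denot f$ and $\pi\cdot {s^{*B}}\ndel A \in \denot g$, take witnessing runs of $f$ and $g$, and interleave them following the total order $s$ imposes on all messages: external $A$- and $C$-actions are replayed verbatim, while each matched $B$/$B'$ pair occupying the same position in $s$ is fused into the back-to-back spark/consume pair above, producing a run of $f\compose g$ with external trace $s\ndel B$.

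I expect the main obstacle to be the bookkeeping of the backward interleaving: one must argue that the two component runs can be shuffled into a single schedule compatible with $s$, in particular that a $B$-message output by $f$ can be consumed by the matching $g$-thread immediately (so that both steps remain silent) and that independent steps of the other engines can be commuted out of the way. This rests on input-enabledness and heap disjointness, but equally on verifying at each step that the data vectors threaded through $\msgf$ and $\regsf$ agree on both sides of the boundary and that clauses~3 and~4 of $\conn_{f\compose g}$ indeed realise the $\pi$ and the polarity-reversal ($s^{*B}$) appearing in the definition of $\compose$ on trace sets --- the per-step check that makes the induction go through.
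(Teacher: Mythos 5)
Your proposal is correct and follows exactly the route the paper indicates (the paper gives no detailed proof, only the remark that the result ``can be proved by induction on the trace length''): an induction on trace length establishing, in both directions, the operational correspondence between boundary spark/receive steps of $f \compose g$ and matched external $B$/$B'$ events of $f$ and $g$. The points you flag --- input-enabledness, heap disjointness, and that clauses~3 and~4 of $\conn_{f\compose g}$ realise $\pi$ and the polarity reversal $s^{*B}$ --- are precisely the checks needed to make that induction go through.
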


The following result explicates how communicating HRAMs can be combined into
a single machine, where the intercommunication is done with jumping rather
than message passing, in a sound way:

\begin{theorem} \label{thm:combination}
  If $E_1 = \engine {A_1} {P_1}$ and $E_2 = \engine {A_2} {P_2}$ are
  engines and $S = \net {\{E_1,E_2\}} \conn A$ is a net, then
  $E_{12} = \engine {A_1 \tensor A_2} {P_1 \cup P_2}$ is an engine,
  $S' = \net {\{E_{12}\}} \conn A$ is a net and
  $\denot {S} \subseteq \denot{S'}$.
\end{theorem}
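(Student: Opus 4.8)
The plan is to settle the two well-formedness claims directly and then prove the inclusion $\denot S\subseteq\denot{S'}$ by simulating $S$ with $S'$.

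Well-formedness is immediate. Since $S=\net{\{E_1,E_2\}}\conn A$ is a net, $A_{\{E_1,E_2\}}=A_1\tensor A_2$ is defined, so $\supp(A_1)\cap\supp(A_2)=\emptyset$; hence $A_1\tensor A_2$ is a legitimate interface, $P_1\cup P_2$ is a well-defined map on $\supp((A_1\tensor A_2)\at O)=\supp(A_1\at O)\cup\supp(A_2\at O)$, and every port it sparks is an output port of $A_1\tensor A_2$ (as it was of $A_1$ or $A_2$), so $E_{12}$ is an engine. Because $A_{\{E_{12}\}}=A_1\tensor A_2=A_{\{E_1,E_2\}}$, the bijection $\conn$ has exactly the domain and codomain required for $S'=\net{\{E_{12}\}}\conn A$, which is therefore a net; the entire divergence between $S$ and $S'$ lives in their operational behaviour, not their shape.

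For the inclusion I would take $s\in\denot S$ witnessed by a run $\initial(S)\netstep{s}n$ and build a run of $S'$ carrying the same trace, by induction on the run using a relation $R$ between an $S$-configuration $\netconf{(\engineconf{T_1}{h_1})\hastype E_1,(\engineconf{T_2}{h_2})\hastype E_2}{\many m}$ and an $S'$-configuration $\netconf{(\engineconf{T_{12}}{h_{12}})\hastype E_{12}}{\many m'}$. Call an engine input port $b$ \emph{internal} if $\conn^{-1}(b)$ is an engine output port, i.e.\ $\conn^{-1}(b)\in\supp((A_1\tensor A_2)\at P)$. The relation requires $\supp(h_1)\cap\supp(h_2)=\emptyset$ and $h_{12}=h_1\cup h_2$; it keeps every message that is \emph{not} on an internal port identical in $\many m'$; and it represents each message $\rmsg b{\many e}$ on an internal port $b$ not by a message of $S'$ but by an as-yet-unfired spark thread $\rthread{\nispark\ \conn^{-1}(b)}{\many d}$ (with $\msgf(\many d)=\many e$) retained in $T_{12}$, the remaining threads of $T_{12}$ being exactly $T_1\cup T_2$. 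The observation making this coherent is that a message lands on an internal port $b$ only because the \emph{other} engine performed a cross-engine $\nispark$ (self-sparks and sparks aimed at the external interface never produce such a message), and that this very instruction, run inside $E_{12}$, is not an engine output but a silent jump, since $\port\lO b\in A_1\tensor A_2$; it yields $\rthread{(P_1\cup P_2)(b)}{\regsf(\many e)}$, precisely the thread that $S$ creates when the receiving engine consumes $\rmsg b{\many e}$.

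The inductive step is then a case analysis on the $S$-step. Local engine steps $(\nialloc, \niget, \niupdate, \nifree, \niflip, \niset, \niifzero, \niend)$ are replayed verbatim by $E_{12}$: since in $S$ the acting engine dereferences only pointers in its own heap $h_i\subseteq h_{12}$, the same cell is found in $h_{12}$ and the update stays within $h_i$, preserving $h_{12}=h_1\cup h_2$. A cross-engine spark-output in $S$ is matched by \emph{no} step of $S'$, the corresponding spark thread being simply retained, now standing for the fresh internal message; the later engine-input that consumes that message is matched by finally firing the retained jump, the two resulting threads coinciding by the computation above. Sparks to the external interface, self-sparks (already jumps in $S$), and the net input/output rules are matched one-for-one, each observable rule of $S$ being answered by the identically labelled observable rule of $S'$; hence the two runs carry the same trace and $s\in\denot{S'}$.

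I expect the main obstacle to be exactly this cross-engine bookkeeping: $S'$ collapses into one atomic jump what $S$ performs as an output followed by a temporally separated input, with arbitrary other steps interleaved between them. The delayed-jump device copes with it because the retained spark thread is inert—we never fire it early—so $R$ survives the intervening steps; and the structural split of engine input ports into internal and externally-fed (determined by $\conn^{-1}$, since $\conn$ is a bijection) guarantees the two provenances of message can never be confused, so no historical tracking is needed. The one remaining subtlety is heap disjointness: $S$'s allocation rule demands freshness only with respect to the acting engine's heap, so a priori $h_1$ and $h_2$ could clash. This is where the nominal setting does its work—allocation yields globally fresh names, so the chosen run of $S$ keeps $h_1$ and $h_2$ disjoint throughout and the pointers appearing in its trace are reproducible by the single merged heap of $S'$—legitimising $h_{12}=h_1\cup h_2$. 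Finally, the inclusion is in general strict, since in $S'$ a thread descended from $P_1$ may read a cell allocated on the $E_2$ side of the shared heap, an action that is stuck in $S$; the simulation simply never invokes this extra power.
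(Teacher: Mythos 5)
Your proposal is correct and follows essentially the same route as the paper: the paper's induction hypothesis likewise merges the heaps as $h_1\cup h_2$ and represents each in-flight internal message by a retained, not-yet-fired $\nispark$ thread in the combined engine (its set $\many t_p$), firing it only when $S$ consumes the corresponding message. Your additions — the explicit well-formedness check and the remark on heap-name clashes being resolved by global freshness of allocation — are points the paper's proof silently elides, and are worth making.
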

\begin{proof}
We show that for any trace $s$, $s \in \denot{S}$ implies $s \in \denot{S'}$ by
induction on the length of the trace.
\begin{description}
  \item[Hypothesis.] If $s \in \denot{S}$ and thus
    $\initial(S) \netstep{s} \netconf {\{\engineconf {\many t_1}{h_1} \hastype
    E_1,\engineconf {\many t_2}{h_2} \hastype E_2\}} {\many m}$ for some sets of
    threads $\many t_1$ and $\many t_2$, heaps $h_1$ and $h_2$, and a multiset of
    messages $\many m$, then $\initial(S') \netstep{s} \netconf {\{\engineconf
    {\many t_1 \cup \many t_2 \cup \many t_p}{h_1 \cup h_2} \hastype E_{12}\}} {\many m_p}$
    where $\many t_p$ is a set of threads and $\many m_p$ is a multiset of messages such that:
    \begin{enumerate}
      \item each $t \in \many t_p$ is on the form
      $t = \thread {\nispark\ a}{\many d}$ with
      $\conn(a) \in \supp(A_1 \tensor A_2)$, and
      \item $\many m = \many m_p \uplus \{ \rmsg {\conn(a)}{\msgf(\many d)}
          \mid \thread {\nispark\ a}{\many d} \in \many t_p \}$.
    \end{enumerate}

    Intuitively, the net where $E_1$ and $E_2$ have been
    combined into one engine will not have pending messages (in $\many m$) for
    communications between $E_1$ and $E_2$, but it can match the behaviour of such
    messages by threads that are just about to spark.
  \item[Base case.] Since any net can take zero steps, the case when $s = \emptytrace$
    is trivial.
  \item[Inductive step.] If $s = s' \snoc \alpha$ and the hypothesis holds for $s'$, then
    we have
    \begin{align*}
      \initial(S) & \netstep{s'} \netconf{\{\engineconf {\many t_1}{h_1} \hastype E_1
                                         ,\engineconf {\many t_2}{h_2} \hastype E_2\}}
                                       {\many m}
      \\          & \netstep{}^* \netstep{\alpha}
                    \netconf {\{\engineconf {\many t_1'}{h_1'} \hastype E_1
                                         ,\engineconf {\many t_2'}{h_2'} \hastype E_2\}}
                                       {\many m'}
      \\ \initial(S') & \netstep{s'} \netconf{\{\engineconf {\many t_1 \cup \many
           t_2 \cup \many t_p}{h_1 \cup h_2} \hastype E_{12}\}} {\many m_p}
    \end{align*}
    with $t_p$ and $\many m'$ as in the hypothesis.
    We first show that $S'$ can match the silent steps that $S$ performs, by induction
    on the number of steps, using the same induction hypothesis as above:
    \begin{description}
      \item[Base case.] Trivial.
      \item[Inductive step.]
      Assume that we have
      \begin{align*}
        \initial(S) & \netstep{s'}\netstep{}^*
          \netconf{\{\engineconf {\many t_1}{h_1} \hastype E_1
                    ,\engineconf {\many t_2}{h_2} \hastype E_2\}}
                  {\many m}
        \\ \initial(S') & \netstep{s'}\netstep{}^*
            \netconf{\{\engineconf {\many t_1 \cup \many t_2 \cup \many t_p}{h_1 \cup h_2}
                                    \hastype E_{12}\}}
                    {\many m_p}
      \end{align*}
        Such that the induction hypothesis holds.
        We need to show that any step
        \begin{align*}
          \netconf{\{\engineconf {\many t_1}{h_1} \hastype E_1
                    ,\engineconf {\many t_2}{h_2} \hastype E_2\}}
                  {\many m}
          & \netstep{} \\
           \netconf {\{\engineconf {\many t_1'}{h_1'} \hastype E_1
                      ,\engineconf {\many t_2'}{h_2'} \hastype E_2\}}
                    {\many m'}
        \end{align*}
        can be matched by (any number of) silent steps of the $S'$
        configuration, such that the induction hypothesis still holds.
        \begin{itemize}
          \item A thread of $S$ performs a silent step. This is trivial, since
            the threads of the engine configuration of $S'$ includes all
            threads of the configurations of $S$, and its heap is the union of
            those of $S$.  \item A thread of $S$ does an internal engine send
            step. Since $\many t_1 \cup \many t_2 \cup \many t_p$ includes all
            threads of the $S$ configuration, and for the port name $a$ in
            question $\conn(a) \in A_1 \cup A_2 = A_1 \tensor A_2$, this can be
            matched by the configuration of $S'$ such that the induction
            hypothesis still holds.

          \item A thread $S$ does an external engine send. This means that there
            is a thread $t \in \many t_1 \cup \many t_2$ on the form $t =
            \thread {\nispark\ a}{\many d}$, which after the step will be
            removed, adding the message $\rmsg {\conn(a)}{\msgf(\many d)}$ to its multiset of
            messages, i.e. $\many m' = \many m \uplus \{\rmsg {\conn(a)}{\msgf(\many d)}\}$.

            If $\conn(a) \in A_1 \cup A_2$, then the configuration $S'$ can
            take zero steps, and thus include $t$ in the set of threads ready
            to spark. The induction hypothesis still holds, since $\many m' =
            \many m \uplus \{\rmsg {\conn(a)}{\msgf(\many d)}\} = \many m_p \uplus 
            \{\rmsg {\conn(a)}{\msgf(\many d)} \mid \thread {\nispark\ a}{\many d} \in \many
            t_p\} \uplus \{\rmsg {\conn(a)}{\msgf(\many d)}\} = \many m_p \uplus \{\rmsg
            {\conn(a)}{\msgf(\many d)} \mid \thread {\nispark\ a}{\many d} \in \many t_p
            \cup \{t\}\}$.

            If $\conn(a) \in I$, then the configuration of $S'$ can match the
            step of $S$, removing the thread $t$ from also its set of threads.
            It is easy to see that the induction hypothesis holds also in this
            case.
          \item An engine of $S$ receives a message. This means that
            $\many m = \{\rmsg a{\many d}\} \uplus \many m'$ for a message such
            that the port $\port \lO a \in A_1 \cup A_2 = A_1 \tensor A_2$.
            Then either $\rmsg a{\many d}$ is in $\many m_p$ or in $\{ \rmsg
            {\conn(a)}{\msgf(\many d)} \mid \thread {\nispark\ a}{\many d} \in
            \many t_p \}$. If it is the former, $E_{12}$ can receive the
            message and start a thread equal to that started in the
            configuration of $S$.  If it is the latter, there is a thread $t =
            \thread {\nispark\ {\conn^{-1}(a)}}{\many d'} \in \many t_p$ with
            $\many d = \msgf(\many d')$ that can first take a send $m$ step,
            adding it to the multiset of pending messages of the configuration
            of $S'$, and then it can be received as in $S$.
        \end{itemize}
    \end{description}
    Next we show that the $\alpha$ step can be matched:
      Assume that we have
      \begin{align*}
        \initial(S) & \netstep{s'}\netstep{}^*
          \netconf{\{\engineconf {\many t_1}{h_1} \hastype E_1
                    ,\engineconf {\many t_2}{h_2} \hastype E_2\}}
                  {\many m}
        \\ \initial(S') & \netstep{s'}\netstep{}^*
            \netconf{\{\engineconf {\many t_1 \cup \many t_2 \cup \many t_p}{h_1 \cup h_2}
                                    \hastype E_{12}\}}
                    {\many m_p}
      \end{align*}
        Such that the induction hypothesis holds.
        We need to show that for any $\alpha$, a step
        \begin{align*}
          \netconf{\{\engineconf {\many t_1}{h_1} \hastype E_1
                    ,\engineconf {\many t_2}{h_2} \hastype E_2\}}
                  {\many m}
          & \netstep{\alpha}
           \netconf {\{\engineconf {\many t_1'}{h_1'} \hastype E_1
                      ,\engineconf {\many t_2'}{h_2'} \hastype E_2\}}
                    {\many m'}
        \end{align*}
        can be matched by the $S'$ configuration, such that the induction
        hypothesis still holds.  We have two cases:
        \begin{itemize}
          \item The configuration of $S$ performs a send step. That is
            $\many m = \{m\} \uplus \many m'$ for an $m = \rmsg a{\many d}$ such that
            $\port \lP a \in A$. Since $\supp(A)$ is disjoint
            from $\supp(A_1 \cup A_2)$, the message is also in $\many m_p$, so
            the configuration of $S'$ can match the step.
          \item The configuration of $S$ performs a receive step. This case
            is easy, as $S$ and $S'$ have the same interface $A$.
        \end{itemize}

\end{description}

\end{proof}

We define a family of projection HRAM nets $\Pi_{i,{A_1 \tensor \cdots
\tensor A_n}} \hastype A_1 \tensor \cdots \tensor A_n \tfun A_i$ by first
constructing a family of ``sinks'' $\sink_A \hastype A \tfun I \defeq
\singleton(\engine {A \fun I} P)$ where $I = \emptyset$ and $P(a) = \iend$
for each $a$ in its domain and then defining e.g. $\Pi_{1,A \tensor B}
\hastype A \tensor B \tfun A \defeq \id A \tensor \sink_B$.

\section{Game nets for ICA}\label{sec:gamn}
The structure of a \netcat token is determined by the number of registers $r$
and the message size $r_m$, which are globally fixed. To implement
game-semantic machines we require four message components: a port name, two pointer
names, and a data fragment, meaning that $r_m = 3$.
We choose $r = 4$, to get an additional register for temporary thread values to
work with. From this point on, messages in nets and traces will be restricted
to this form.

The message structure is intended to capture the structure of a
move when game semantics is expressed in the nominal model. The port name is
the move, the first name is the ``point'' whereas the second name is the
``butt'' of a justification arrow, and the data is the value of the move. This
direct and abstract encoding of the justification pointer as names is quite
different to that used in PAM and in other GOI-based token machines. In PAM the
pointer is represented by a sequence of integers encoding the hereditary
justification of the move, which is a snap-shot of the computational causal
history of the move, just like in GOI-based machines. Such encodings have an
immediate negative consequence, as tokens can become impractically large in
complex computations, especially involving recursion. Large tokens entail
not only significant communication overheads but also the computational overheads of
decoding their structure. A subtler negative consequence of such an encoding is
that it makes supporting the semantic structures required to interpret state
and concurrency needlessly complicated and inefficient. The nominal
representation is simple and compact, and efficiently exploits local machine
memory (heap) in a way that previous abstract machines, of a ``functional''
nature, do not.

The price that we pay is a failure of compositionality, which we will illustrate
shortly. The rest of the section will show how compositionality can be restored
without substantially changing the HRAM framework. If in HRAM nets
compositionality is ``plug-and-play'', as apparent from its compact-closed
structure, \emph{Game Abstract Machine} (GAM) composition must be mediated by a
family of operators which are themselves HRAMs.

In this simple motivating example it is assumed that the reader is familiar
with game semantics, and several of the notions to be introduced formally in
the next sub-sections are anticipated. We trust that this will be not
confusing.

Let $S$ be a HRAM representing the game semantic model for the \emph{successor}
operation $S:int\rightarrow int$. The HRAM net in Fig.~\ref{fig:nlhc}
represents a (failed) attempt to construct an interpretation for the term
$x:int\vdash S(S(x)):int$ in a context $C[-_{int}]:int$. This is the standard
way of composing GOI-like machines.

\begin{figure}
\centering

  \begin{tikzpicture}
    \node[scale=0.9]{
      \setlength{\unitlength}{4144sp}%
\begingroup\makeatletter\ifx\SetFigFont\undefined%
\gdef\SetFigFont#1#2#3#4#5{%
  \reset@font\fontsize{#1}{#2pt}%
  \fontfamily{#3}\fontseries{#4}\fontshape{#5}%
  \selectfont}%
\fi\endgroup%
\begin{picture}(3084,1194)(214,-1153)
\thinlines
{\color[rgb]{0,0,0}\put(1531,-1141){\framebox(450,450){}}
}%
{\color[rgb]{0,0,0}\put(2611,-916){\vector(-1, 0){630}}
}%
{\color[rgb]{0,0,0}\put(451,-916){\line(-1, 0){225}}
\put(226,-916){\line( 0, 1){720}}
\put(226,-196){\vector( 1, 0){2385}}
}%
{\color[rgb]{0,0,0}\put(451,-1141){\framebox(450,450){}}
}%
{\color[rgb]{0,0,0}\put(2611,-1141){\framebox(450,1170){}}
}%
{\color[rgb]{0,0,0}\put(3286,-556){\vector(-1, 0){225}}
}%
{\color[rgb]{0,0,0}\put(1531,-916){\vector(-1, 0){630}}
}%
\put(1756,-961){\makebox(0,0)[b]{\smash{{\SetFigFont{10}{12.0}{\familydefault}{\mddefault}{\updefault}{\color[rgb]{0,0,0}$S$}%
}}}}
\put(1756,-106){\makebox(0,0)[b]{\smash{{\SetFigFont{10}{12.0}{\familydefault}{\mddefault}{\updefault}{\color[rgb]{0,0,0}$(p_2,p_3)$}%
}}}}
\put(676,-961){\makebox(0,0)[b]{\smash{{\SetFigFont{10}{12.0}{\familydefault}{\mddefault}{\updefault}{\color[rgb]{0,0,0}$S$}%
}}}}
\put(2836,-601){\makebox(0,0)[b]{\smash{{\SetFigFont{10}{12.0}{\familydefault}{\mddefault}{\updefault}{\color[rgb]{0,0,0}$C[-]$}%
}}}}
\put(2296,-826){\makebox(0,0)[b]{\smash{{\SetFigFont{10}{12.0}{\familydefault}{\mddefault}{\updefault}{\color[rgb]{0,0,0}$(p_0,p_1)$}%
}}}}
\put(1216,-826){\makebox(0,0)[b]{\smash{{\SetFigFont{10}{12.0}{\familydefault}{\mddefault}{\updefault}{\color[rgb]{0,0,0}$(p_1,p_2)$}%
}}}}
\end{picture}%
    };
  \end{tikzpicture}

\caption{Non-locality of names in HRAM composition}
\label{fig:nlhc}
\end{figure}

The labels along the edges of the HRAM net trace a token $(a, p_0, p_1, d)$
sent by the context $C[-]$ in order to evaluate the term. We elide $a$ and $d$,
which are irrelevant, to keep the diagram uncluttered. The token is received by
$S$ and propagated to the other $S$ HRAM, this time with tokens $(p_1, p_2)$.
This trace of events $(p_0,p_1)\snoc(p_1,p_2)$ corresponds to the existence of
a justification pointer from the second action to the first in the game model.
The essential correctness invariant for a well-formed trace representing a
game-semantic play is that each token consists of a \emph{known} name and a
\emph{fresh} name (if locally created, or \emph{unknown} if externally
created). However, the second $S$ machine will respond with $(p_2,p_3)$ to
$(p_1,p_2)$, leading to a situation where $C[-]$ receives a token formed from
two unknown tokens.

In game semantics, the composition of $(p_0,p_1)\snoc(p_1,p_2)$ with
$(p_1,p_2)\snoc(p_2,p_3)$ should lead to $(p_0,p_1)\snoc(p_1,p_3)$, as
justification pointers are ``extended'' so that they never point into a move
hidden through composition. This is precisely what the composition operator, a
specialised HRAM, will be designed to achieve.

\subsection{Game abstract machines (GAM) and nets}

\begin{definition}\label{def:gameint}
We define a \emph{game interface} (cf. \emph{arena}) as a tuple
$
\mf A = \ginterface {A}{\qst A}{\ini A}{\enables A}$
where 
\begin{itemize}
  \item $A \in \interfaces$ is an interface. For game interfaces $\mf A, \mf B, \mf C$
  we will write $A,B,C$ and so on for their underlying interfaces.
  \item The set of ports is partitioned into a subset of question port names $\qst A$ 
and one of answer port names $\ans A$, $\qst A\uplus\ans A=\supp(A)$.
  \item The set of initial port names $\ini A$ is a subset of the $\lO$-labelled
        question ports.
  \item The enabling relation $\enables A$ relates question port names to
        non-initial port names such that if $a \enables A a'$ for port names $a
        \in \qst A$ with $\port la \in A$ and $a' \in \supp(A) \setminus \ini A$
        with $\port {l'}{a'}\in A$, then $l \neq l'$.
\end{itemize}
\end{definition}
For notational consistency, write $\opp A \defeq \supp(A \at O)$ and
$\prop A \defeq \supp(A \at P)$.
Call the set of all game interfaces $\ginterfaces$.
Game interfaces are equivariant, $ \pi \vdash \mf A \pneq \mf B $,
  if and only if
$\pi \vdash A  \pneq B$, $ \{\pi(a) \mid a \in \qst A\} = \qst B$,
$\{\pi(a) \mid a \in \ini A\}  = \ini B$ and $\{\tuple{\pi(a),\pi(a')}\mid a \enables A a'\}  = \app \enables B $.
\begin{definition}\label{def:gamcon}
  For game interfaces (with disjoint sets of port names) $\mf A$ and $\mf B$, we define:
  \begin{align*}
       \mf A \tensor \mf B & \defeq
       \ginterface {A \tensor B} {\qst A {\cup} \qst B}
                   {\ini A {\cup} \ini B}{\enables A {\cup} \enables B}
    \\ \mf A \fun \mf B & \defeq
       \ginterface {A \fun B} {\qst A {\cup} \qst B}
                   {\ini B} {\enables A {\cup} \enables B {\cup}
                             (\ini B \times \ini A)}.
  \end{align*}
\end{definition}
A \emph{GAM net} is a tuple $G = \gnet S {\mf A} \in \nets \tprod \ginterfaces$
consisting of a net and a game interface such that $S = \net {\many E} \conn
A$, i.e.  the interface of the game net is the same as that of the game
interface.
The denotational semantics of a GAM net $G = \gnet S {\mf A}$
is just that of the underlying HRAM net:
$
\denot{G} \defeq \denot{S}
$.

\subsection{Game traces}
To be able to use game semantics as the specification for game nets we define
the usual legality conditions on traces, following~\cite{DBLP:journals/entcs/GabbayG12}.

\begin{definition}
  The coabstracted and free pointers $\cp$ and $\fp \in \traces \tfun
  \pset(\pointers)$ are:
  \begin{align*}
       \cp(\emptytrace) & \defeq \emptyset
    \\ \cp(s\snoc\lmsg lap{p'}d) & \defeq \cp(s) \cup \{p'\}
    \\ \fp(\emptytrace) & \defeq \emptyset
    \\ \fp(s\snoc\lmsg lap{p'}d & \defeq \fp(s) \cup (\{p\} \setminus \cp(s))
  \end{align*}

\end{definition}
  The pointers of a trace $\ptrs(s) = \cp(s) \cup \fp(s)$.

\begin{definition}
  Define $\enabled_{\mf A} \in \traces_A \tfun \pset(\supp(A) \tprod \pointers)$ inductively as follows:
  \begin{align*}
       \enabled_{\mf A}(\emptytrace) & \defeq \emptyset
    \\ \enabled_{\mf A}(s\snoc \lmsg lap{p'}d) & \defeq \enabled_{\mf A}(s)
      \cup \{ \tuple{a',p'} \mid a \enables A a' \}
  \end{align*}
\end{definition}

\begin{definition}
  We define the following relations over traces:
  \begin{itemize}
    \item Write $s' \leq s$ if and only if there is a trace $s_1$ such that
      $s' \snoc s_1 = s$, i.e. $s'$ is a \emph{prefix} of $s$.
    \item Write $s' \leq s$ if and only if there are traces $s_1, s_2$ such that
      $s_1 \snoc s' \snoc s_2 = s$, i.e. $s'$ is a \emph{segment} of $s$.
  \end{itemize}
\end{definition}

\begin{definition}
  For an arena $\mf A$ and a trace $s \in \traces_A$, we define the following
  legality conditions:
  \begin{itemize}
    \item $s$ has \emph{unique pointers} when
      $s' \snoc \lmsg lap{p'}d \leq s$ implies $p' \notin \ptrs(s')$.

    \item $s$ is \emph{correctly labelled} when
      $\lmsg lap{p'}d \subseteq s$ implies $a \in \supp(A^{(l)})$.

    \item $s$ is \emph{justified} when $s' \snoc \lmsg l ap{p'}d \leq s$ and $a
      \notin \ini A$ implies $\tuple{a,p} \in \enabled_{\mf A}(s')$.

    \item $s$ is \emph{well-opened} when $s' \snoc \lmsg
      lap{p'}d\leq s$ implies $a \in \ini A$ and $s' = \emptytrace$.

    \item $s$ is \emph{strictly scoped} when $\lmsg l ap{p'}d \snoc s' \subseteq s$ with
    $a \in \ans A$ implies $p \notin \fp(s')$.

    \item $s$ is \emph{strictly nested} when
      $\lmsg {l_1}{a_1}p{p'}{d_1} \snoc s' \snoc \lmsg {l_2}{a_2}{p'}{p''}{d_2} \snoc
       \\ s'' \snoc \lmsg {l_3}{a_3}{p'}{p'''}{d_3}
       \subseteq s$ implies $\lmsg {l_4}{a_4}{p''}{-}{d_4} \subseteq s''$ for
       port names $a_1,a_2 \in \qst A$ and $a_3,a_4 \in \ans A$.

    \item $s$ is \emph{alternating} when $\tuple {l_1,m_1} \snoc \tuple{l_2,m_2}
      \subseteq s$ implies $l_1 \neq l_2$.
  \end{itemize}
\end{definition}

\begin{definition}
  We say that a question message $\alpha = \lmsg lap{p'}d$ ($a \in \qst A$) is
  \emph{pending} in a trace $s = s_1 \snoc \alpha \snoc s_2$ if and only if
  there is no answer $\alpha' = \lmsg {l'}{a'}{p'}{p''}{d'} \subseteq s_2$ ($a'
  \in \ans A$), i.e. the question has not been answered.
\end{definition}

Write $P_{\mf A}$ for the subset of $\traces_A$ consisting of the traces that
have unique pointers, are correctly labelled, justified, strictly scoped and
strictly nested.

For a set of traces $P$, write $P\cond{alt}$ for the subset consisting
of only alternating traces, and $P\cond{st}$ (for single-threaded) for
the subset consisting of only well-opened traces.

\begin{definition}
  If $s \in \traces$ and $X \subseteq \pointers$, define the \emph{hereditarily
  justified} trace $s \hereditary X$, where inductively $\tuple{s',X'} = s \hereditary X$:

  \begin{align*}
       \emptytrace \hereditary X & \defeq \tuple{\emptytrace, X}
    \\ s \snoc \lmsg lap{p'}d \hereditary X & \defeq \tuple{s'\snoc \lmsg lap{p'}d, B \cup \{p'\}}
      & \text{ if } p \in X'
    \\ s \snoc \lmsg lap{p'}d \hereditary X & \defeq \tuple{s',B}
      & \text{ if } p \notin X'
  \end{align*}
\end{definition}

Write $s \hereditary X$ for $s'$ when $s \hereditary X = \tuple{s',X'}$ when it is
convenient.

\subsection{Copycat}
The quintessential game-semantic behaviour is that of the \emph{copy-cat strategy}, as it appears in various guises in the representation of all structural morphisms of any category of strategies. A copy-cat not only replicates the behaviour of its Opponent in terms of moves, but also in terms of justification structures. Because of this, the copy-cat strategy needs to be either history-sensitive (stateful) or the justification information needs to be carried along with the token. We take the former approach, in contrast to IAM and other GOI-inspired machines.

Consider the identity (or copycat) strategy on $\com \fun \com$, where $\com $ is a two-move arena (one question, one answer).
A typical play
may look as in Fig.~\ref{fig:tipp}.
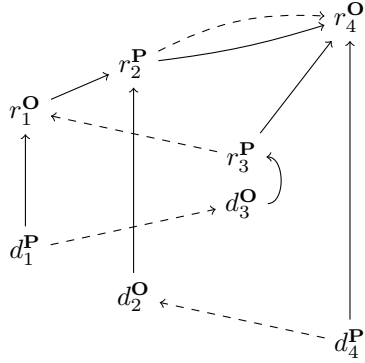
\begin{figure}
\begin{center}
  \begin{tikzpicture}[descr/.style={text=black}]
    \matrix [column sep=0.5em]
    {
         \node(c1){$(\com_1$}; & \node(c2){$\com_2)$}; & \node(c3){$(\com_3$}; & \node(c4){$\com_4)$};
      \\                       &                       &                       & \node(r4){$r_4^\lO$};
      \\                       & \node(r2){$r_2^\lP$}; &                       &
      \\ \node(r1){$r_1^\lO$}; &                       &                       &
      \\                       &                       & \node(r3){$r_3^\lP$}; &
      \\                       &                       & \node(d3){$d_3^\lO$}; &
      \\ \node(d1){$d_1^\lP$}; &                       &                       &
      \\                       & \node(d2){$d_2^\lO$}; &                       &
      \\                       &                       &                       & \node(d4){$d_4^\lP$};
    \\ };
    \path[color=white]
      (c1) edge node[descr]{$\Rightarrow$} (c2)
      (c2) edge node[descr]{$\rightarrow$} (c3)
      (c3) edge node[descr]{$\Rightarrow$} (c4)
      ;
    \path[->] 
      (r2) edge [bend right=5] (r4)
      (r1) edge (r2)
      (r3) edge (r4)
      (d3) edge [bend right=90] (r3)
      (d1) edge (r1)
      (d2) edge (r2)
      (d4) edge (r4)
    ;
    \path[dashed,->] 
      (r2) edge [bend left=15] (r4)
      (r3) edge (r1)
      (d1) edge (d3)
      (d4) edge (d2)
    ;
  \end{tikzpicture}
\end{center}
\vspace{-.5cm}
\caption{A typical play for copycat}
\label{fig:tipp}
\end{figure}
The full lines represent justification pointers, and the trace (play) is represented nominally as
\[
(r_4,p_0,p_1)\snoc
(r_2,p_1,p_2)\snoc
(r_1,p_2,p_3)\snoc
(r_3,p_1,p_4)\snoc
(d_3,p_4)\cdots
\]
To preserve the justification structure, a copycat engine only needs to store ``copycat links'', which are shown
as dashed lines in the diagram between question moves. 
In this instance,
for an input on $r_4$, a heap value mapping a freshly created $p_2$
(the pointer to $r_2$) to $p_1$ (the pointer from $r_4$) is added.

The reason for mapping $p_2$ to $p_1$ becomes clear when the engine later gets
an input on $r_1$ with pointers $p_2$ and $p_3$. It can then replicate the move
to $r_3$, but using $p_1$ as a justifier. By following the $p_2$ pointer
in the heap it gets $p_1$ so it can produce $(r_3,p_1,p_4)$, where $p_4$ is a fresh heap value mapping to $p_3$. When receiving
an answer, i.e.  a $d$ move, the copycat link can be dereferenced and then
\emph{discarded} from the heap.

The following HRAM macro-instructions are useful in defining copy-cat machines to, respectively, handle the pointers in an initial question, a non-initial question and an answer:
\begin{align*}
     \icopyi & \defeq \niflip\ 0,1 \seq 1 \leftarrow \nialloc\ 0,3
  \\ \icopyq & \defeq 1 \leftarrow \nialloc\ 1,3 \seq 0,3 \leftarrow \niget\ 0
  \\ \icopya & \defeq \niflip\ 0,1 \seq 0,3 \leftarrow \niget 1 \seq \nifree\ 1
\end{align*}
For game interfaces $\mf A$ and $\mf{A'}$ such that $\pi \vdash \mf A \pneq \mf{A'}$, we define a
generalised copycat engine as $ \copycat_{C,\pi,\mf A} = \engine {A \fun A'} P$, where:
\begin{align*}
  P & \defeq
  \{ q_2 \mapsto C \seq \ileap {q_1}
      \mid q_2 \in \ini {A'} \wedge q_1 = \pi^{-1} (q_2) \}
  \\ & \cup
  \{ q_2 \mapsto \icopyq \seq \ileap {q_1}
      \mid q_2 \in (\opp{A'}\cap\qst{A'}) \setminus \ini{A'} \wedge q_1 = \pi^{-1} (q_2) \}
  \\ & \cup
  \{ a_2 \mapsto \icopya \seq \ileap {a_1}
      \mid a_2 \in \opp{A'}\cap\ans{A'} \wedge a_1 = \pi^{-1} (a_2) \}
  \\ & \cup
  \{ q_1 \mapsto \icopyq \seq \ileap {q_2}
      \mid q_1 \in \opp A\cap\qst A \wedge q_2 = \pi (q_1) \}
  \\ & \cup
  \{ a_1 \mapsto \icopya \seq \ileap {a_2}
      \mid a_1 \in \opp A\cap\ans A \wedge a_2 = \pi (a_1) \}
\end{align*}
This copycat engine is parametrised with an initial instruction $C$, which is
run when receiving an initial question. The engine for an ordinary copycat,
i.e. the identity of games, is $\copycat_{\icopyi,\pi,\mf A}$.
By slight abuse of notation, write $\copycat_{\mf A}$ for the
singleton copycat game net $\gnet {\singleton(\copycat_{\icopyi,\pi,\mf A})}{\mf A \fun \pi\cdot\mf {A}}$.

Following~\cite{DBLP:journals/entcs/GabbayG12}, we define a partial order $\leq$ over polarities, $\polarities$, as $ \lO \leq \lO, \lO \leq \lP, \lP \leq \lP $ and a preorder $\swapping$ over traces from $P_{\mf A}$ to be the least
  reflexive and transitive such that if $l_1 \leq l_2$ then  
\begin{multline*}
s_1 \snoc \lmsg{l_1}{a_1}{p_1}{p_1'}{d_1} \snoc \lmsg{l_2}{a_2}{p_2}{p_2'}{d_2} \snoc s_2 \\ \swapping
                s_1 \snoc \lmsg{l_2}{a_2}{p_2}{p_2'}{d_2} \snoc \lmsg{l_1}{a_1}{p_1}{p_1'}{d_1} \snoc s_2,
\end{multline*}
where $p_1'\neq p_2$. A set of traces $S \subseteq P_{\mf A}$ is \emph{saturated} if and only
  if, for $s,s' \in P_{\mf A}$, $s' \swapping s$ and $s \in S$ implies $s' \in
  S$.  If $S \subseteq P_{\mf A}$ is a set of traces, let $\sat(S)$ be the smallest
  saturated set of traces that contains $S$.

The usual definition of the copycat strategy (in the alternating and
single-threaded setting) as a set of traces is
\[
  \Copycat_{\mf A,\mf A'}\cond{st,alt} \defeq \{ s \in P_{\mf A \fun \mf {A'}}\cond{st,alt}
    \mid \forall s' \leq_{\text{even}} s. \: {s'}\dual \hereditary A \pnptreq s' \hereditary A' \}
\]
\begin{definition}
A set of traces $S_1$ is \emph{$\lP$-closed} with respect to a set of traces
$S_2$ if and only if $s' \in S_1 \cap S_2$ and $s = s' \snoc \lmsg \lP ap{p'}d
\in S_1$ implies $s \in S_2$.
\end{definition}
The intuition of $\lP$-closure is that if the trace $s'$ is
``legal'' according to $S_2$, then any outputs that can occur after $s'$ in
$S_1$ are also legal.

\begin{definition}\label{def:gimpl}
  We say that a GAM net $f$ \emph{implements} a set of traces $S$ if and
  only if $S \subseteq \denot f$ and $\denot f$ is $\lP$-closed with respect
  to $S$.
\end{definition}
  This is the form of the statements of correctness for game nets that we want;
  it certifies that the net $f$ can accommodate all traces in $S$ and,
  furthermore, that it only produces legal outputs when given valid inputs.

The main result of this section establishes the correctness of the GAM for copycat.

\begin{theorem} \label{thm:copycatimplements}
  ${\copycat_{\pi,\mf A}}$ implements
  $\Copycat_{\mf A, \pi\cdot\mf {A}}$.
\end{theorem}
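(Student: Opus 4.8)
The predicate \emph{implements} (Definition~\ref{def:gimpl}) unfolds into two obligations for $\copycat_{\pi,\mf A}$, whose denotation is that of its underlying singleton net $\singleton(\copycat_{\icopyi,\pi,\mf A})$: first, the \emph{inclusion} $\Copycat_{\mf A,\pi\cdot\mf A} \subseteq \denot{\copycat_{\pi,\mf A}}$; second, that $\denot{\copycat_{\pi,\mf A}}$ is \emph{$\lP$-closed} with respect to $\Copycat_{\mf A,\pi\cdot\mf A}$. I would establish both from a single invariant $\Phi$ relating a reachable engine configuration $\engineconf {\many t} h$ (together with the multiset of pending messages $\many m$) to the trace $s$ emitted so far. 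The crux of $\Phi$ is a precise reading of the heap as a set of \emph{copycat links}: the three macros maintain, for each question of $s$ that is still pending, exactly one heap entry translating that question's name across the two arena copies. Such links are created on a question by $\nialloc$ (in $\icopyi$ and $\icopyq$), consulted on a child move by $\niget$ (in $\icopyq$), and dereferenced-then-discarded on an answer by $\niget$ followed by $\nifree$ (in $\icopya$). Concretely $\Phi(s,\engineconf{\many t}h,\many m)$ would assert that (a) $s \in \Copycat_{\mf A,\pi\cdot\mf A}$, (b) $h$ is precisely the copycat links for the pending questions of $s$, and (c) $\many t$ and $\many m$ account for the Opponent moves whose copied reply is still in flight. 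The bijection on pointer names induced by $h$ is exactly the witness making the copycat condition ${s'}\dual \hereditary A \pnptreq s' \hereditary A'$ hold, with $\pi$ supplying the port-name part.

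For the inclusion I would induct on the even-length prefixes of a trace $s \in \Copycat_{\mf A,\pi\cdot\mf A}$, building a run from $\initial(\copycat_{\pi,\mf A})$ to a configuration $n$ with $\initial(\copycat_{\pi,\mf A}) \netstep{s} n$ and $\Phi$ holding throughout. Each inductive step consumes an Opponent move and, after the silent spark chain prescribed by the relevant clause of the port map, emits the matching Proponent move on the other copy; the operational rules of Figure~\ref{fig:hramop} show that $\icopyi$, $\icopyq$ and $\icopya$ produce precisely the point/butt pair that the copycat condition demands for that move. The one subtlety is that $\nialloc$ picks a \emph{locally fresh} pointer name, so the run reproduces the names of $s$ only up to a permutation of $\pointers$; since both $\denot{-}$ and $\Copycat$ are taken up to equivariance ($\pneq$ on traces, $\pnptreq$ inside the copycat condition), matching up to such a permutation is exactly what is required.

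For $\lP$-closure, suppose $s' \in \Copycat_{\mf A,\pi\cdot\mf A} \cap \denot{\copycat_{\pi,\mf A}}$ and $s = s' \snoc \lmsg\lP ap{p'}d \in \denot{\copycat_{\pi,\mf A}}$; I must show $s \in \Copycat_{\mf A,\pi\cdot\mf A}$. By $\Phi$, the configuration reached after $s'$ has its heap equal to the pending-question links, and the only available output transitions are the sparks emitted by the three macros. A case analysis on which macro fires shows that the emitted point $p$ is always a pointer already present in $s'$ or its $\niget$-translation through an existing link, while any genuinely new name in the move—the butt of a copied question—is the fresh pointer created by $\nialloc$; hence $s$ remains correctly labelled, justified, uniquely pointed, strictly scoped and strictly nested, and the copycat condition is re-established at the new even prefix. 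This is the soundness heart of the argument: the heap links are exactly what force the emitted justifier to be the ``extended'' pointer the copycat condition requires rather than a dangling name. Non-alternation and multiple open threads are then accommodated by observing that the net is genuinely concurrent—independent Opponent moves spawn independent threads over a shared heap—so $\denot{\copycat_{\pi,\mf A}}$ is closed under the swapping preorder $\swapping$ and under well-opening; this matches the saturated, multi-threaded $\Copycat_{\mf A,\pi\cdot\mf A}$ and can be read directly off the CHAM-style rules of Figure~\ref{fig:netop}.

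The main obstacle is the bookkeeping in $\Phi$: I must show that the pointer bijection recorded by the heap is precisely the one witnessing ${s'}\dual \hereditary A \pnptreq s' \hereditary A'$, and that every macro updates this bijection coherently—extending it on each question, restricting it on each answer via $\nifree$—while keeping the two hereditarily justified projections in step. Ensuring that discarding a link on an answer never severs a still-needed link, and that freshness keeps pointers unique across \emph{both} copies simultaneously, is the delicate part; once $\Phi$ is nailed down, the remaining legality conditions and the closure properties follow routinely.
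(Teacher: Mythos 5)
Your proposal is correct and follows essentially the same route as the paper, which factors the argument into exactly the pieces you describe: an induction establishing the inclusion for single-threaded alternating traces via a ``copycat heap'' invariant (Lemma~\ref{thm:copycatsubset} and Definition~\ref{def:copycatheap}), a characterisation of the reachable configurations' ready-to-send messages for $\lP$-closure (Lemma~\ref{lem:copycatready}, Theorems~\ref{thm:copycatpmoves} and~\ref{thm:copycatplegal}), and closure of net denotations under swapping and hereditary-justification unravelling to pass from the $\cond{st,alt}$ case to the full saturated, multi-threaded strategy (Lemmas~\ref{thm:swappings} and~\ref{thm:unravelling}). The only difference is presentational: you bundle into a single invariant $\Phi$ what the paper states as separate lemmas, and your acknowledged ``delicate part'' (safe deallocation on answers, freshness across both copies) is precisely what the paper's strict-nesting argument in Lemma~\ref{lem:copycatready} discharges.
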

This is a direct corollary of the
Lem.~\ref{thm:swappings},\ref{thm:unravelling},\ref{thm:copycatsubset},
\ref{thm:copycatpmoves}, and \ref{thm:copycatplegal} given below.

\begin{lemma} \label{lem:extramessage}
  If $n_1 = \netconf {\many {e \hastype E}} {\many m}$
  and $n_1' = \netconf {\many {e' \hastype E}} {\many m'}$
  are net configurations of a net
  $f = \net {\many E} \conn A$, and
  $n_1 \netstep {(x)} n_1'$
    ($(x) \in \{\bullet\} \cup (\polarities \tprod \msgs_{\supp(A)}$)
  then
  $n_2 \netstep {(x)} n_2'$ where
  $n_2 = \netconf {\many {e \hastype E}} {\many m \uplus \{m\}}$ and
  $n_2' = \netconf {\many {e' \hastype E}} {\many m' \uplus \{m\}}$.
\end{lemma}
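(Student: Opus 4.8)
The plan is to read the statement as a \emph{frame} property of the net transition relation: enlarging the solution (the multiset of pending messages) by one extra message $m$ cannot disable any transition, because the rules of Fig.~\ref{fig:netop} only ever touch a bounded, identifiable part of the solution, with the remaining messages acting as an untouched context. I would therefore proceed by a single case analysis on which rule of Fig.~\ref{fig:netop} was used to derive $n_1 \netstep{(x)} n_1'$, and in each case exhibit the same rule instance firing from $n_2$ to $n_2'$.

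The five rules split into three groups according to their effect on the solution. First, the silent engine rule leaves the solution fixed, so $\many m' = \many m$; its premise is an engine-level step $e \enginestep E \conn {} e'$ which does not inspect the net's solution at all, hence it still fires, and the two solutions simply carry the extra $m$ on both sides. Second, the two rules that grow the solution --- the engine output rule and the external input rule --- have side conditions (an engine output step, respectively $\port \lO a \in A$) that are likewise independent of the solution's contents; so they fire from $\many m \uplus \{m\}$ and produce a solution which, by commutativity and associativity of $\uplus$, is exactly $\many m' \uplus \{m\}$.

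The only group that requires a moment's care is the pair of rules that \emph{consume} a message --- the engine input rule and the external output rule. Here $\many m = \{m_0\} \uplus \many m'$ for the consumed message $m_0$, and I would re-expose $m_0$ in the enlarged solution by the rewriting
\[
  \many m \uplus \{m\} = \{m_0\} \uplus (\many m' \uplus \{m\}),
\]
so that the same rule consumes $m_0$ and leaves $\many m' \uplus \{m\}$, as required. The subtle point, and the one place I expect to need explicit care rather than hand-waving, is that when $m$ happens to coincide as a value with $m_0$ the decomposition could look ambiguous; but because the solution is a \emph{multiset} and $\uplus$ records multiplicities, the added occurrence and the consumed occurrence remain distinct and no bookkeeping is lost. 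This is precisely where the CHAM-style multiset formulation, rather than a plain set of messages, makes the argument go through cleanly, so it is the one equation I would spell out rather than leave to the reader.
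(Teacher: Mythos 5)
Your proposal is correct and follows essentially the same route as the paper's proof, which likewise proceeds by an exhaustive case analysis over the net transition rules (organised there as cases on the label $(x)$ with subcases on the engine label $(y)$ for silent steps) and observes in each case that the extra message is inert context in the multiset. Your explicit remark about re-exposing the consumed message via $\many m \uplus \{m\} = \{m_0\} \uplus (\many m' \uplus \{m\})$ and the multiset multiplicity point is a slightly more careful spelling-out of a step the paper leaves implicit, but the argument is the same.
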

\begin{proof}
  By cases on $(x)$:
  \begin{itemize}
    \item
      If $(x) = \bullet$, then
      $\many {e \hastype E} = \{e \hastype E\} \cup \many {e_0 \hastype E_0}$,
      $e \enginestep E \conn {(y)} e'$ for some $(y)$,
      $\many {e' \hastype E} = \{e' \hastype E\} \cup \many {e_0' \hastype E_0}$.
      We have three cases for $(y)$:
      \begin{itemize}
        \item
          If $(y) = \bullet$, then
          $e \enginestep E \conn {} {e'}$ and
          $\many m' = \many m$. Then we also have
          $n_2 =
            \netconf {\{e \hastype E\} \cup \many {e_0 \hastype E_0}}
                     {\many m \uplus \{m\}}
            \netstep {}
            \netconf {\{e' \hastype E\} \cup \many {e_0' \hastype E_0}}
                     {\many m \uplus \{m\}}
          = n_2'$.
        \item
          If $(y) = \tuple{\lP,m'}$, then
          $e \enginestep E \conn {m'} {e'}$ and
          $\many m' = \{m'\} \cup \many m$. Then we also have
          $n_2 =
            \netconf {\{e \hastype E\} \cup \many {e_0 \hastype E_0}}
                     {\many m \uplus \{m\}}
            \netstep {}
            \netconf {\{e' \hastype E\} \cup \many {e_0' \hastype E_0}}
                     {\{m'\} \uplus \many m \uplus \{m\}}
          = n_2'$.
        \item
          If $(y) = \tuple{\lO,m'}$, then
          $e \enginestep E \conn {\overline{m'}} {e'}$ and
          $\many m = \{m'\} \uplus \many m'$. Then we also have
          $n_2 =
            \netconf {\{e \hastype E\} \cup \many {e_0 \hastype E_0}}
                     {\{m'\} \uplus \many m' \uplus \{m\}}
            \netstep {}
            \netconf {\{e' \hastype E\} \cup \many {e_0' \hastype E_0}}
                     {\many m' \uplus \{m\}}
          = n_2'$.
      \end{itemize}
    \item
      If $(x) = \tuple {\lP, m'}$, then
      $\many {e' \hastype E} = \many {e \hastype E}$ and
      $\many m = \{m'\} \uplus \many m'$. Then we also have
      $n_2 =
        \netconf {\many {e \hastype E}}
                 {\{m'\} \uplus \many m' \uplus \{m\}}
        \netstep {m'}
        \netconf {\many {e \hastype E}}
                 {\many m' \uplus \{m\}}
      = n_2'$.
    \item
      If $(x) = \tuple {\lO, m'}$, where $m' = \msg ap{p'}d$ then
      $\many {e' \hastype E} = \many {e \hastype E}$ and
      $\many m' = \{\msg {\conn(a)}p{p'}d\} \uplus \many m$. Then we also have
      $n_2 =
        \netconf {\many {e \hastype E}}
                 {\many m \uplus \{m\}}
        \netstep {\overline{m'}}
        \netconf {\many {e \hastype E}}
                 {\{\msg {\conn(a)}p{p'}d\} \uplus \many m \uplus \{m\}}
      = n_2'$.
  \end{itemize}
\end{proof}
\begin{lemma} \label{thm:swappings} If $f$ is a net and $s$ a trace, then
  \begin{enumerate}
    \item
      $s = s_1 \snoc \tuple{l,m_1} \snoc \tuple{\lO,m} \snoc s_2 \in \denot f$
      with witness $\initial(f) \netstep s n$ implies
      $s' = s_1 \snoc \tuple{\lO,m} \snoc \tuple{l,m_1} \snoc s_2 \in \denot f$
      with $\initial(f) \netstep {s'} n$ and
    \item
      $s = s_1 \snoc \tuple{\lP,m} \snoc \tuple{l,m_1} \snoc s_2 \in \denot f$
      with witness $\initial(f) \netstep s n$ implies
      $s' = s_1 \snoc \tuple{l,m_1} \snoc \tuple{\lP,m} \snoc s_2 \in \denot f$
      with $\initial(f) \netstep {s'} n$.
  \end{enumerate}
\end{lemma}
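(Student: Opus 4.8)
The plan is to exploit the fact that, at the level of net configurations, the two observable transitions of Fig.~\ref{fig:netop} behave in opposite, asynchronous ways. A net input (a $\lO$-labelled observable step) simply deposits a message into the solution and carries no precondition on the current configuration beyond the static side condition $\port \lO a \in A$; a net output (a $\lP$-labelled observable step) merely consumes a message that is already pending in the solution. Hence inputs can always be performed earlier and outputs can always be performed later, independently of everything else. The device that lets me slide such a step across the intervening block of (silent and observable) transitions is Lemma~\ref{lem:extramessage}: carrying one extra pending message through any transition leaves that transition enabled and its effect unchanged. Crucially, I must preserve not just membership in $\denot f$ but the \emph{same} terminal configuration $n$, so in each case I will reroute only a middle block of the witness and re-converge.

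For part~(1), I fix the witness $\initial(f)\netstep{s} n$ and isolate the fragment $n_b \netstep{\tuple{l,m_1}} n_c \netstep{}^* n_d \netstep{\tuple{\lO,m}} n_e$, absorbing the silent steps before $\tuple{l,m_1}$ into the prefix and those after $\tuple{\lO,m}$ into the suffix, which are shared with the $s'$ witness. Writing $m = \rmsg a{\many d}$ with $\port \lO a \in A$, the input rule shows that $n_e$ is exactly $n_d$ with one extra copy of $\rmsg{\conn(a)}{\many d}$ in its solution. Since the same input step is enabled at $n_b$, I perform it first, reaching the configuration $n_b^{+}$ obtained from $n_b$ by adding that message. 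I then replay the block $n_b \netstep{\tuple{l,m_1}} n_c \netstep{}^* n_d$ from $n_b^{+}$, applying Lemma~\ref{lem:extramessage} once per step to get $n_b^{+} \netstep{\tuple{l,m_1}} n_c^{+} \netstep{}^* n_d^{+}$; by construction $n_d^{+} = n_e$, so the two computations re-converge and the witness for $s'$ terminates at the same $n$.

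Part~(2) is dual. Here I isolate $n_b \netstep{\tuple{\lP,m}} n_c \netstep{}^* n_d \netstep{\tuple{l,m_1}} n_e$. The output rule gives that $n_b$ is $n_c$ with one extra copy of $m = \rmsg a{\many d}$ (where $\port \lP a \in A$) in its solution. Instead of emitting $m$ immediately I keep it pending and carry it through the following block: repeated use of Lemma~\ref{lem:extramessage} on $n_c \netstep{}^* n_d \netstep{\tuple{l,m_1}} n_e$ yields $n_c^{+} \netstep{}^* n_d^{+} \netstep{\tuple{l,m_1}} n_e^{+}$, where $n_c^{+} = n_b$. Finally the deferred output is enabled at $n_e^{+}$, since the extra copy of $m$ is present, so $n_e^{+} \netstep{\tuple{\lP,m}} n_e$; the computations re-converge at $n_e$ and thus at $n$, realising the witness for $s' = s_1 \snoc \tuple{l,m_1} \snoc \tuple{\lP,m} \snoc s_2$.

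The routine content is this bookkeeping, and I expect the only genuinely delicate point to be the multiset accounting when $m$ and $m_1$ coincide, or when $\tuple{l,m_1}$ is itself an output that consumes a copy of $m$. This is precisely why Lemma~\ref{lem:extramessage} is phrased with multiset union $\uplus$ and adds exactly one extra message: its repeated application preserves multiplicities, so the intervening step still finds whatever copy it needs while the carried copy survives to be consumed at the end. No structural difficulty arises beyond keeping these multiplicities straight.
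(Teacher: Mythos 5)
Your proposal is correct and follows essentially the same route as the paper: both proofs isolate the block between the two observable transitions and use Lemma~\ref{lem:extramessage} to carry the one extra pending message across each intervening step, the paper organising this as an explicit induction on the number of silent steps while you phrase it as performing the input early (resp.\ deferring the output) and replaying the block step by step. Your observation that the $\uplus$ accounting handles the case where the intervening step consumes a copy of $m$ is the same point the paper relies on implicitly, so nothing further is needed.
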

A special case of this theorem is that if $G = \gnet f {\mf A}$ and,
for a set of traces $S \subseteq P_{\mf A}$, $S \subseteq \denot G$ holds,
then $\sat(S) \subseteq \denot G$.
\begin{proof}
  \begin{enumerate}
    \item
      $s = s_1 \snoc \tuple{l,m_1} \snoc \tuple{\lO,m} \snoc s_2 \in \denot f$
      means that
      \[ \initial(f) \netstep {s_1} \netstep{(x)}^* n_1
                     \netstep {\tuple{l,m_1}} n_2 \netstep{(y)}^*
                     \netstep {\tuple{\lO,m}} n_3 \netstep{(z)}^*
                     \netstep {s_2} n_4
      \]
      for net configurations $n_1,n_2,n_3,n_4$. For clarity, we take
      $(x),(y),(z)$ to be ``names'' for the silent transitions.
      We show that there exist $n_2'$ and $(y')$ such that
      \[ \initial(f) \netstep {s_1} \netstep{(x)}^* n_1
                     \netstep {\tuple{\lO,m}}
                     \netstep {\tuple{l,m_1}} n_2' \netstep{(y')}^* n_3 \netstep{(z)}
                     \netstep {s_2} n
      \]
      by induction on the length of $\netstep{(y)}^*$:
      \begin{itemize}
        \item Base case.
          If $\netstep{(y)}^*$ is the identity relation, then assume
          \[ n_1 \netstep {\tuple{l,m_1}} n_2
                 \netstep {\tuple{\lO,m}} n_3
          \]
          Let $n_1 = \netconf {\many {e_1 \hastype E}} {\many m_1}$,
          $n_2 = \netconf {\many {e_2 \hastype E}} {\many m_2}$,
          $m = \msg ap{p'}d$, and
          $m' = \msg {\conn(a)}p{p'}d$.
          Then
          $n_3 = \netconf {\many {e_2 \hastype E}} {\{m'\} \uplus \many m_2}$ by
          the definition of $\netstep{}$.
          Since $\port \lO a \in I$,
          $n_1 \netstep {\tuple{\lO,m}}
            \netconf {\many {e_1 \hastype E}}
                     {\{m'\} \uplus \many m_1}$.
          Also, since $n_1 \netstep {\tuple{l,m_1}} n_2$
          we have
          $\netconf {\many {e_1 \hastype E}} {\{m'\} \uplus \many m_2}
            \netstep {\tuple{l,m_1}}
            n_3$
          by Lemma~\ref{lem:extramessage}.
          Composing the relations, we get
          \[ n_1 \netstep {\tuple{\lO,m}}
                 \netstep {\tuple{l,m_1}} n_3
          \]
          which completes the base case.
        \item Inductive step.
          If $\netstep{(y)}^* = \netstep{(y_0)}^*\netstep{\bullet}$ such that for
          any $n_3'$
          \[ n_1 \netstep {\tuple{l,m_1}} n_2
                 \netstep{(y_0)}^* \netstep {\tuple{\lO,m}} n_3'
          \]
          implies that there exist $n_2'$ and $(y_0')$ with
          \[ n_1 \netstep {\tuple{\lO,m}}
                 \netstep {\tuple{l,m_1}} n_2'
                 \netstep{(y_0')}^* n_3'
          \]
          then assume
          \[ n_1 \netstep{\tuple{l,m_1}} n_2
                 \netstep{(y_0)}^* n_{y_0}
                 \netstep{\bullet} n_y
                 \netstep {\tuple{\lO,m}} n_3
          \]
          Let $n_{y_0} = \netconf {\many {e_{y_0} \hastype E}} {\many m_{y_0}}$,
              $n_y     = \netconf {\many {e_y \hastype E}} {\many m_y}$,
          $m = \msg ap{p'}d$, and
          $m' = \msg {\conn(a)}p{p'}d$.
          Then $n_3 = \netconf {\many {e_y \hastype E}} {\{m'\} \uplus \many m_y}$
          by the definition of $\netstep{}$.
          Since $\port \lO a \in I$,
          $n_{y_0} \netstep {\tuple{\lO,m}}
            \netconf {\many {e_{y_0} \hastype E}}
                     {\{m'\} \uplus \many m_{y_0}}$.
          Also, since $n_{y_0} \netstep {\bullet} n_y$
          we have
          $\netconf {\many {e_{y_0} \hastype E}} {\{m'\} \uplus \many m_{y_0}}
            \netstep {\bullet}
            n_3$
          by Lemma~\ref{lem:extramessage}.
          Composing the relations, we get
          \[ n_1 \netstep {\tuple{l,m_1}} n_2
                 \netstep {(y_0)}^* n_{y_0}
                 \netstep {\tuple{\lO,m}} \netconf {\many {e_{y_0} \hastype E}}
                                                   {\{m'\} \uplus \many m_{y_0}}
                 \netstep {\bullet}
                 n_3
          \]
          Applying the hypothesis, we finally get
          \[ n_1 \netstep {\tuple{\lO,m}}
                 \netstep {\tuple{l,m_1}} n_2'
                 \netstep{(y_0')}^*
                 \netstep{\bullet} n_3
          \]
          which completes the first part of the proof.
      \end{itemize}
    \item
      $s = s_1 \snoc \tuple{\lP,m} \snoc \tuple{l,m_1} \snoc s_2 \in \denot f$
      means that
      \[ \initial(f) \netstep {s_1} \netstep{(x)}^* n_1
                     \netstep {\tuple{\lP,m}} n_2 \netstep{(y)}^*
                     \netstep {\tuple{l,m_1}} n_3 \netstep{(z)}^*
                     \netstep {s_2} n_4
      \]
      for net configurations $n_1,n_2,n_3,n_4$ and $(x),(y),(z)$ names for the
      silent transitions.
      We show that there exist $(y')$ and $n_2'$ such that
      \[ \initial(f) \netstep {s_1} \netstep{(x)}^* n_1
                     \netstep {(y')}^* n_2'
                     \netstep {\tuple{l,m_1}}
                     \netstep {\tuple{\lP, m}} n_3
                     \netstep{(z)}^*
                     \netstep {s_2} n
      \]
      by induction on the length of $\netstep{(y)}^*$:
      \begin{itemize}
        \item Base case.
          If $\netstep{(y)}^*$ is the identity relation, then assume
          \[ n_1 \netstep {\tuple{\lP,m}} n_2
                 \netstep {\tuple{l,m_1}} n_3
          \]
          Let $n_2 = \netconf {\many {e_2 \hastype E}} {\many m_2}$,
          $n_3 = \netconf {\many {e_3 \hastype E}} {\many m_3}$,
          $m = \msg ap{p'}d$
          Then
          $n_1 = \netconf {\many {e_2 \hastype E}} {\{m\} \uplus \many m_2}$ by
          the definition of $\netstep{}$.
          Since $\port \lP a \in I$,
          $ \netconf {\many {e_3 \hastype E}} {\{m\} \uplus \many m_3}
            \netstep {\tuple{\lP,m}} n_3$.
          Also, since $n_2 \netstep {\tuple{l,m_1}} n_3$
          we have
          $n_1
            \netstep {\tuple{l,m_1}}
            \netconf {\many {e_3 \hastype E}} {\{m\} \uplus \many m_3}$
          by Lemma~\ref{lem:extramessage}.
          Composing the relations, we get
          \[ n_1 \netstep {\tuple{l,m_1}}
                 \netstep {\tuple{\lP,m}} n_3
          \]
          which completes the base case.
        \item Inductive step.
          If $\netstep{(y)}^* = \netstep{\bullet}\netstep{(y_0)}^*$ such that for
          any $n_1'$
          \[ n_1' \netstep {\tuple{\lP,m}}
                 \netstep{(y_0)}^* n_2
                 \netstep {\tuple{l,m_1}} n_3
          \]
          implies that there exist $n_2'$ and $(y_0')$ with
          \[ n_1' \netstep{(y_0')}^* n_2'
                  \netstep {\tuple{l,m_1}}
                  \netstep {\tuple{\lP,m}} n_3
          \]
          then assume
          \[ n_1 \netstep{\tuple{\lP,m}} n_m
                 \netstep{\bullet} n_y
                 \netstep{(y_0)}^* n_2
                 \netstep {\tuple{l,m_1}} n_3
          \]
          Let $n_m     = \netconf {\many {e_m \hastype E}} {\many m_m}$,
              $n_y     = \netconf {\many {e_y \hastype E}} {\many m_y}$, and
          $m = \msg ap{p'}d$.
          Then $n_1 = \netconf {\many {e_m \hastype E}} {\{m\} \uplus \many m_m}$
          by the definition of $\netstep{}$.
          Since $\port \lP a \in I$,
          $\netconf {\many {e_y \hastype E}}
                    {\{m\} \uplus \many m_y}
            \netstep {\tuple{\lP,m}} n_y$.
          Also, since $n_m \netstep {\bullet} n_y$
          we have
          $n_1
            \netstep {\bullet}
            \netconf {\many {e_y \hastype E}}
                     {\{m\} \uplus \many m_y}$
          by Lemma~\ref{lem:extramessage}.
          Composing the relations, we get
          \[ n_1 \netstep {\bullet} \netconf {\many {e_y \hastype E}}
                                             {\{m\} \uplus \many m_y}
                 \netstep {\tuple{\lP,m}} n_y
                 \netstep{(y_0)}^* n_2
                 \netstep {\tuple{l,m_1}} n_3
          \]
          Applying the hypothesis, we finally get
          \[ n_1 \netstep{\bullet}
                 \netstep{(y_0')}^* n_2'
                 \netstep{\tuple{l,m_1}}
                 \netstep{\tuple{\lP,m}} n_3
          \]
          which completes the proof.
      \end{itemize}
  \end{enumerate}
\end{proof}
\begin{lemma} \label{lem:swappingpreserves}
  If $s,s' \in P_{\mf A}$ and $s' \swapping s$, then
  \begin{enumerate}
    \item
      $\enabled(s) = \enabled(s')$,
    \item
      $\cp(s) = \cp(s')$, and
    \item
      $\fp(s) = \fp(s')$.
  \end{enumerate}
\end{lemma}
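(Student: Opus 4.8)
The plan is to handle the three claims separately, exploiting that $\cp$ and $\enabled_{\mf A}$ are manifestly insensitive to the order of messages whereas $\fp$ is not, so that only the last claim actually uses the legality of $s$. First I would record the elementary fact that $\swapping$ merely permutes messages: the single generating step transposes two adjacent messages, and reflexivity and transitivity preserve the multiset of messages, so $s' \swapping s$ implies that $s$ and $s'$ contain exactly the same messages (a genuine \emph{set} here, since pointers are unique in $P_{\mf A}$).

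For claims (1) and (2) a one-line induction on trace length shows that $\cp$ and $\enabled_{\mf A}$ are homomorphisms from traces under $\snoc$ into sets under $\cup$: the contribution of each appended message — namely $\{p'\}$ for $\cp$ and $\{(a',p') \mid a \enables A a'\}$ for $\enabled_{\mf A}$ — depends only on that message and not on the prefix. Since $\cup$ is commutative, both functions depend only on the multiset of messages, and the first observation then finishes claims (1) and (2). Note that legality is not needed at all for these two.

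The work is concentrated in claim (3), because $\fp(s \snoc \lmsg lap{p'}d)$ subtracts $\cp(s)$ and so genuinely depends on the order in which messages occur. My key step is to prove a closed form valid for every $t \in P_{\mf A}$:
\[
  \fp(t) = \mathrm{pts}(t) \setminus \cp(t),
  \qquad \text{where } \mathrm{pts}(t) = \{p \mid \lmsg lap{p'}d \text{ occurs in } t\}.
\]
Once this is established the lemma is immediate: $\mathrm{pts}$ and $\cp$ each depend only on the multiset of messages, so the first paragraph gives $\mathrm{pts}(s) = \mathrm{pts}(s')$ and $\cp(s) = \cp(s')$, whence $\fp(s) = \fp(s')$. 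I would prove the closed form by induction on $|t|$; the inductive step, extending $t_0$ by a message with point $p$ and coabstracted pointer $p'$, reduces to verifying $(\mathrm{pts}(t_0) \cup \{p\}) \setminus (\cp(t_0) \cup \{p'\}) = (\mathrm{pts}(t_0) \setminus \cp(t_0)) \cup (\{p\} \setminus \cp(t_0))$. Deleting $p'$ from $\mathrm{pts}(t_0) \setminus \cp(t_0) = \fp(t_0)$ (induction hypothesis) changes nothing because unique pointers give $p' \notin \ptrs(t_0) \supseteq \fp(t_0)$, and deleting $p'$ from $\{p\}$ changes nothing because the point and coabstracted pointer of a single message differ, so $p \neq p'$. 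These are precisely the two places where the hypothesis $s \in P_{\mf A}$ is consumed; prefix-closure of the unique-pointers condition makes the induction hypothesis applicable to $t_0$.

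I expect the main obstacle to be this closed-form characterization of $\fp$, in particular isolating the two legality facts it rests upon ($p' \notin \ptrs(t_0)$ and $p \neq p'$). As a cross-check that avoids the characterization, one can instead reduce $s' \swapping s$ to a chain of single adjacent transpositions through $P_{\mf A}$ and verify directly that swapping $\alpha_1 = \lmsg{l_1}{a_1}{p_1}{p_1'}{d_1}$ and $\alpha_2 = \lmsg{l_2}{a_2}{p_2}{p_2'}{d_2}$ leaves the combined $\fp$-contribution of the two messages fixed: the contribution of $p_2$ is stable because the side condition $p_1' \neq p_2$ guarantees it, and the contribution of $p_1$ is stable because unique pointers force $p_1 \neq p_2'$; the running pair $(\cp, \fp)$ then agrees just before $s_2$ and, since its update depends only on its current value and the next message, evolves identically through $s_2$. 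I would present the closed-form argument as primary, since it is shorter and explains \emph{why} the result holds, and cite the transposition calculation only as corroboration.
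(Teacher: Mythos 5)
Your proposal is correct, and for part (3) it takes a genuinely different route from the paper, so a comparison is worthwhile. The paper proves all three parts by reducing $\swapping$ to a single adjacent transposition and then performing a secondary induction on the suffix $s_2$: parts (1) and (2) follow from commutativity of $\cup$, and part (3) is a direct computation of $\fp(s_1 \snoc \alpha_2 \snoc \alpha_1)$ versus $\fp(s_1 \snoc \alpha_1 \snoc \alpha_2)$ using exactly two cross-message inequalities, namely $p_2 \neq p_1'$ (the side condition in the definition of $\swapping$) and $p_1 \neq p_2'$ (from unique pointers, since $p_1 \in \ptrs(s_1 \snoc \alpha_1)$ while $p_2' \notin \ptrs(s_1 \snoc \alpha_1)$ --- the paper attributes both inequalities to the definition of $\swapping$, but only the first is; your proposal identifies the split correctly). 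Your treatment of (1) and (2) --- the appended contribution depends only on the appended message, so $\cp$ and $\enabled_{\mf A}$ factor through the multiset of messages --- is the same content stated more crisply. Your closed form $\fp(t) = \mathrm{pts}(t) \setminus \cp(t)$ for part (3) is the genuinely new ingredient: it explains why order-invariance holds rather than verifying it, and it buys a proof that never mentions the transposition at all. Your ``cross-check'' is, in essence, the paper's proof verbatim.

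One caveat on the primary route. The closed form consumes the within-message inequality $p \neq p'$, and this is not literally implied by the stated legality conditions for \emph{initial} messages. For a non-initial message, the justification condition forces $p \in \cp(t_0) \subseteq \ptrs(t_0)$ while unique pointers gives $p' \notin \ptrs(t_0)$, so $p \neq p'$ as you say; but an initial message carries no justification constraint, and nothing in the definition of $P_{\mf A}$ excludes $p = p'$ for it. In that degenerate case $\fp(t_0 \snoc \lmsg lap{p}d)$ would contain $p$ while $(\mathrm{pts}(t_0)\cup\{p\})\setminus(\cp(t_0)\cup\{p\})$ would not, so the closed form genuinely fails rather than merely being harder to prove. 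The paper's calculation uses only the two cross-message inequalities and is immune to this. So either record $p \neq p'$ as an explicit (clearly intended, and harmless) freshness condition on messages before invoking the closed form, or promote your transposition calculation from corroboration to the actual proof, since it needs no such assumption.
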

\begin{proof}
  Induction on $\swapping$. The base case is trivial.
  Consider the case where
  $s  = s_1 \snoc \alpha_2 \snoc \alpha_1 \snoc s_2$ and
  $s' = s_1 \snoc \alpha_1 \snoc \alpha_2 \snoc s_2$.
  Let $\alpha_1 = \lmsg l{a_1}{p_1}{p_1'}{d_1}$ and
      $\alpha_2 = \lmsg l{a_2}{p_2}{p_2'}{d_2}$.
  \begin{enumerate}
    \item
      Induction on the length of $s_2$.
      In the base case, we have (by associativity of $\cup$):
      $\enabled(s_1 \snoc \alpha_2 \snoc \alpha_1)
         = \enabled(s_1) \cup \{\tuple{a,p_2'} \mid a_2 \enables A a\}
                         \cup \{\tuple{a,p_1'} \mid a_1 \enables A a\}
         = \enabled(s_1) \cup \{\tuple{a,p_1'} \mid a_1 \enables A a\}
                         \cup \{\tuple{a,p_2'} \mid a_2 \enables A a\}$.
    \item
      Induction on the length of $s_2$ as in 1.
    \item
      Induction on the length of $s_2$.
      In the base case, we have (since by the def. of $\swapping$, $p_1 \neq p_2'$ and $p_2 \neq p_1'$):
      \begin{align*}
           \fp(s_1 \snoc \alpha_2 \snoc \alpha_1) & =
        \\ \fp(s_1 \snoc \alpha_2) \cup (\{p_1\} \setminus \cp(s_1 \snoc \alpha_2)) & =
        \\ \fp(s_1) \cup (\{p_2\} \setminus \cp(s_1)) \cup
          (\{p_1\} \setminus (\cp(s_1) \cup \{p_2'\})) & =
        \\ \fp(s_1) \cup (\{p_2\} \setminus (\cp(s_1) \cup \{p_1'\})) \cup
          (\{p_1\} \setminus \cp(s_1)) & =
        \\ \fp(s_1) \cup (\{p_1\} \setminus \cp(s_1))
          \cup (\{p_2\} \setminus (\cp(s_1) \cup \{p_1'\})) & =
        \\ \fp(s_1 \snoc \alpha_1) \cup (\{p_2\} \setminus \cp(s_1 \snoc \alpha_1)) & =
        \\ \fp(s_1 \snoc \alpha_1 \snoc \alpha_2)
      \end{align*}
  \end{enumerate}
\end{proof}
\begin{lemma} \label{lem:saturationendmessage}
  Let $S \subseteq P_{\mf A}$ be a saturated set of traces. If $s,s' \in S$ are
  traces such that $s' \swapping s$ and $s \snoc \alpha \in S$, then $s' \snoc
  \alpha \in S$.
\end{lemma}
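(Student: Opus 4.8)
The plan is to reduce the statement to repeated applications of saturation along the chain of swaps. Unfold $s' \swapping s$ into a chain of elementary swaps $s' = u_0 \swapping u_1 \swapping \cdots \swapping u_k = s$ with every $u_i \in P_{\mf A}$. Appending $\alpha$ on the right turns each generating swap into a generating swap again, since the swap's side conditions ($l_1 \leq l_2$ on polarities and $p_1' \neq p_2$ on pointers) constrain only the two exchanged messages and are unaffected by enlarging the shared suffix from $s_2$ to $s_2 \snoc \alpha$. Starting from $u_k \snoc \alpha = s \snoc \alpha \in S \subseteq P_{\mf A}$, I would then run a downward induction on $i$: at each step the sublemma below supplies $u_i \snoc \alpha \in P_{\mf A}$ from $u_{i+1} \snoc \alpha \in P_{\mf A}$, whence $u_i \snoc \alpha \swapping u_{i+1} \snoc \alpha$ holds inside $P_{\mf A}$ and saturation of $S$ (with $u_{i+1} \snoc \alpha \in S$) gives $u_i \snoc \alpha \in S$. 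At $i = 0$ this yields $s' \snoc \alpha \in S$.

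The real content is therefore the following closure claim, which I would isolate as a sublemma: if $w, w' \in P_{\mf A}$ are related by a single elementary swap and $w \snoc \alpha \in P_{\mf A}$, then $w' \snoc \alpha \in P_{\mf A}$. I would prove it by checking the five legality conditions defining $P_{\mf A}$ for $w' \snoc \alpha$. The conditions that concern only the messages of $w'$ (not the trailing $\alpha$) hold because $w' \in P_{\mf A}$, and correct labelling holds because $w' \snoc \alpha$ carries the same multiset of messages as $w \snoc \alpha$. The conditions that single out the last message $\alpha = \lmsg lap{p'}d$ — uniqueness of its created pointer $p'$ and, when $a \notin \ini A$, its justification $\tuple{a,p} \in \enabled_{\mf A}(w')$ — transfer from $w \snoc \alpha$ using Lemma~\ref{lem:swappingpreserves}, which gives $\ptrs(w') = \ptrs(w)$ and $\enabled_{\mf A}(w') = \enabled_{\mf A}(w)$.

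The main obstacle is the pair of conditions that are genuinely order sensitive: strict scoping and strict nesting. These are stated over segments, and in particular constrain the free pointers $\fp$ of the suffixes lying to the right of an answer message; unlike $\enabled$, $\cp$ and $\fp$ of the whole trace, these suffix-level quantities are not preserved by a swap, so Lemma~\ref{lem:swappingpreserves} does not apply directly. The resolution I would pursue is to exploit exactly the swap side condition: because $p_1' \neq p_2$ and, by uniqueness of pointers, $p_1 \neq p_2'$ (mirroring the computation in the proof of Lemma~\ref{lem:swappingpreserves}), the two exchanged messages have no creation/use dependency, so reordering them leaves the free-pointer content of any enclosing suffix unchanged. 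The only new free pointer that $\alpha$ can contribute to such a suffix is its point $p$, and whether $p$ is already bound is controlled by $\cp$, which is again preserved; combining the scoping and nesting facts guaranteed by $w' \in P_{\mf A}$ (which handle the part of each suffix not involving $\alpha$) with those guaranteed by $w \snoc \alpha \in P_{\mf A}$ (which handle $\alpha$'s own contribution) discharges the remaining cases. I expect the bookkeeping here — a small case analysis on whether the swapped messages and $\alpha$ are questions or answers and on coincidences among their points — to be the most delicate part of the argument, even though each individual case is elementary.
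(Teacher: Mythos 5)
Your proposal is correct and follows essentially the same route as the paper: reduce to a single elementary swap (the paper phrases this as induction on $\swapping$), observe that $s'\snoc\alpha\swapping s\snoc\alpha$, and then verify the five legality conditions for $s'\snoc\alpha$, transferring the pointer- and enabling-based conditions via Lemma~\ref{lem:swappingpreserves} and handling strict scoping and strict nesting by a case analysis on where the relevant answer/question messages sit relative to the two swapped messages and $\alpha$. The paper's proof carries out exactly the suffix-splitting argument you sketch for the order-sensitive conditions, so your identification of that as the delicate part, and your resolution of it, match the intended argument.
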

\begin{proof}
  Induction on $\swapping$. The base case is trivial. We show the case
  of a single swapping.
  If $s' \swapping s$, we have
  $s  = s_1 \snoc \alpha_2 \snoc \alpha_1 \snoc s_2$ and
  $s' = s_1 \snoc \alpha_1 \snoc \alpha_2 \snoc s_2$
  for some $s_1,s_2,\alpha_1, \alpha_2$.
  Obviously, $s' \snoc \alpha \swapping s \snoc \alpha$.

  We have to show that if
  $s \snoc \alpha \in P_{\mf A}$, then $s' \snoc \alpha \in P_{\mf A}$.
  We have to show that $s' \snoc \alpha$ fulfils the legality conditions
  imposed by $P_{\mf A}$:
  \begin{itemize}
    \item
      It is easy to see that $s' \snoc \alpha$ has unique pointers and is correctly labelled.
    \item
      $s' \snoc \alpha$ is justified since $\enabled(s) = \enabled(s')$ by
      Lemma~\ref{lem:swappingpreserves}.
    \item
      To see that $s' \snoc \alpha$ strictly scoped, consider the (``worst'') case
      when
      \[\lmsg lap{p'}d \snoc s_3 \snoc \alpha \subseteq s' \snoc \alpha \text{ and } a \in \ans A\]
      (i.e. we pick the segment that goes right up to the end of the trace).
      We consider the different possibilities of the position
      of this answer message:
      \begin{itemize}
        \item
          If $\lmsg lap{p'}d \subseteq s_1$, then let
          $s_4' = \lmsg lap{p'}d \snoc s_1' \snoc \alpha_1 \snoc \alpha_2
                  \snoc s_2 \snoc \alpha \subseteq s' \snoc \alpha$ and
          $s_4 = \lmsg lap{p'}d \snoc s_1' \snoc \alpha_2
                 \snoc \alpha_1 \snoc s_2 \snoc \alpha$.
          We also know that $p \notin \fp(s_4)$ as $s \snoc \alpha \in P_{\mf A}$.
          Now, since $s_4' \swapping s_4$, we have $\fp(s_4) = \fp(s_4')$ by
          Lemma~\ref{lem:swappingpreserves} and thus also $p \notin \fp(s_4')$.
        \item
          If $\lmsg lap{p'}d = \alpha_2$. We know that $p \notin \fp(s_2 \snoc \alpha)$
          by $s \snoc \alpha \in P_{\mf A}$.
          Since $s' \in P_{\mf A}$ we have $p \notin \fp(\alpha_1)$ and can so
          conclude that $p \notin \fp(\alpha_1 \snoc s_2 \snoc \alpha)$.
        \item
          If $\lmsg lap{p'}d = \alpha_1$ or $\lmsg lap{p'}d \subseteq s_2$,
          $p \notin \fp(s_2 \snoc \alpha)$ follows immediately from
          $s \in P_{\mf A}$.
        \item
          If $\lmsg lap{p'}d = \alpha$, $p \notin \fp(\emptytrace) = \emptyset$ is
          trivially true.
      \end{itemize}
    \item
      To see that $s' \snoc \alpha$ is strictly nested, assume
      \[\lmsg {l_1}{a_1}p{p'}{d_1} \snoc s_1 \snoc \lmsg {l_2}{a_2}{p'}{p''}{d_2} \snoc
       s_2  \snoc \lmsg {l_3}{a_3}{p'}{p'''}{d_3}
       \subseteq s' \snoc \alpha \]
       for port names $a_1,a_2 \in \qst A$ and $a_3 \in \ans A$.
       We have to show that this implies
       $\lmsg {l_4}{a_4}{p''}{-}{d_4} \subseteq s_2$, for a port name $a_4 \in \ans A$.
       We proceed by considering the possible positions of the last message in the
       segment:
       \begin{itemize}
         \item
           If $\lmsg {l_3}{a_3}{p'}{p'''}{d_3} \subseteq s'$, then the proof is
           immediate, by $s' \in P_{\mf A}$ being strictly nested.
        \item
          If $\lmsg {l_3}{a_3}{p'}{p'''}{d_3} = \alpha$ we use the fact that
          $s \snoc \alpha \in P_{\mf A}$ is strictly nested. We assume that
          the implication (using the same names) as above holds but instead for
          $s \snoc \alpha$, and show that any
          swappings that can have occurred in $s'$ that reorder the $a_1, a_2, a_4$
          moves would render $s'$ illegal:
          \begin{itemize}
            \item
              If $a_2$ was moved before $a_1$, then $s'$ would not be justified.
            \item
              If $a_4$ was moved before $a_2$, then $s'$ would not be justified.
          \end{itemize}
          As the order is preserved, this shows that the swappings must be done
          in a way such that the implication holds for $s' \snoc \alpha$.
       \end{itemize}
  \end{itemize}
\end{proof}
\begin{lemma} \label{thm:unravelling}
  For any game net $f = \gnet S {\mf A}$ and trace $s \in P_{\mf A}$, $s \in
  \denot f$ if and only if $\forall p \in \fp(s).  s \hereditary \{p\} \in
  \denot f$.
\end{lemma}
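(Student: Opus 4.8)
The plan is to reduce the statement to a \emph{thread decomposition} of runs. Because $s \in P_{\mf A}$ has unique pointers and is justified, tracing justifiers backwards arranges the messages of $s$ into a forest whose roots are exactly the free pointers $\fp(s)$: the messages hereditarily justified by a single $p \in \fp(s)$ are those retained by $s \hereditary \{p\}$, every message belongs to exactly one such thread (its point determines a unique parent, and the chain terminates at a unique root in $\fp(s)$), and distinct threads introduce disjoint sets of pointer (butt) names. The key lemma I would prove, by induction on $|s|$, is a frame correspondence: a run $\initial(f) \netstep{s} n$ exists if and only if there is a family of runs $\initial(f) \netstep{s \hereditary \{p\}} n_p$ (one per $p \in \fp(s)$) whose configurations are \emph{disjoint} --- pairwise disjoint engine threads, heap cells and pending messages --- and whose disjoint union is $n$. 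Granting this, the theorem is immediate: $s \in \denot f$ means a run of $s$ exists, which holds iff runs of all threads exist, which is exactly $\forall p \in \fp(s).\ s \hereditary \{p\} \in \denot f$.

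For the inductive step I would write $s = s_0 \snoc \alpha$ with $\alpha = \lmsg l a q {q'} d$ and let $p^\ast \in \fp(s)$ be the root of $\alpha$'s thread, so that $s \hereditary \{p^\ast\} = (s_0 \hereditary \{p^\ast\}) \snoc \alpha$ while $s \hereditary \{p\} = s_0 \hereditary \{p\}$ for $p \neq p^\ast$. In the ``project'' direction I would split a given run of $s$: each silent step acts on some engine thread and its heap cells, which belong to a unique $\fp(s)$-root, so assigning every step to its owning thread yields, after deletion of the other threads' steps, a valid run of each $s \hereditary \{p\}$ (the retained steps remain enabled because they only ever touch that thread's fragment, and $\nialloc$ only gets easier in a smaller heap). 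In the ``combine'' direction I would process the messages of $s$ left to right, and for the next message splice in the owning thread's pending silent steps followed by its observable step, each performed on the current disjoint-union configuration; enabledness of these steps in the presence of the other threads' material is exactly Lemma~\ref{lem:extramessage} for the pending messages, plus heap disjointness for the get/update/free steps. The polarity of $\alpha$ matters only mildly: an input step is always available once $\port \lO a \in A$ (which holds as $s$ is correctly labelled), whereas an output is emitted by a $\nispark$ within thread $p^\ast$.

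The main obstacle is making the notion of a configuration fragment ``belonging'' to a thread precise and stable under transitions, and it is here that the legality of $s$ is indispensable. Unique pointers guarantee that the butt $q'$ minted by each message is globally fresh, so the heap cells a thread allocates, reads, updates and frees are disjoint from those of every other thread, and the $\nispark$ jumps carrying a thread's local computation never dereference a foreign pointer; this is what makes a step of one thread a no-op on the fragments of the others and keeps $\nialloc$'s freshness side-condition satisfiable when the per-thread heaps are recombined. I expect the bookkeeping of this ownership invariant as the configuration evolves --- and the verification that it is preserved by each of the HRAM rules of Fig.~\ref{fig:hramop} --- to be the technical heart of the proof; once it is established, both the projection and the recombination of runs are routine, and the stated equivalence drops out.
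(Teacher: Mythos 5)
The paper states this lemma without proof, so there is nothing to compare your argument against line by line; I can only assess your strategy on its own terms. The thread-decomposition idea is surely the intended one: legality of $s$ (unique pointers plus justification) does arrange the messages into a forest rooted at $\fp(s)$, each message lies in exactly one $s \hereditary \{p\}$, and a frame lemma in the style of Lemma~\ref{lem:extramessage} is the right tool for recombining per-thread runs. Your left-to-right splicing for the ``combine'' direction and step-ownership projection for the ``project'' direction are both workable, modulo one point you elide: the standalone runs of the various $s \hereditary \{p\}$ may reuse the same internal heap names and engine-chosen fresh names, so before you can take their ``disjoint union'' you need an equivariance step renaming each run's private atoms apart (legitimate because the transition relation is equivariant and the observable butts are fixed and distinct by unique pointers, but it has to be said).

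The genuine gap is in the justification of the ownership invariant itself. You derive heap disjointness from the fact that $s$ has unique pointers, but that condition only governs the names \emph{visible in the trace}; the lemma quantifies over \emph{any} game net, whose code is free to allocate, read, update and free heap cells that never appear in any message, and to route arbitrary register contents through internal (hidden) messages between the engines of the net. Unique pointers alone does not prevent a handler servicing thread $p_1$ from disturbing state that a handler for thread $p_2$ will later consult. What actually closes this hole is an operational \emph{no-forging} argument: a machine thread's registers are seeded only from its incoming message, so the only pointer names it can ever hold are names of its own game thread, fresh allocations, or names reachable from these by \instr{get}-chains; consequently every cell it touches, every internal message it emits, and every observable point or butt it produces stays inside its own thread's name territory, and this invariant must be shown to be preserved by each rule of Fig.~\ref{fig:hramop} (including the \instr{spark}-as-jump and net-level message rules). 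You correctly flag this bookkeeping as the technical heart, but the reason you give for it is the wrong one, and as written the claim ``a step of one thread is a no-op on the fragments of the others'' is asserted where it most needs proof. With the no-forging invariant stated and proved, the rest of your outline goes through.
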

\begin{lemma} \label{thm:copycatsubset}
  $\Copycat_{\mf A, \pi\cdot\mf {A}}\cond{st,alt} \subseteq \denot{\copycat_{\pi,\mf A}}$.
\end{lemma}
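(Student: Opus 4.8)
The plan is to show, by induction on the length of $s$, that every $s \in \Copycat_{\mf A, \pi\cdot\mf A}\cond{st,alt}$ admits a run $\initial(\copycat_{\pi,\mf A}) \netstep{s} n$ of the singleton copycat net. Because $s$ is well-opened and alternating it is a single alternating sequence starting with an initial Opponent question, so its odd-length prefixes end in an $\lO$-move and its even-length prefixes end in a $\lP$-move. I would strengthen the induction hypothesis to track the engine's heap at the quiescent (even) points: for every even prefix $\hat s \leq_{\text{even}} s$ there is a run reaching a configuration in which the unique engine is at rest, $\engineconf \emptyset {h_{\hat s}}$, with empty solution, and $h_{\hat s}$ contains exactly one link $q \mapsto (n,\emptyset)$ for each question of $\hat s$ that is still pending, where $q$ is the (copycat-created) butt of the corresponding $\lP$-question and $n$ is its partner name on the opposite component. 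Equivalently, $h_{\hat s}$ is the restriction to pending names of the pointer renaming $\theta_{\hat s}$ that witnesses the copycat equation ${\hat s}\dual \hereditary A \pnptreq \hat s \hereditary A'$ at $\hat s$; the heap simply stores, for each live copycat name, its image under this correspondence.

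The inductive step advances an even prefix $\hat s$ to $\hat s \snoc \alpha_O \snoc \alpha_P$. First I would replay the operational rules: the net receives $\alpha_O$ (an observable input depositing a message in the solution), the engine consumes it in a silent step, running the thread $P(a)$ whose body is $\icopyi$, $\icopyq$ or $\icopya$ according to whether $\alpha_O$ is the initial question, a non-initial question, or an answer, and this thread then sparks, emitting $\alpha_P$ as an observable output. The bulk of the work is verifying that the emitted message is exactly the move $s$ prescribes at that position, up to the fresh name chosen by $\nialloc$ (immaterial, since $\pnptreq$ quotients by pointer renaming and we align $\theta$ accordingly). The port name agrees by equivariance, since $P$ sparks the $\pi^{\pm1}$-image of $a$ and the copycat equation forces the mirror move onto precisely that port; the value is copied verbatim through register $2$. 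The delicate component is the justification pointer. By the polarity restriction on the enabling relation, every non-initial $\lO$-move is justified by a $\lP$-move, so its justifier $p$ is a copycat-created butt; the justified and strictly scoped conditions of $P_{\mf A}$ guarantee the pointed-to $\lP$-question is still pending, hence $p \in \dom h_{\hat s}$. For a non-initial question the engine outputs $h_{\hat s}(p)$ as the new justifier (via the $\niget$ in $\icopyq$), and for an answer it outputs $h_{\hat s}(p)$ and then frees the link (via $\icopya$); by the hypothesis that $h_{\hat s}$ realizes $\theta_{\hat s}$ on pending names, this dereferenced value is exactly the partner that the copycat equation demands as the justifier of $\alpha_P$. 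The initial question is the one case with no dereference: $\icopyi$ points the response directly at the Opponent-created name, matching the $\ini{A'}\times\ini A$ enabling of the arrow. Finally I would re-establish the invariant by checking that allocating the fresh link (for questions) or freeing it (for answers) turns $h_{\hat s}$ into $h_{\hat s \snoc \alpha_O \snoc \alpha_P}$, i.e. into the graph of $\theta$ at the new even prefix; here unique pointers makes the allocated name genuinely fresh, and strict scoping/nesting (well-bracketing) makes an answer's freed link the correct, innermost one.

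The main obstacle I anticipate is precisely this justification-pointer bookkeeping: relating the operational evolution of the heap (allocate / dereference / free, together with the register flips that orient point, butt and value) to the denotational evolution of the pointer renaming $\theta_{\hat s}$ read off from the copycat equation and the hereditary-justification operation $\hereditary$. Getting the direction of each link right and keeping the copycat-created and Opponent-created names distinct is the heart of the argument; everything else — the net-level transitions and the port and value components — is routine. It is worth noting that the $\cond{st,alt}$ hypotheses are what keep this manageable: single-threadedness yields a single justification tree, hence one coherent family of links, and alternation guarantees the engine is quiescent between each $\lO$-move and its unique response, removing any thread interference. The base case $s=\emptytrace$ is immediate (zero steps), and an odd-length $s$ (ending in $\alpha_O$) follows from the even case by halting after the observable input, before the corresponding spark.
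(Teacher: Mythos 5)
Your proposal is correct and follows essentially the same route as the paper's proof: induction on trace length, using alternation to process $\lO$/$\lP$ pairs, with the strengthened invariant that at each quiescent even prefix the engine is at rest and its heap is exactly the copycat heap (one link per pending $\lP$-question), then a case split on initial question / non-initial question / answer driving $\icopyi$/$\icopyq$/$\icopya$ and using justification and strict nesting to guarantee the dereferenced link exists. The only cosmetic difference is that you phrase the heap invariant as the graph of the pointer renaming witnessing the copycat equation, whereas the paper packages it as the "copycat heap" of Definition~\ref{def:copycatheap}; these are the same thing.
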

\begin{proof}
  For convenience,
    let $\gnet f {\mf A \fun \mf{A'}} = \copycat_{\pi,\mf A, \mf{A'}}$,
      $S_1 = \Copycat_{\mf A, \mf {A'}}\cond{st,alt}$ and
      $S_2 = \denot{f}$.
  We show that $s \in S_1$ implies $s \in S_2$, by induction on the length of
  $s$:

  \begin{itemize}
    \item Hypothesis.
      If $s$ has even length, then
      $\initial(f) \netstep s \netconf {\{\engineconf \emptyset h \hastype E\}} \emptyset$
      and $h$ is exactly (nothing more than) a copycat heap for $s$ over $\mf A
      \fun \mf {A'}$.  In other words, there are no threads running and no
      pending messages and the heap is precisely specified.
    \item Base case. Trivial.
    \item Inductive step.
    At any point in the execution of the configuration of $f$, an
    $\lO$-labelled message can be received, so that case is rather
    uninteresting. Since the trace $s$ is alternating, we consider two messages
    in each step:

    Assume $s = s'\snoc \lmsg \lO{a_1}{p_1}{p_1'}{d_1} \snoc
    \lmsg \lP{a_2}{p_2}{p_2'}{d_2} \in S_1$ and that $s' \in S_2$.
    From the definition of $\Copycat$ we know that $a_2 = \bijpi_\portnames(a_1)$,
    $p_2 = \bijpi_\pointers(p_1)$, $p_2' = \bijpi_\pointers(p_2)$,
    and $d_1 = d_2$.

    We are given that $\initial(f) \netstep s' \netconf {\{\engineconf
    \emptyset h \hastype E\}} \emptyset$ as in the hypothesis. We have five
    cases for the port name $a_1$. We show the first three, as the others are
    similar. In each case our single engine will receive a message and start a
    thread:

    \begin{itemize}
      \item If $a_1 \in \ini {A'}$, then (since $s$ is justified) $p_2 = p_1'$
      and (by the definition of $\pi_\portnames'$) $a_2 = \pi_\portnames^{-1}(a_1)$.  The engine runs
      the first clause of the copycat definition, and chooses to create the
      pointer $p_2$ and then performs a send operation.  We thus get:
        \[
          \initial(f) \netstep s \netconf
            {\{\engineconf \emptyset {h \cup \{p_2' \mapsto p_1'\}}\}}
            \emptyset
        \]
        It can easily be verified that the hypothesis holds for this new state.
      \item If $a_1 \in (\opp {A'} \cap \qst {A'}) \setminus \ini {A'}$, then
        $a_2 = \pi_\portnames^{-1}(a_1)$.
        Since $s$ is justified and strictly nested, there is a message $\lmsg
        \lP{a_3}{p_3}{p_1}{d_3} \subseteq s'$ that is pending.

        By the hypothesis
        there is a message $\lmsg \lO {\pi_\portnames'(a_3)}{p_4}{p_4'}{d_4} \subseteq s'$ with
        $h(p_1) = p_4'$, which means that the $\icopyq$ instruction can be run,
        yielding the following:
        \[
          \initial(f) \netstep s \netconf
          {\{\engineconf \emptyset {h \cup \{p_2' \mapsto p_1'\}}\}}
          \emptyset
        \]
        The hypothesis can easily be verified also in this new state.
      \item If $a_1 \in \opp {A'} \cap \ans {A'}$, then $a_2 = \pi_\portnames^{-1}(a_1)$.
        Since $s$ is justified and strictly nested, there is a prefix $s_1
        \snoc \lmsg \lP{a_3}{p_3}{p_1}{d_3} \leq s'$ whose last message is
        a pending question. By the hypothesis $s_1$ is then on the form $s_1 = s_2 \snoc
        \lmsg \lO
        {\pi_\portnames'(a_3)}{\bijpi_\pointers(p_3)}{\bijpi_\pointers(p_1)}{d_4}$
        with $h = h' \cup \{p_1 \mapsto \bijpi_\pointers(p_1)\}$, which means
        that the $\icopya$
        instruction can be run, yielding the following:
        \[
          \initial(f) \netstep s \netconf
          {\{\engineconf \emptyset {h'}\}}
          \emptyset
        \]
        The hypothesis is still true; the $a_3$ question is no longer pending
        and its pointer is removed from the heap (notice that $p_2 =
        \bijpi_\pointers(p_1)$).

    \end{itemize} 
  \end{itemize}
\end{proof}

\begin{theorem} \label{thm:copycatpmoves}
  If $s = s_1 \snoc o \snoc s_2 \in \Copycat_{\mf A, \mf A'}$ and $p \nsubseteq s_2$, then
    $s \snoc p \in \Copycat_{\mf A, \mf A'}$,
  where
  $o = \lmsg \lO ap{p'}{d}$ and
  $p = \pilmsg \lP ap{p'}d$ (i.e. the ``copy'' of $o$).
\end{theorem}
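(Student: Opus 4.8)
The plan is to verify directly that $s \snoc p$ satisfies the two requirements defining membership in $\Copycat_{\mf A, \mf A'}$: that $s \snoc p \in P_{\mf A \fun \mf{A'}}$ (unique pointers, correct labelling, justification, strict scoping and strict nesting), and that the copycat equation ${s''}\dual \hereditary A \pnptreq s'' \hereditary{A'}$ holds for every even prefix $s'' \leq_{\text{even}} s \snoc p$. Because all the defining conditions are prefix- or segment-based, every prefix of $s$ already lies in $\Copycat_{\mf A, \mf A'}$ and the equation holds for every even prefix of $s$; hence the only genuinely new obligations concern the appended message $p$ itself and, when $s \snoc p$ has even length, the single new even prefix $s \snoc p$. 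I would dispatch the legality of $p$ first and the equation last.

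For legality I would use throughout that $p = \pilmsg \lP a p {p'} d$ is the polarity-flipped mirror of the O-move $o$. Correct labelling is immediate: $\bijpi_\portnames$ sends the O-port $a$ of $A \fun A'$ to a P-port $\bijpi_\portnames(a)$ and, being an equivariance with $\mf{A'} = \pi\cdot\mf A$, preserves the question/answer partition, so $p$ is a legitimately labelled P-message with the same question/answer status as $o$. For unique pointers the relevant name is the butt $\bijpi_\pointers(p')$ of $p$; since membership is only up to $\pnptreq$, I would choose the witnessing renaming so that $\bijpi_\pointers(p')$ is fresh for $s$, which is consistent exactly because the copy has not yet been made, i.e.\ $p \nsubseteq s_2$. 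Strict scoping and strict nesting for any segment ending in $p$ then follow by transporting the corresponding property of the segment of $s$ ending in $o$ across the correspondence, again using that $p$ inherits $o$'s question/answer role.

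The crux is justification, i.e.\ showing $\tuple{\bijpi_\portnames(a), \bijpi_\pointers(p)} \in \enabled_{\mf A \fun \mf{A'}}(s)$ when $a \notin \ini{A'}$ (in the initial case $a \in \ini{A'}$ the move $p$ is justified by $o$ itself through the cross-enabling $\ini{A'} \times \ini A$ of Definition~\ref{def:gamcon}, so the substantive case is $a$ non-initial). Here the decisive structural fact from Definition~\ref{def:gameint} is that a question enables only moves of the \emph{opposite} polarity, so the justifier $m_0$ of the O-move $o$ (the move in $s_1$ whose butt is $p$) is forced to be a P-move, hence one of the copies copycat itself produced. The copycat correspondence carried by $s \in \Copycat_{\mf A, \mf A'}$ then exhibits $m_0$ as the mirror of an earlier O-move $n_0$ of $s$ whose butt is precisely $\bijpi_\pointers(p)$, and $n_0$ enables $p$ because the enabling relation is equivariant and $\mf{A'} = \pi \cdot \mf A$. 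This is the step that genuinely uses $s \in \Copycat_{\mf A, \mf A'}$ rather than bare legality, and tracing the justifier of $p$ back through the copy-links to a move already present in $s$ is where the bookkeeping is most delicate; I expect it to be the main obstacle.

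Finally, for the equation on the prefix $s \snoc p$ I would extend the bijection witnessing the equation for $s$ by matching the fresh $A$-side name introduced by $o$ to the fresh $A'$-side name $\bijpi_\pointers(p')$ introduced by $p$; since $p$ contributes to the $A'$-restriction exactly the mirror of what $o$ contributes (after polarity reversal) to the $A$-restriction, the hereditarily justified projections of $s \snoc p$ differ from those of $s$ only by this one matched pair, so the equation lifts. If carrying out the justification step with $s_2$ interposed proves too unwieldy, I would first establish the adjacent case $s_1 \snoc o \snoc p \snoc s_2 \in \Copycat_{\mf A, \mf A'}$ — where the pending-ness of $o$ and the justifier of $p$ are transparent — and then transport it to the required trace using $s \snoc p \swapping s_1 \snoc o \snoc p \snoc s_2$ (moving $p$ leftward across $s_2$, each swap being legal because $\bijpi_\pointers(p)$, the butt of a move preceding $s_2$, is distinct from every butt occurring in $s_2$), together with the saturation machinery of Lemmas~\ref{lem:swappingpreserves} and \ref{lem:saturationendmessage}.
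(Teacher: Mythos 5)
There is a genuine gap here, and it is definitional rather than computational. Your plan is to certify membership of $s \snoc p$ in $\Copycat_{\mf A,\mf A'}$ by checking legality plus the equation ${s''}\dual \hereditary A \pnptreq s'' \hereditary {A'}$ on every even prefix. But that equation is the definition only of the alternating, single-threaded core $\Copycat_{\mf A,\mf A'}\cond{st,alt}$; the unsuperscripted $\Copycat_{\mf A,\mf A'}$ of the theorem is the strategy generated from that core by saturation under $\swapping$, and its traces do \emph{not} satisfy the even-prefix equation. Concretely, in the play of Fig.~\ref{fig:tipp}, delaying the P-answer $d_1$ past the O-answer $d_2$ is a legal swap (since $d_2$ is justified by $r_2$, not by $d_1$), so $r_4 \snoc r_2 \snoc r_1 \snoc r_3 \snoc d_3 \snoc d_2 \snoc d_1 \snoc d_4$ lies in the saturated copycat; yet its even prefix of length six projects to $r_2 \snoc r_1 \snoc d_2$ on the $A$ side (answer to the \emph{initial} question) and $r_4 \snoc r_3 \snoc d_3$ on the $A'$ side (answer to the \emph{non-initial} question), so the two projections are not equivariant. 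Hence verifying your criterion for $s \snoc p$ neither is necessary nor suffices for membership in the saturated set: to land in a set defined by closure under $\swapping$ you must exhibit the swapping chain. This also explains why your argument never uses the hypothesis $p \nsubseteq s_2$ for its actual purpose. The paper's proof is an induction on $\swapping$: in the base case $s$ is an alternating copycat play, where $p \nsubseteq s_2$ forces $s_2 = \emptytrace$ (each O-move is immediately followed by its copy), so the claim reduces to the trivial one-step extension of an alternating play; the inductive step is exactly Lemma~\ref{lem:saturationendmessage}. All of the delicate justification bookkeeping you anticipate for $p$ with $s_2$ interposed is precisely what this reduction is designed to avoid.

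Your closing fallback does reach for the right tools --- inserting $p$ adjacent to $o$ and transporting with $\swapping$ via Lemmas~\ref{lem:swappingpreserves} and~\ref{lem:saturationendmessage} --- but you deploy them only as a patch for the justification step, while still proposing to establish the adjacent case $s_1 \snoc o \snoc p \snoc s_2 \in \Copycat_{\mf A,\mf A'}$ by the same direct verification, so the definitional problem persists there. To repair the proof, make the swapping induction the backbone of the whole argument, with your direct checks confined to the alternating base case where they are genuinely easy.
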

\begin{proof}
  By induction on $\swapping$.
  \begin{itemize}
    \item Base case.
      This means that
      $s = s_1 \snoc o \snoc s_2 \in \Copycat_{\mf A, \mf A'}\cond{alt}$.
      But since $p \nsubseteq s_2$ and by the definition of the alternating copycat,
      $s_2 = \emptytrace$.
      It is easy to check that $s \snoc p \in \Copycat_{\mf A, \mf A'}\cond{alt}$
      and that it is legal.
    \item Inductive step.
      Assume $s \swapping s'$ for an $s' \in P_{\mf {A \fun A'}}$ such that
      $s' \snoc p \in \Copycat_{\mf A, \mf A'}$. By Lemma~\ref{lem:saturationendmessage},
      $s \snoc p \in \Copycat_{\mf A, \mf A'}$.
  \end{itemize}
\end{proof}
\begin{definition}
  Define the multiset of messages that a net configuration $n$ is ready to
  immediately send as $\ready(n) \defeq \{ \tuple {\lP,m} \mid \exists n'. \: n
  \netstep {}^* \netstep {\tuple{\lP,m}} n' \}$.
\end{definition}
\begin{definition} \label{def:copycatheap}
  If $s$ is a trace, $h$ is a heap, $\mf A$ is a game interface,
  and $\pi_\pointers$ is a permutation over $\pointers$,
  we say that $h$ is a \emph{copycat heap} for $s$ over $\mf A$ if and only if:

  For every pending $\lP$-question from $\mf A$ in $s$, i.e.
  $\lmsg \lP ap{p'}d \subseteq s$
  ($a \in \qst A$), $h(p') = \tuple{\bijpi_\pointers(p'),\emptyset}$.
\end{definition}
\begin{lemma} \label{lem:copycatready}
  If $s \in \Copycat$ is a trace such that $\initial(\copycat) \netstep s n$, then
  the following holds:
  \begin{enumerate}
    \item
      If $n \netstep {}^* n'$ then $\ready(n) = \ready(n')$.
    \item
      If $n \netstep {}^* \netstep {\tuple {\lP,m}} n'$, then
      $\ready(n) = \ready(n') \cup \{\tuple {\lP,m}\}$.
  \end{enumerate}
\end{lemma}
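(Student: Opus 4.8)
The plan is to reduce both clauses to a single structural characterisation of the configurations reachable in the copycat net, exploiting the fact that the copycat engine is exceptionally well-behaved. Each of the macros $\icopyi$, $\icopyq$, $\icopya$ is a fixed straight-line sequence of heap operations (no $\niifzero$, hence no branching) terminating in a single $\nispark$; and in the singleton copycat net $\singleton(\copycat_{\icopyi,\pi,\mf A})$ every spark target is a player port of $A \fun A'$, so by the two $\nispark$ rules every spark is \emph{external} --- it emits exactly one $\lP$-message and kills the thread, never performing an internal jump. Consequently a started thread runs a bounded, deterministic number of silent steps and then contributes exactly one output before disappearing, and no thread ever spawns another.

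First I would prove, by induction along $\netstep{}$ and using the earlier copycat-heap analysis (cf. Lem.~\ref{thm:copycatsubset} and Def.~\ref{def:copycatheap}), that every configuration $n = \netconf{\{\engineconf{\many t}{h}\}}{\many m}$ reachable by a trace $s \in \Copycat$ satisfies: (i) every message in $\many m$ is a $\lP$-message on an interface port; and (ii) each thread $t \in \many t$ is a partially executed copycat macro whose remaining instructions are enabled by $h$ (every $\niget$ it will perform reads a link currently present in $h$). Writing $\mathrm{out}(t,h)$ for the unique $\lP$-message that thread $t$ emits when run to its spark against heap $h$, this lets me compute
\[
  \ready(n) = \{\tuple{\lP,m} \mid m \in \many m\}
    \uplus \biguplus_{t \in \many t} \{\tuple{\lP, \mathrm{out}(t,h)}\}
\]
which is exactly the multiset reading of the definition of $\ready$: messages already in solution fire immediately, and each thread becomes fireable after its unique silent run.

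Given this characterisation both clauses are immediate. For clause~1 it suffices to treat a single silent step $n \netstep{} n'$ and iterate. Such a step is either a heap operation of some thread $t$ --- which replaces $t$ by its continuation $t'$ with $\mathrm{out}(t',h') = \mathrm{out}(t,h)$ by determinism, leaves $\many m$ untouched, and (crucially) leaves $\mathrm{out}(t'',\cdot)$ unchanged for every other thread $t''$ --- or an external spark, which deletes a thread $t$ and inserts $\mathrm{out}(t,h)$ into $\many m$; in both cases the right-hand side above is preserved, so $\ready(n) = \ready(n')$. For clause~2, write the run as $n \netstep{}^* n'' \netstep{\tuple{\lP,m}} n'$. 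By clause~1, $\ready(n) = \ready(n'')$, and the final observable step merely removes one copy of $m$ from the solution of $n''$, touching neither the threads nor any other message; hence it removes exactly one copy of $\tuple{\lP,m}$ from $\ready$, giving $\ready(n) = \ready(n') \uplus \{\tuple{\lP,m}\}$ as required (matching, as multisets, the stated $\cup$).

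The main obstacle is the parenthetical non-interference claim in clause~1: that a thread's silent step cannot alter the eventual output $\mathrm{out}(t'',h)$ of a \emph{different} pending thread $t''$. Since the copycat macros never use $\niupdate$ and only $\nialloc$ \emph{fresh} cells, the sole danger is the $\nifree$ in $\icopya$: a thread $t$ handling an $\lO$-answer frees the cell $p$ which, by uniqueness of pointers, is the butt of the $\lP$-question $Q$ that $t$'s input answers. Any other thread reading $p$ must be processing an $\lO$-move justified by $Q$; but strict nesting of $s$ forces every such child move to be answered --- and hence copied out by its own copycat thread, whose read of $p$ has therefore already completed --- strictly before $t$'s answer can be received. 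Thus no thread still pending in $n$ reads the cell that $t$ frees, and the outputs of the other threads are genuinely untouched. Establishing this chain carefully, via strict nesting, strict scoping and unique pointers, is the only delicate part; the remainder is bookkeeping driven by the branch-free, jump-free shape of the copycat engine.
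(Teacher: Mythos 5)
Your proof follows essentially the same route as the paper's: an induction over the silent steps with a case analysis on the executed instruction, in which the only delicate case is the \nifree{} inside \icopya{}, discharged via strict nesting and unique pointers exactly as the paper does (you merely spell out the non-interference argument, and package the bookkeeping as an explicit formula for $\ready(n)$, in more detail than the paper's one-line appeal to strict nesting). One minor imprecision worth repairing: your invariant (i) fails when the last observable transition of $s$ is an input, since the net's input rule leaves a message addressed to an engine $\lO$-port in the solution until a silent step consumes it; the fix is routine --- such a message contributes to $\ready(n)$ the output of the thread it will spawn --- and does not change the structure of the argument.
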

As we are only interested in what is observable, the trace $s$ is thus
equivalent to one where silent steps are only taken in one go by one thread
right before outputs.
\begin{proof}
  \begin{enumerate}
    \item
      For convenience, we give the composition of silent steps a name,
      $n \netstep {(x)}^* n'$. We proceed by induction on the length of $(x)$:
      \begin{itemize}
        \item Base case. Immediate.
        \item Inductive step. If $n \netstep {} \netstep {(x')}^* n'$, we analyse
        the first silent step, which means that a thread $t$ of the engine in the
        net takes a step:
        \begin{itemize}
          \item In the cases where an instruction that does not change or depend
            on the heap is run, the step cannot affect $\ready(n)$.
          \item In the case where the instruction is in $\{\icopyi, \icopyq,
            \iexti, \iextq\}$, we note that the heap is not \emph{changed},
            but merely extended with a fresh mapping which can not have appeared
            earlier in the trace.
          \item If the instruction is $\icopya$, since the trace $s$ is strictly
            nested by assumption, the input message that this message stems from
            occurs in a position in the trace where it would later be illegal
            to mention the deallocated pointer again.
        \end{itemize}
      \end{itemize}

    \item
      Immediate.
  \end{enumerate}
\end{proof}
\begin{theorem} \label{thm:copycatplegal}
  If $s \in \Copycat\cond{st}$ is a trace such that $\initial(\copycat) \netstep s n$
  for an $n = \netconf {\{\engineconf {\many t} h \hastype E\}} {\many m}$, then there
  exists a permutation $\pi_\pointers$ over $\pointers$ such that the following holds:
  \begin{enumerate}
    \item
      The heap $h$ is a copycat heap for $s$ over $\mf {A \fun A'}$.
    \item
      The set of messages that $n$ can immediately send, $\ready(n)$, is exactly the
      set of messages $p$ such that
      $s = s_1 \snoc o \snoc s_2$ and $p \nsubseteq s_2$
      where the form of $o$ and $p$ is
      $o = \lmsg \lO ap{p'}{d}$ and
      $p = \pilmsg \lP ap{p'}d$ (i.e. the ``copy'' of $o$).
  \end{enumerate}
\end{theorem}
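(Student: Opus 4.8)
The plan is to prove the two clauses together, as a single invariant carried along the whole execution, by induction on the length of a sequence of primitive transitions from $\initial(\copycat)$ to $n$ realising $\initial(\copycat) \netstep s n$. Alongside the configuration I thread one permutation $\pi_\pointers$ that records, for every Opponent move already processed, the fresh Player-side pointer allocated as the butt of its copy; I extend $\pi_\pointers$ on exactly the name produced by each $\nialloc$ and leave it arbitrary (say the identity) elsewhere, which keeps it a bijection on $\pointers$. The base case is the initial configuration, with empty heap, no threads and no pending messages, for which both clauses hold vacuously.

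For the inductive step I distinguish silent and observable transitions. A silent transition leaves $s$ unchanged, so I only need the invariant to be stable: for clause~2 this is exactly Lemma~\ref{lem:copycatready}(1), and for clause~1 I observe that the only heap-altering silent steps are a question-copy's $\nialloc$, which merely \emph{adds} a link ahead of its (not-yet-emitted) $\lP$-question, and an answer-copy's $\nifree$, which removes the link of a $\lP$-question that the corresponding Opponent answer has \emph{already} un-pended; since clause~1 only demands that $h$ dominate the links of currently pending $\lP$-questions, neither kind of step can break it. This observation --- that silent work only ever produces surplus links or discards links no longer required --- is what makes the statement insensitive to how far the pending threads have been advanced.

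The observable transitions split on polarity. If the step is an input $\tuple{\lO, o}$, the engine spawns a thread at the head of the appropriate clause of $P$ and nothing else changes: no $\lP$-question is created, so clause~1 is maintained, and clause~2 is maintained because the new thread can silently dereference --- via $\niget$ --- the copycat link that clause~1 of the induction hypothesis guarantees is present, and then $\nispark$ the copy $p$ of $o$, so that $p$ enters $\ready(n)$ precisely as $o$ becomes the final message of $s$. If the step is an output $\tuple{\lP, p}$, then (using that $\Copycat\cond{st}$ is prefix-closed, so the induction hypothesis applies to the trace without its last message) the move $p$ must, since $s \in \Copycat\cond{st}$, be the copy of some earlier Opponent move $o$; by Lemma~\ref{lem:copycatready}(2) the output deletes exactly $p$ from $\ready$, matching the fact that $p$ now follows $o$; and the thread that fired has just run its $\nialloc$ (installing the link of the freshly-pending $\lP$-question, where I extend $\pi_\pointers$) or its $\nifree$ (retiring the link of a question already un-pended by an earlier Opponent answer), so clause~1 is preserved in either sub-case.

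The step I expect to be the main obstacle is clause~2's assertion that $\ready(n)$ is \emph{exactly} the set of copies, in the non-alternating regime $\Copycat\cond{st}$. Because many inputs may be outstanding with their copies deferred, I cannot lean on strict alternation to fix the configuration, and must argue directly that every ready output is a copy (the engine has no other behaviour) and, conversely, that every outstanding copy is still firable (its link survives, by clause~1). The genuinely delicate point is that the butt of each copy is a \emph{fresh} name chosen by $\nialloc$, so the claimed equality can only hold up to renaming of not-yet-committed fresh pointers; I resolve this by aligning $\pi_\pointers$ with the representatives actually allocated along the run and invoking the nominal/equivariance machinery, so that the identities of these fresh names are immaterial.
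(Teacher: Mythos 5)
Your proof takes essentially the same route as the paper's: an induction along the execution (the paper phrases it as induction on the trace length while you use primitive transitions, but both absorb the silent steps via Lemma~\ref{lem:copycatready}), a case split on the polarity of the new observable message with reference back to the case analysis of Theorem~\ref{thm:copycatsubset}, and extension of $\pi_\pointers$ at each allocation. Your handling of the fresh-name/equivariance subtlety is at the same (informal) level of detail as the paper's, so there is nothing to flag beyond what the paper itself leaves implicit.
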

\begin{proof}
  Induction on the length of $s$.
  The base case is immediate.

  We need to show that if the theorem holds for a trace $s$, then it
  also holds for $s \snoc \alpha$. We thus assume that there exists a permutation
  $\pi_\pointers$ such that the hypothesis holds for $s$ and that
  $\initial(\copycat) \netstep s n \netstep {}^* \netstep \alpha n'$.

  \begin{enumerate}
    \item
      If $\alpha = \pilmsg \lP ap{p'}d$ then by (2) there must be a message
      $o = \lmsg \lO ap{p'}d$ such that $s = s_1 \snoc o \snoc s_2$ and
      $\alpha \in \ready(n)$.
      Since we ``chose'' $\pi_\pointers$ such that $p$ can only be gotten
      from the thread spawned by $o$, we can proceed by cases as we did
      Theorem~\ref{thm:copycatsubset} to see that the heap structure is correct
      in each case.
    \item
      \begin{itemize}
        \item
          If $\alpha = \pilmsg \lP ap{p'}d$ then by (2) there must be a message
          $o = \lmsg \lO ap{p'}d$ such that $s = s_1 \snoc o \snoc s_2$ and
          $\alpha \in \ready(n)$.
          By Lemma~\ref{lem:copycatready}, $\ready(n) = \ready(n') \cup \{\alpha\}$.
          We can easily verify that (2) holds for $n'$.
        \item
          If $\alpha = \lmsg \lO a{p_1}{p_1'}d$, then we can proceed as in
          Theorem~\ref{thm:copycatsubset} to see that a message
          $p = \lmsg \lP {\bijpi_\portnames(a)}{p_2}{p_2'}d \in \ready(n')$.
          We then simply construct our extended permutation such that the
          hypothesis holds.
      \end{itemize}
  \end{enumerate}
\end{proof}

\subsection{Composition}
The definition of composition in Hyland-Ong
games~\cite{DBLP:journals/iandc/HylandO00} is eerily similar to our definition of
trace composition, so we might expect HRAM net composition to correspond to it.
That is, however, only superficially true: the nominal setting that we are
using~\cite{DBLP:journals/entcs/GabbayG12} brings to light what happens to the
justification pointers in composition.

If $A$ is an interface, $s \in \traces_A$ and $X \subseteq \supp(A)$, we define
the \emph{reindexing deletion} operator $s \del X$
as follows, where $\tuple{s',\rho} = s \del X$ inductively:
\begin{align*}
  \emptytrace \del X          & \defeq \tuple{\emptytrace,\id{}} &
  \\ s \snoc \lmsg lap{p'}d \del X & \defeq \tuple{s' \snoc \lmsg la{\rho(p)}{p'}d,\rho} & \text{ if } a \notin X
  \\ s \snoc \lmsg lap{p'}d \del X & \defeq \tuple{s',\rho \cup \{p' \mapsto \rho(p)\}} & \text{ if } a \in X
\end{align*}
We write $s \del X$ for $s'$ when $s \del X = \tuple{s',\rho}$ in the following
definition:
\begin{definition}
  The \emph{game composition} of the sets of traces $S_1 \subseteq \traces_{A \fun B}$
  and $S_2 \subseteq \traces_{B' \fun C}$ with $\pi \vdash B \pneq B'$ is
  \[
    S_1 \gcompose S_2 \defeq
      \{s \del B \mid s \in \traces_{A \tensor B \tensor C}
        \wedge s \del C \in S_1
        \wedge \pi \cdot s^{*B} \del A \in S_2 \}
  \]
\end{definition}
Clearly we have $S_1 \compose S_2 \neq S_1 \gcompose S_2$ for
sets of traces $S_1$ and $S_2$, which
reinforces the practical problem in the beginning of this section.

Composition is constructed out of three copycat-like behaviours, as sketched in Fig.~\ref{fig:composition}
for a typical play at some types $A$,$B$ and $C$. As a trace in the nominal model, this  is:
\begin{multline*}
(q6, p0, p1)\snoc 
(q4, p1, p2)\snoc
(q3, p2, p3)\snoc\\
(q2, p1, p4)\snoc
(q1, p4, p5)\snoc
(q5, p1, p6)\snoc
(a5, p6)\snoc\\
(a1, p5)\snoc
(a2, p4)\snoc 
(a3, p3)\snoc
(a4, p2)\snoc
(a6, p1)
\end{multline*}
\begin{figure}
\begin{center}\begin{tikzpicture}[descr/.style={text=black}] 
  \matrix [column sep=1.0em]
  {
       \node(A){$(A$}; & \node(B){$B)$}; & \node(B'){$(B'$}; & \node(C){$C)$};
       & \node(A'){$(A'$}; & \node(C'){$C')$};
    \\                           &                       &                       &                       &                       & \node(q6){$q_6^\lO$};
    \\                           &                       &                       & \node(q4){$q_4^\lP$}; &                       &
    \\                           &                       & \node(q3){$q_3^\lO$}; &                       &                       &
    \\                           & \node(q2){$q_2^\lP$}; &                       &                       &                       &
    \\     \node(q1){$q_1^\lO$}; &                       &                       &                       &                       &
    \\                           &                       &                       &                       & \node(q5){$q_5^\lP$}; &
    \\                           &                       &                       &                       & \node(a5){$a_5^\lO$}; &
    \\     \node(a1){$a_1^\lP$}; &                       &                       &                       &                       &
    \\                           & \node(a2){$a_2^\lO$}; &                       &                       &                       &
    \\                           &                       & \node(a3){$a_3^\lP$}; &                       &                       &
    \\                           &                       &                       & \node(a4){$a_4^\lO$}; &                       &
    \\                           &                       &                       &                       &                       & \node(a6){$a_6^\lP$};
  \\ };
  \path[color=white]
    (A) edge node[descr]{$\Rightarrow$} (B)
    (B) edge node[descr]{$\tensor$} (B')
    (B') edge node[descr]{$\Rightarrow$} (C)
    (C) edge node[descr]{$\rightarrow$} (A')
    (A') edge node[descr]{$\Rightarrow$} (C')
    ;
  \path[->] 
    (q4) edge [bend right=5] (q6)
    (q3) edge (q4)
    (q2) edge [bend left=30] (q6)
    (q1) edge (q2)
    (q5) edge (q6)
    (a5) edge [bend right=90] (q5)
    (a1) edge (q1)
    (a2) edge (q2)
    (a3) edge (q3)
    (a4) edge (q4)
    (a6) edge (q6)
  ;
  \path[dashed,->] 
     (q4) edge [bend left=15] (q6)
     (q2) edge (q3)
     (q5) edge (q1)
     (a1) edge (a5)
     (a3) edge (a2)
     (a6) edge (a4)
  ;
  \path[dotted,->] 
    (q2) edge [bend right=15] (q6)
  ;
\end{tikzpicture}\end{center}
\vspace{-.5cm}
\caption{Composition from copycat}
\label{fig:composition}
\end{figure}
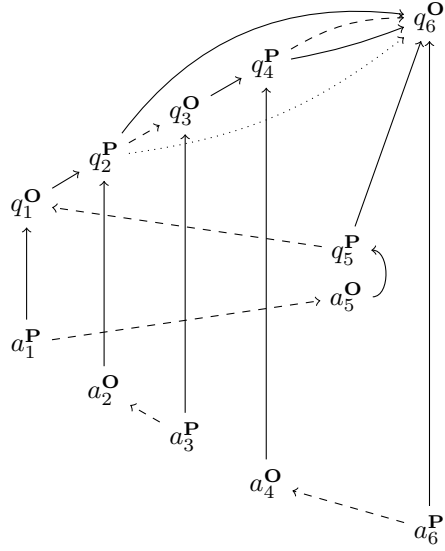
We see that this \emph{almost} corresponds to three interleaved copycats as
described above; between $A,B,C$ and $A',B',C'$. There is, however, a small
difference: The move $q_1$, if it were to blindly follow the recipe of a
copycat, would dereference the pointer $p_4$, yielding $p_3$, and so
incorrectly make the move $q_5$ justified by $q_3$, whereas it really should be
justified by $q_6$ as in the diagram. This is precisely the problem explained at the beginning of this section.

To make a pointer \emph{extension}, when the $B$-initial move $q_3$
is performed, it should map $p_4$ not only to $p_3$, but also to the
pointer that $p_2$ points to, which is $p_1$ (the dotted line in the diagram).
When the A-initial move $q_1$ is performed, it has access to both of these pointers
that $p_4$ maps to, and can correctly make the $q_5$ move by associating it with pointers $p_1$
and a fresh $p_6$.

Let $\mf {A'}$, $\mf {B'}$, and $\mf {C'}$ be game interfaces such that
$\pi_{\mf A} \vdash \mf A \pneq \mf{A'}$,
$\pi_{\mf B} \vdash \mf B \pneq \mf{B'}$,
$\pi_{\mf C} \vdash \mf C \pneq \mf{C'}$,
and
\begin{align*}
     \engine {A' \fun A} {P_A} & = \copycat_{\iextq,\pi^{-1}_{\mf A},\mf A'}
  \\ \engine {B \fun B'} {P_B} & = \copycat_{\iexti,\pi_{\mf B},\mf B}
  \\ \engine {C \fun C'} {P_C} & = \copycat_{\icopyi,\pi_{\mf C},\mf C}, \text{ where }
  \\ \iexti  & \defeq 0,3 \leftarrow \niget\ 0 \seq 1 \leftarrow \nialloc\ 1,0
  \\ \iextq  & \defeq \emptyset,0 \leftarrow \niget\ 0 \seq 1 \leftarrow \nialloc\ 1,3
\end{align*}
Then the game composition operator $\gcomp_{\mf A,\mf B,\mf C}$ 
is:
\[\gcomp_{\mf A,\mf B,\mf C} \defeq \engine {(A \fun B) \tensor (B' \fun C) \fun (A' \fun C')} {P_A \cup P_B \cup P_C}.
\]

Using the game composition operator $\gcomp$ we can define GAM-net composition using \netcat compact closed combinators. Let $f:\mf A\Rightarrow \mf B, g:\mf B\Rightarrow \mf C$ be GAM-nets. Then their composition is defined as
\begin{align*}
&f;_{GAM}g \defeq 
\Lambda^{-1}_{A}(\Lambda_A(f)\tensor\Lambda_B(g)); \gcomp_{\mf A,\mf B,\mf C})),
\text{where}\\
&\Lambda_A(f:A\rightarrow B)\defeq(\eta_A;(\id {A\dual}\tensor f)):I\rightarrow A\dual\tensor B\\
&\Lambda^{-1}_A(f:I\rightarrow A\tensor B)\defeq((\id A \tensor f);(\varepsilon_A\tensor \id B)):A\rightarrow B.
\end{align*}
Composition is represented diagrammatically as in Fig.~\ref{fig:cgam}.
\begin{figure}
\begin{center}

  \begin{tikzpicture}
    \node[scale=0.9]{
      \setlength{\unitlength}{4144sp}%
\begingroup\makeatletter\ifx\SetFigFont\undefined%
\gdef\SetFigFont#1#2#3#4#5{%
  \reset@font\fontsize{#1}{#2pt}%
  \fontfamily{#3}\fontseries{#4}\fontshape{#5}%
  \selectfont}%
\fi\endgroup%
\begin{picture}(2454,2049)(1159,-1423)
\thinlines
{\color[rgb]{0,0,0}\put(1711,-511){\framebox(450,450){}}
}%
{\color[rgb]{0,0,0}\put(1711,-1411){\framebox(450,450){}}
}%
{\color[rgb]{0,0,0}\put(2611,-1411){\framebox(450,1800){}}
}%
{\color[rgb]{0,0,0}\put(3376,-1186){\vector( 1, 0){225}}
}%
{\color[rgb]{0,0,0}\multiput(3061,164)(-60.00000,0.00000){8}{\line(-1, 0){ 30.000}}
\put(2611,164){\vector(-1, 0){0}}
}%
{\color[rgb]{0,0,0}\multiput(2611,-1186)(60.00000,0.00000){8}{\line( 1, 0){ 30.000}}
\put(3061,-1186){\vector( 1, 0){0}}
}%
{\color[rgb]{0,0,0}\multiput(2611,-286)(55.38462,0.00000){7}{\line( 1, 0){ 27.692}}
\multiput(2971,-286)(0.00000,-60.00000){8}{\line( 0,-1){ 30.000}}
\multiput(2971,-736)(-55.38462,0.00000){7}{\line(-1, 0){ 27.692}}
\put(2611,-736){\vector(-1, 0){0}}
}%
{\color[rgb]{0,0,0}\put(2296,-736){\line(-1, 0){810}}
\put(1486,-736){\line( 0,-1){450}}
\put(1486,-1186){\vector( 1, 0){225}}
}%
{\color[rgb]{0,0,0}\put(2296,164){\line(-1, 0){810}}
\put(1486,164){\line( 0,-1){450}}
\put(1486,-286){\vector( 1, 0){225}}
}%
{\color[rgb]{0,0,0}\put(2161,-286){\line( 1, 0){135}}
}%
{\color[rgb]{0,0,0}\put(2476,-286){\vector( 1, 0){135}}
}%
{\color[rgb]{0,0,0}\put(2161,-1186){\line( 1, 0){135}}
}%
{\color[rgb]{0,0,0}\put(2476,-1186){\vector( 1, 0){135}}
}%
{\color[rgb]{0,0,0}\put(2611,-736){\line(-1, 0){135}}
}%
{\color[rgb]{0,0,0}\put(3061,-1186){\line( 1, 0){135}}
}%
{\color[rgb]{0,0,0}\put(3196,164){\vector(-1, 0){135}}
}%
{\color[rgb]{0,0,0}\put(1171,614){\line( 1, 0){2295}}
\put(3466,614){\line( 0,-1){450}}
\put(3466,164){\line(-1, 0){ 90}}
}%
{\color[rgb]{0,0,0}\put(2611,164){\line(-1, 0){135}}
}%
\put(1936,-1231){\makebox(0,0)[b]{\smash{{\SetFigFont{10}{12.0}{\familydefault}{\mddefault}{\updefault}{\color[rgb]{0,0,0}$g$}%
}}}}
\put(1936,-331){\makebox(0,0)[b]{\smash{{\SetFigFont{10}{12.0}{\familydefault}{\mddefault}{\updefault}{\color[rgb]{0,0,0}$f$}%
}}}}
\put(2836,-556){\makebox(0,0)[b]{\smash{{\SetFigFont{10}{12.0}{\familydefault}{\mddefault}{\updefault}{\color[rgb]{0,0,0}$K$}%
}}}}
\put(2386,-331){\makebox(0,0)[b]{\smash{{\SetFigFont{10}{12.0}{\familydefault}{\mddefault}{\updefault}{\color[rgb]{0,0,0}$B$}%
}}}}
\put(2386,119){\makebox(0,0)[b]{\smash{{\SetFigFont{10}{12.0}{\familydefault}{\mddefault}{\updefault}{\color[rgb]{0,0,0}$A$}%
}}}}
\put(2386,-781){\makebox(0,0)[b]{\smash{{\SetFigFont{10}{12.0}{\familydefault}{\mddefault}{\updefault}{\color[rgb]{0,0,0}$B$}%
}}}}
\put(2386,-1231){\makebox(0,0)[b]{\smash{{\SetFigFont{10}{12.0}{\familydefault}{\mddefault}{\updefault}{\color[rgb]{0,0,0}$C$}%
}}}}
\put(3286,-1231){\makebox(0,0)[b]{\smash{{\SetFigFont{10}{12.0}{\familydefault}{\mddefault}{\updefault}{\color[rgb]{0,0,0}$C$}%
}}}}
\put(3286,119){\makebox(0,0)[b]{\smash{{\SetFigFont{10}{12.0}{\familydefault}{\mddefault}{\updefault}{\color[rgb]{0,0,0}$A$}%
}}}}
\put(1396,-106){\makebox(0,0)[b]{\smash{{\SetFigFont{10}{12.0}{\familydefault}{\mddefault}{\updefault}{\color[rgb]{0,0,0}$\eta$}%
}}}}
\put(1396,-1006){\makebox(0,0)[b]{\smash{{\SetFigFont{10}{12.0}{\familydefault}{\mddefault}{\updefault}{\color[rgb]{0,0,0}$\eta$}%
}}}}
\put(3556,344){\makebox(0,0)[b]{\smash{{\SetFigFont{10}{12.0}{\familydefault}{\mddefault}{\updefault}{\color[rgb]{0,0,0}$\varepsilon$}%
}}}}
\end{picture}%
    };
  \end{tikzpicture}

\end{center}
\caption{Composing GAMs using the $K$ HRAM}
\label{fig:cgam}
\end{figure}
Note the comparison with the naive composition from Fig.~\ref{fig:nlhc}. HRAMs $f$ and $g$ are not plugged in directly, although the interfaces match. Composition is mediated by the operator $\gcomp$, which preserves the locality of freshly generated names, exchanging non-local pointer names with local pointer names and storing the mapping between the two as copy-cat links, indicated diagrammatically by dotted lines in $K$. 
\begin{theorem} \label{thm:Kimplements}
  If $f \hastype \mf {A \tfun B}$ and $g \hastype \mf {B' \tfun C}$ are game
  nets such that $\pi_{\mf B} \vdash \mf B \pneq \mf B'$,
  $f$ implements $S_f \subseteq P_{\mf {A \fun B}}$, and
  $g$ implements $S_g \subseteq P_{\mf {B' \fun C}}$, then
  ${f \compose_{GAM} g}$ implements $(S_f \gcompose S_g)$.
\end{theorem}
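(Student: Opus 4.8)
The plan is to pull the categorical combinator $f \compose_{GAM} g$ apart into \netcat tensor and composition, reason entirely at the level of trace sets, and then discharge the two halves of ``implements'' from Definition~\ref{def:gimpl}. Since $\Lambda_A$, $\Lambda^{-1}_A$, $\unit{}$ and $\counit{}$ are all built from identities and copycats, whose denotations are fixed by Theorem~\ref{thm:copycatimplements}, and since $\denot{}$ commutes with $\tensor$ and $\compose$ by Theorems~\ref{thm:tensordenot} and~\ref{thm:compdenot}, I can compute
\[ \denot{f \compose_{GAM} g} \;=\; \denot{\Lambda^{-1}_A\big((\Lambda_A(f) \tensor \Lambda_B(g)) \compose \gcomp_{\mf A,\mf B,\mf C}\big)} \]
and rewrite the right-hand side as a trace-level combination of $\denot{f}$, $\denot{g}$ and $\denot{\gcomp_{\mf A,\mf B,\mf C}}$ assembled from $\tensor$ and $\compose$. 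Everything then hinges on the behaviour of $\gcomp_{\mf A,\mf B,\mf C}$ and on how ordinary synchronisation-and-hiding composition $\compose$, when routed through $\gcomp$, upgrades to the reindexing-deletion composition $\gcompose$ (the difference being exactly that $\del$ extends justification pointers across the hidden interface, whereas $\ndel$ merely erases moves).

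The first genuine step is a correctness lemma for $\gcomp_{\mf A,\mf B,\mf C}$, analogous to Theorem~\ref{thm:copycatimplements} but for the \emph{extension} copycats built from $\iexti$ and $\iextq$ (Theorem~\ref{thm:copycatimplements} only covers the $\icopyi$ case). I would prove that $\gcomp$ correctly relays moves between its three component interfaces and that the pointer it emits on each relayed move is precisely the one $\del$ prescribes. This needs a heap invariant generalising the copycat heap of Definition~\ref{def:copycatheap}: for each pending question the heap must record not only the plain copycat link but also the \emph{extended} pointer that $\iextq$ stores when a $\mf B$-initial question is relayed, so that $\iexti$ can retrieve it when the matching $\mf A$-initial question arrives. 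Maintaining this invariant simultaneously across all three component copycats, along chains of justifiers hidden by composition, is the main obstacle, and I expect it to be markedly more delicate than the plain copycat argument of Lemmas~\ref{thm:swappings}--\ref{thm:copycatplegal}.

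With the $\gcomp$ lemma available, the containment $S_f \gcompose S_g \subseteq \denot{f \compose_{GAM} g}$ goes as follows. A trace $t \in S_f \gcompose S_g$ yields by definition an interaction witness $s \in \traces_{A \tensor B \tensor C}$ with $s \del C \in S_f \subseteq \denot f$, with $\pi_{\mf B} \cdot s^{*B} \del A \in S_g \subseteq \denot g$, and with $t = s \del B$. I would use that $\denot f$ and $\denot g$ contain these projected traces and interleave the corresponding runs of $f$, $g$ and $\gcomp$ into a run of the composite net producing $t$; the $\gcomp$ invariant guarantees that the hidden $\mf B/\mf B'$ communications reindex the pointers exactly so that the visible trace is $s \del B$. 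Saturation of $\denot{f \compose_{GAM} g}$ (the special case of Lemma~\ref{thm:swappings}) absorbs the scheduling freedom between the three asynchronous components, and Lemma~\ref{thm:unravelling} lets me reduce to hereditarily justified sub-traces $s \hereditary \{p\}$ should the pointer bookkeeping need to be verified one justifier at a time.

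For $\lP$-closure, suppose $t' \in (S_f \gcompose S_g) \cap \denot{f \compose_{GAM} g}$ and $t' \snoc \tuple{\lP, m} \in \denot{f \compose_{GAM} g}$; I must show $t' \snoc \tuple{\lP, m} \in S_f \gcompose S_g$. The output on $m$ is emitted by $f$, by $g$, or by $\gcomp$, possibly after silent internal $\mf B$-communications, and I would trace it back to an output of $f$ or $g$ on its own interface. The $\lP$-closure of $\denot f$ with respect to $S_f$ (respectively $\denot g$, $S_g$) keeps the corresponding projection of the extended witness inside $S_f$ (respectively $S_g$), while the $\gcomp$ correctness lemma certifies that the pointer carried by $m$, and those carried by the preceding hidden $\mf B$-moves, are the legally reindexed ones. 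Reassembling these projections through the definition of $\gcompose$ then places $t' \snoc \tuple{\lP, m}$ in $S_f \gcompose S_g$. The essential subtlety, and the reason the extension invariant is indispensable, is that precisely those silent $\mf B$-communications are the steps performing the pointer extension, so their correctness is exactly what licenses the reassembly.
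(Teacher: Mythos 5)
Your proposal matches the paper's own argument: the paper likewise reduces $\denot{f\compose_{GAM} g}$ via Theorems~\ref{thm:tensordenot} and~\ref{thm:compdenot}, proves the containment by induction on the interaction trace using exactly your \emph{extended copycat heap} invariant (Definition~\ref{def:extcopycatheap}, Theorem~\ref{thm:Ksubset}), and obtains $\lP$-closure by identifying the $\ready(n)$ set with the uncopied messages (Lemma~\ref{thm:Kplegal}), with Lemmas~\ref{thm:swappings} and~\ref{thm:unravelling} absorbing scheduling freedom just as you propose. The only slip is cosmetic: in the paper $\iexti$ is the instruction that \emph{stores} the extended pointer at a $\mf B$-initial question and $\iextq$ \emph{retrieves} it at the matching $\mf A$-initial one, the reverse of your naming.
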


\begin{definition} \label{def:extcopycatheap} If $s$ is a trace, $h$ is a heap, $\mf A$ is a game interface,
  and $\pi_\pointers$ is a permutation over $\pointers$,
  we say that $h$ is an \emph{extended copycat heap} for $s$ over $\mf A$ if and only if:

  \begin{enumerate}
    \item
      For every pending $\lP$-question non-initial in $\mf A$ in $s$, i.e.
      $\lmsg \lP ap{p'}d \subseteq s$ ($a \in \qst A \setminus \ini A$), $h(p') =
      \tuple{\bijpi_\pointers(p'),\emptyset}$.
    \item
      For every pending $\lP$-question initial in $\mf A$ in $s$ and its justifying move, i.e.
      $\lmsg \lO {a_1}{p_1}{p}{d_1} \snoc s' \snoc \lmsg \lP {a_2}{p}{p_2}{d_2} \subseteq s$
      ($a_2 \in \ini A$),
      $h(p_2) = \tuple{\bijpi_\pointers(p_2),\bijpi_\pointers(p_1)}$.
  \end{enumerate}
\end{definition}
\begin{theorem} \label{thm:Ksubset}
  If $f \hastype \mf {A \tfun B}$ and $g \hastype \mf {B' \tfun C}$ are game
  nets such that $\pi_{\mf B} \vdash \mf B \pneq \mf B'$,
  $f$ implements $S_f \subseteq P_{\mf {A \fun B}}$, and
  $g$ implements $S_g \subseteq P_{\mf {B' \fun C}}$, then
  $(S_f \gcompose S_g)\cond {st,alt} \subseteq_{\portnames\pointers} \denot {f \compose_{GAM} g} =
  \denot{\Lambda^{-1}_{A}(\Lambda_A(f)\tensor\Lambda_{B'}(g)\compose \gcomp_{\mf A,\mf B,\mf C})}$.
\end{theorem}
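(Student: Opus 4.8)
The plan is to separate the statement into its two conjuncts. The right-hand equality $\denot{f \compose_{GAM} g} = \denot{\Lambda^{-1}_{A}(\Lambda_A(f)\tensor\Lambda_{B'}(g)\compose \gcomp_{\mf A,\mf B,\mf C})}$ is immediate: it is nothing but the definition of $\compose_{GAM}$ unfolded, so nothing needs proving there beyond noting that $\Lambda_{B'}(g):I\tfun B'\dual\tensor C$ is the well-typed reading. The real content is the containment $(S_f \gcompose S_g)\cond{st,alt} \subseteq_{\portnames\pointers} \denot{f \compose_{GAM} g}$, and my first move is to make the right-hand side tractable. Using Theorems~\ref{thm:tensordenot} and~\ref{thm:compdenot} I would push $\denot{-}$ through the tensor and composition, and observe that $\eta_A,\varepsilon_A$, and hence $\Lambda_A,\Lambda^{-1}_A$, are pure copycat wirings (identities in \netcat), so that they only relabel and rewire the external ports, a change absorbed by $\portnames\pointers$-equivalence. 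This reduces the problem to understanding $\denot{\gcomp_{\mf A,\mf B,\mf C}}$ interacting (via trace $\compose$) with $\denot f$ and $\denot g$ on the hidden $B/B'$ and $A/C$ interfaces.

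With that reduction the proof runs parallel to Lemma~\ref{thm:copycatsubset}, using the extended copycat heap of Definition~\ref{def:extcopycatheap} as the invariant for the single $\gcomp$ engine. Given $s \in (S_f \gcompose S_g)\cond{st,alt}$, the definition of $\gcompose$ supplies a witness $u \in \traces_{A\tensor B\tensor C}$ with $u \del C \in S_f \subseteq \denot f$, $\pi_{\mf B}\cdot u^{*B}\del A \in S_g \subseteq \denot g$, and $s \pnptreq u \del B$; single-threadedness and alternation of $s$, together with Lemma~\ref{thm:swappings} and saturation, let me process $u$ in alternating Opponent/Proponent pairs. I would then induct on the length of $u$, maintaining the hypothesis that after the corresponding prefix the net has reached a configuration in which the $f$-sub-net realizes the matching prefix of $u\del C$, the $g$-sub-net the matching prefix of $\pi_{\mf B}\cdot u^{*B}\del A$, and the $\gcomp$ engine carries an extended copycat heap for the interaction so far, with no spurious pending threads or messages (exactly as in the copycat hypothesis).

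The inductive step is a case analysis on the crossing boundary and the kind of incoming move, driving one of the three copycat behaviours built into $\gcomp$: $P_C$ (plain $\icopyi/\icopyq/\icopya$) on the $C/C'$ side, $P_A$ (with $\iextq$) on the $A/A'$ side, and $P_B$ (with $\iexti$) on the $B/B'$ side. For ordinary questions and answers the argument is identical to Lemma~\ref{thm:copycatsubset}: the appropriate engine fires, extends or dereferences its copycat link, and forwards the move with the correct justifier; the $\ready$-set bookkeeping of Lemma~\ref{lem:copycatready} shows the silent steps can be scheduled just before each observable output. Because $S_f\subseteq\denot f$ and $S_g\subseteq\denot g$ by hypothesis, the sub-nets can always supply their halves of the interaction, and since everything is only claimed up to $\portnames\pointers$-equivalence I am free to choose the freshly allocated pointer names of $\gcomp$ to coincide with those demanded by $s$.

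The main obstacle is the pointer-\emph{extension} case, i.e.\ clause~2 of Definition~\ref{def:extcopycatheap}, which is precisely where $\gcompose$ departs from the ordinary $\compose$ (the dotted link in Fig.~\ref{fig:composition}). When a $B$-initial question is relayed, $\iexti$/$\iextq$ must store a \emph{double} link --- the freshly created pointer mapped both to the incoming pointer and to the grandparent pointer obtained by one $\niget$ dereference --- so that a later $A$-initial move can reconstruct a justifier reaching past the hidden $B$-move into $C$. The crux is to show that the pointer $\gcomp$ dereferences at each such move is exactly the one produced by the reindexing substitution $\rho$ of $s \del B$; this requires tracking the correspondence between the heap chains maintained by $\gcomp$ and the accumulated reindexing of the game composition, and verifying that the extended copycat heap invariant is preserved through arbitrary interleavings of the three copycats. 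Everything else is the routine copycat case analysis already carried out in Lemma~\ref{thm:copycatsubset}.
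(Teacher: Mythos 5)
Your proposal is correct and follows essentially the same route as the paper: reduce $\denot{f\compose_{GAM}g}$ to $\denot{\Lambda_A(f)}\tensor\denot{\Lambda_{B'}(g)}\compose\denot{\gcomp_{\mf A,\mf B,\mf C}}$ via Theorems~\ref{thm:tensordenot} and~\ref{thm:compdenot}, then induct on the length of the witness interaction trace in $\traces_{A\tensor B\tensor C}$, maintaining as invariant that the $\gcomp$ engine's heap is exactly the union of two plain copycat heaps (for $A'\fun A$ and $C\fun C'$) and an extended copycat heap for $B\fun B'$, with the case analysis on polarity and on $A$/$B$/$C$ boundary driving the three built-in copycats. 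You also correctly isolate the only genuinely new step beyond Lemma~\ref{thm:copycatsubset}, namely the $\iexti$/$\iextq$ double link establishing and consuming clause~2 of Definition~\ref{def:extcopycatheap} for $B$-initial and $A$-initial moves, which is exactly where the paper's argument departs from plain copycat.
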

\begin{proof}
  We show that $s' \in (S_f \gcompose S_g)\cond{st,alt}$ implies that there
  exists a $\pi_\pointers$ such that
  $\pi_{\mf {A,C}} \cdot \pi_\pointers \cdot s' \in \denot {f \compose_{GAM} g}
   = \denot{\Lambda^{-1}_{A}(\Lambda_A(f)\tensor\Lambda_{B'}(g)\compose \gcomp_{\mf A,\mf B,\mf C})}
   = \denot{\Lambda_A(f)} \tensor \denot{\Lambda_{B'}(g)} \compose \denot{\gcomp_{\mf A,\mf B,\mf C}}$.
  Recall the definition of game composition:
  \[ S_f \gcompose S_g \defeq
      \{s \del B \mid s \in \traces_{A \tensor B \tensor C}
        \wedge s \del C \in S_f
        \wedge \pi_{\mf B} \cdot s^{*B} \del A \in S_g \}
  \]
  We proceed by induction on the length of such an $s$:
  \begin{itemize}
    \item Hypothesis.
      There exists an $s_\gcomp$ such that
      $\initial(\gcomp_{\mf {A,B,C}}) \netstep {s_\gcomp} n$ where $n = \netconf {\{\engineconf \emptyset h \hastype E\}} \emptyset$
      and $h$ is exactly (nothing more than) the union of a copycat heap for
      $s_\gcomp$ over $\mf {A' \fun A}$, a copycat heap for $s_\gcomp$ over $\mf {C \fun C'}$ and
      an extended copycat heap for $s_\gcomp$ over $\mf {B \fun B'}$.

      Let
      \begin{align*}
           s_f          & \defeq s \del C
        \\ s_g          & \defeq \pi_{\mf B} \cdot s^{*B} \del A
        \\ s_{f \compose g} & \defeq s \del B
        \\ s_{\gcomp f} & \defeq s_\gcomp \ndel A',B',C,C'
          \text{, the part of $s_\gcomp$ relating to $f$}
        \\ s_{\gcomp g} & \defeq s_\gcomp \ndel A,A',B,C'
          \text{, the part of $s_\gcomp$ relating to $g$}
        \\ s_{\gcomp f \compose g} & \defeq s_\gcomp \ndel A,B,B',C
          \text{, the part of $s_\gcomp$ relating to the whole game net.}
      \end{align*}
      We require that $s_\gcomp$ fulfils $s_{\gcomp f}\dual = s_f$, $s_{\gcomp g}\dual = s_g$,
      and $s_{\gcomp f \compose g} = \pi_{\mf {A,C}} \cdot \pi_\pointers \cdot s_{f \compose g}$.
      Note that $s_{\gcomp f\compose g}$ is the trace of $f \compose_{GAM} g$, by
      the definition of trace composition.

    \item Base case.
      Immediate.
    \item Inductive step.
      Assume $s = s' \snoc \alpha$ and that the hypothesis holds for $s'$ and
      some $\pi_\pointers'$ and $s'_\gcomp$. We
      proceed by cases on the $\alpha$ message:
      \begin{itemize}
        \item
          If $\alpha = \lmsg \lO ap{p'}d$, we have three cases:
          \begin{itemize}
            \item If $a \in \supp(A)$,
                  intuitively this means that we are getting a message from outside
                  the $\gcomp$ engine, and need to propagate it through $\gcomp$ to $f$.
                  We construct $s_\gcomp$ and $\pi_\pointers$, such that
                  $s_\gcomp = s'_\gcomp \snoc \lmsg \lO {\pi_{\mf
                  A}(a)}{\bijpi_\pointers(p)}{\bijpi_\pointers(p')}d \snoc
                  \alpha\dual$, by further sub-cases on $a$ ($\pi_\pointers$ will be
                  determined by steps of the $\gcomp$ configuration):

              \begin{itemize}
                \item
                  $a \in \ini A$ cannot be the case because an initial message in $A$
                  must be justified by an initial ($\lO$-message) in $C$, and so must be a $\lP$-message.
                \item
                  If $a \in (\qst A \setminus \ini A) \cup \ans A$, this means that $s' \del
                  C \snoc \alpha = (s' \snoc \alpha) \del C$ as the message
                  must be justified by a message from $\mf A$. As $f$ is $\lO$-closed
                  $s \del C \in \denot {\Lambda_A(f)}$.
                  This trace can be stepped to by $n'$ just like how it was done
                  in Theorem~\ref{thm:copycatsubset}. We can verify that
                  the parts of the hypothesis not in that theorem hold -- in particular
                  for this case we have
                  $s_{\gcomp f} = s'_{\gcomp f} \snoc \alpha\dual$, so indeed
                  $s_{\gcomp f}\dual = s_f$ as required.
              \end{itemize}
            \item $a \in \supp(B)$:

              Intuitively this means that $g$ is sending a message to $f$, which
              has to go through $\gcomp$.
              We construct $s_\gcomp$ and $\pi_\pointers$, such that
              $s_\gcomp = s'_\gcomp \snoc \lmsg \lO {a}{\bijpi_\pointers(p)}{\bijpi_\pointers(p')}d \snoc
              \pi_{\mf B}\cdot \alpha\dual$, by further sub-cases on $a$ ($\pi_\pointers$ will be
              determined by steps of the $\gcomp$ configuration):
              \begin{itemize}
                \item
                  If $a \in \ini B$, there must be a pending $\lP$-message from $\mf C$
                  justifying $\alpha$ in $s'$, i.e. $\lmsg \lP {a_0}{p_0}{\bijpi_\pointers(p)}{d_0} \subseteq s'$
                  and then by Definition~\ref{def:copycatheap}
                  $h(\bijpi_\pointers(p)) = \tuple{p,\emptyset}$ (as $\bijpi_\pointers$ is its own inverse).
                  This means that (running the $\iexti$ instruction) we get:
                  \begin{align*}
                  n' \netstep
                          {\lmsg \lO {\pi_{\mf B}(a)}{\bijpi_\pointers(p)}{\bijpi_\pointers(p')}d}
                      \netstep{}^*
                      \netstep{\alpha\dual} & \\
                    \netconf {\{\engineconf \emptyset {h \cup \{ p' \hmap {\bijpi_\pointers(p'),p}\}} \hastype E\}} \emptyset = & n
                  \end{align*}
                  Now $\pi_{\mf B} \cdot \alpha\dual$ is a new pending
                  $\lP$-question in the trace that is initial in $\mf{B \fun
                  B'}$, but our new heap mapping fulfils clause (2) of
                  Definition~\ref{def:extcopycatheap} as required.
                \item
                  If $a \in (\qst B \setminus \ini B) \cup \ans B$, this is
                  similar to the $\mf A$ case (note that the extended copycat only
                  differs from the ordinary copycat for initial messages).
              \end{itemize}
            \item If $a \in \supp(C)$.

              Intuitively this means that we are getting a message from outside
              the $\gcomp$ engine, and need to propagate it through $\gcomp$ to $g$.
              We construct $s_\gcomp$ and $\pi_\pointers$, such that:
              \[s_\gcomp = s'_\gcomp \snoc \lmsg \lO {\pi_{\mf
              C}(a)}{\bijpi_\pointers(p)}{\bijpi_\pointers(p')}d \snoc
              \alpha\dual\]
              In this case, the code that we will run is just that of
              $\copycat$, so we can proceed like in
              Theorem~\ref{thm:copycatsubset}, easily verifying our additional
              assumptions.
          \end{itemize}
        \item
          If $\alpha = \lmsg \lP ap{p'}d$, we have three cases:
          \begin{itemize}
            \item If $a \in \supp(A)$, intuitively this means that we get
              a message from $f$ and need to propagate it through $\gcomp$ to
              the outside.
              By further sub-cases on $a$, we construct $s_\gcomp$ and $\pi_\pointers$, such that:
              \[s_\gcomp = s'_\gcomp \snoc \alpha\dual \snoc \lmsg \lP {\pi_{\mf A}(a)}{\bijpi_\pointers(p)}{\bijpi_\pointers(p')}d\]
              The pointer permutation $\pi_\pointers$ will be determined by steps of the $\gcomp$ configuration.
              \begin{itemize}
                \item
                  If $a \in \ini A$,
                  then $\alpha$ must be justified in $s'$ by a pending
                  and initial $\lP$-question from $\mf B$ by the definition of
                  $\mf A \fun \mf B$ which must in turn be justified by a pending
                  and initial $\lO$-question from $\mf C$ by the definition of
                  $\mf B \fun \mf C$. In $s'_\gcomp$, we have
                  (since $s'_{\gcomp f\compose g} = \pi_{\mf {A,C}} \cdot \pi_\pointers \cdot s'_{f \compose g}$)
                    \[s'_\gcomp =
                      s_1 \snoc \lmsg \lO {a_{\mf C'}}{p_0}{p_{\mf C'}}{d_{\mf C'}} \snoc
                      s_2 \snoc \lmsg \lP {a_{\mf B}}{p_{\mf C'}}{p}{d_{\mf C'}} \snoc
                      s_3\]

                  This means that clause (2) in Definition~\ref{def:extcopycatheap}
                  applies, such that $h(p) = \tuple{\bijpi_\pointers(p),\bijpi_\pointers(p_0)}$
                  and that (running the $\iextq$ instruction) we get:
                   \begin{align*}
                     n'
                        \netstep{\alpha\dual}
                        \netstep{}^* \netstep
                        {\lmsg \lP {\pi_{\mf A}(a)}{\bijpi_\pointers(p)}{\bijpi_\pointers(p')}d} & \\
                      \netconf {\{\engineconf \emptyset {h \cup \{ \bijpi_\pointers(p') \hmap {p',d}\}} \hastype E\}} \emptyset = & n
                    \end{align*}

                  Clause (1) of Definition~\ref{def:extcopycatheap} applies to these
                  new messages and trivially holds.

                \item
                  When $a \in (\qst A \setminus \ini A) \cup \ans A$,
                  the code that we will run is just that of
                  $\copycat$, so we can proceed like in
                  Theorem~\ref{thm:copycatsubset}, also verifying our additional
                  assumptions.
              \end{itemize}
            \item If $a \in \supp(B)$, intuitively this means that $f$ is
              sending a message to $g$, which has to go through $\gcomp$.
              \begin{itemize}
                \item
                  $a \in \ini B$ cannot be the case for a $\lP$-message.
                \item
                  When $a \in (\qst B \setminus \ini B) \cup \ans B$,
                  the code that we will run is just that of
                  $\copycat$, so we can proceed like in
                  Theorem~\ref{thm:copycatsubset}, also verifying our additional
                  assumptions.
              \end{itemize}
            \item If $a \in \supp(C)$, intuitively this means that we get
              a message from $g$ and need to propagate it through $\gcomp$ to
              the outside.
              \begin{itemize}
                \item
                  $a \in \ini C$ cannot be the case for a $\lP$-message.
                \item
                  When $a \in (\qst C \setminus \ini C) \cup \ans C$,
                  the code that we will run is just that of
                  $\copycat$, so we can proceed like in
                  Theorem~\ref{thm:copycatsubset}, also verifying our additional
                  assumptions.
              \end{itemize}
          \end{itemize}
      \end{itemize}
  \end{itemize}
\end{proof}
\begin{lemma} \label{thm:Kplegal}
  If $f \hastype \mf {A \tfun B}$ and $g \hastype \mf {B' \tfun C}$ are game
  nets such that $\pi_{\mf B} \vdash \mf B \pneq \mf B'$,
  $f$ implements $S_f \subseteq P_{\mf {A \fun B}}$, and
  $g$ implements $S_g \subseteq P_{\mf {B' \fun C}}$, then
  $\denot {(f \compose_{GAM} g)}$ is $\lP$-closed with respect to
  $(S_f \gcompose S_g)$.
\end{lemma}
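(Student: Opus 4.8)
The plan is to prove $\lP$-closure by the same induction on trace length used for Theorem~\ref{thm:Ksubset}, maintaining its correspondence invariant, and to supply the one new ingredient — the $\lP$-closure of the components — at the step where an output is appended. Concretely, suppose $s' \in \denot{f \compose_{GAM} g} \cap (S_f \gcompose S_g)$ and $s = s' \snoc \lmsg \lP a p{p'}d \in \denot{f \compose_{GAM} g}$. By $S_f \gcompose S_g$-membership of $s'$ there is a witness $t' \in \traces_{A \tensor B \tensor C}$ with $t' \del B = s'$, $s_f \defeq t' \del C \in S_f$ and $s_g \defeq \pi_{\mf B} \cdot {t'}^{*B} \del A \in S_g$; by $S_f \subseteq \denot f$, $S_g \subseteq \denot g$ these projections are realised inside the mediated net, whose denotation, by Theorems~\ref{thm:tensordenot} and~\ref{thm:compdenot}, is $\denot{\Lambda_A(f)} \tensor \denot{\Lambda_{B'}(g)} \compose \denot{\gcomp_{\mf A,\mf B,\mf C}}$. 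The goal is then to extend $t'$ to a witness $t$ for $s$ so that $t \del B = s$, $t \del C \in S_f$ and $\pi_{\mf B} \cdot t^{*B} \del A \in S_g$, whence $s \in (S_f \gcompose S_g)$ by definition.

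The key observation is that the external move $\alpha = \lmsg \lP a p{p'}d$ is a $\lP$-move of $\mf A \fun \mf C$, so $a \in \opp A$ or $a \in \prop C$, and operationally it is emitted by the $\gcomp_{\mf A,\mf B,\mf C}$ engine sparking to its external $A'$- or $C'$-port. By the copycat structure of $\gcomp$ together with Lemma~\ref{lem:copycatready} and Theorem~\ref{thm:copycatplegal}, such an external output is ready exactly when $\gcomp$ has received a matching input on its $A$-port from $f$ (resp. its $C$-port from $g$), which is itself a $\lP$-output of the relevant component. Take the case $a \in \opp A$ (the other is symmetric). Then $f$ has produced a $\lP$-move $\gamma$ after $s_f$, so $s_f \snoc \gamma \in \denot f$; since $s_f \in S_f \cap \denot f$ and $\denot f$ is $\lP$-closed with respect to $S_f$ (Definition~\ref{def:gimpl}), we obtain $s_f \snoc \gamma \in S_f$. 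Setting $\beta$ to be $\gamma$ placed on the $A$-component and $t = t' \snoc \beta$, the move lies on $A$, hence is removed by the $A$-deletion defining $s_g$ without disturbing the reindexing of earlier moves, so $\pi_{\mf B} \cdot t^{*B} \del A = s_g \in S_g$ is unchanged; meanwhile $t \del C = s_f \snoc \gamma \in S_f$ and $t \del B = s' \snoc \alpha = s$, which closes this case. The case $a \in \prop C$ is identical with the roles of $f,g,A,C$ interchanged, using the $\lP$-closure of $g$.

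I expect the genuine difficulty to be the justification-pointer bookkeeping performed by $\gcomp$, i.e. verifying that the pointer $p$ carried by $\alpha$ is precisely the $\del B$-reindexing $\rho$ of the pointer carried by the component output $\gamma$, and that any internal $B$-communication interleaved before $\alpha$ is absorbed correctly. For an ordinary copy this is immediate, but when the justifier crosses the hidden arena $\mf B$ at a $B$-initial question, $\gcomp$ performs the pointer \emph{extension} stored by $\iexti$/$\iextq$ (the dotted links of Fig.~\ref{fig:composition}); the matching invariant is exactly Definition~\ref{def:extcopycatheap}, established for the $\gcomp$ configuration inside the induction of Theorem~\ref{thm:Ksubset}. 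The work is to read that heap invariant in reverse — from the external output back to the justifier selected in $S_f$ or $S_g$ — and to check that $\rho$ agrees with the permutation $\pi_\pointers$ used there, so that the appended $B$-moves (which are $\lP$-outputs of one component and, under $^{*B}$, $\lO$-inputs of the other) zip up consistently with both component traces. Finally, since the statement concerns arbitrary traces of $P_{\mf{A \fun C}}$, I would first reduce to the single-threaded, alternating case via the swapping and saturation results (Lemma~\ref{thm:swappings} and Lemma~\ref{lem:saturationendmessage}), which preserve both membership in $\denot{\cdot}$ and the legality conditions defining $S_f \gcompose S_g$, thereby justifying that it suffices to run the above argument on alternating plays.
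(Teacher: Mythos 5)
Your proposal follows essentially the same route as the paper's own (very terse) proof, which likewise reduces the claim to identifying the $\ready(n)$ set of a $\gcomp$ net configuration with its ``uncopied'' messages, maintaining the heap invariant of Theorem~\ref{thm:Ksubset} by induction, and showing those messages are legal for the game composition. Your explicit appeal to the $\lP$-closure of $f$ and $g$ (Definition~\ref{def:gimpl}) to discharge that final legality step, and your flagged concern about the pointer reindexing across hidden $B$-moves via Definition~\ref{def:extcopycatheap}, are precisely the ingredients the paper leaves implicit, so the two arguments coincide in substance.
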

\begin{proof}
  Similar to Theorems~\ref{thm:copycatplegal} and~\ref{thm:Ksubset}.
  We identify the set $\ready(n)$ with ``uncopied'' messages of a $\gcomp$
  net configuration $n$ and show
  that these are legal according to the game composition. Then we show by
  induction that, assuming a heap as in Theorem~\ref{thm:Ksubset}, the
  $\ready(n)$ set is precisely those messages.
\end{proof}

\subsection{Diagonal}
For game interfaces $\mf A_1,\mf A_2,\mf A_3$ and permutations $\pi_{ij}$
such that $\pi_{ij} \vdash \mf A_i \pneq \mf A_j$ for $i\neq j\in\{1,2,3\}$,
we define the family of diagonal engines as:
\[
  \delta_{\pi_{12},\pi_{13},\mf A} = (A_1 \fun A_2 \tensor A_3, P_1\otimes P_2\otimes P_3)
\]
where, for $i\in\{2,3\}$, 
\begin{align*}
  P_1 & \defeq
  \{ q_1 \mapsto \icopyq \seq \niifzero\ 3\ {(\ileap {q_2})}\ {(\ileap {q_3})}
      \\&\qquad\mid q_1 \in \opp {A_1}\cap\qst {A_1} \wedge q_2 = \pi_{12} (q_1)
                                            \wedge q_3 = \pi_{13} (q_1)\}
  \\ & \cup
  \{ a_1 \mapsto \icopya \seq \niifzero\ 3\ {(\ileap {a_2})}\ {(\ileap {a_3})}
      \\&\qquad\mid a_1 \in \opp {A_1}\cap\ans {A_1} \wedge a_2 = \pi_{12} (a_1)
                                            \wedge a_3 = \pi_{13} (a_1)\}\\
  P_i & \defeq
  \{ q_i \mapsto 3 \leftarrow \iset (i-2) \seq \icopyi \seq \ileap {q_1}
  \mid q_i \in \ini {A_i} \wedge q_1 = \pi_{1i}^{-1} (q_i) \}
   \\ & \cup
  \{ q_i \mapsto \icopyq \seq \ileap {q_1}
      \mid q_i \in (\opp{A_i}\cap\qst{A_i}) \setminus \ini{A_i} \wedge q_1 = \pi_{1i}^{-1} (q_i) \}
  \\ & \cup
  \{ a_i \mapsto \icopya \seq \ileap {a_1}
      \mid a_i \in \opp{A_i}\cap\ans{A_i} \wedge a_1 = \pi_{1i}^{-1} (a_i) \}.
\end{align*}
The diagonal is almost identical to the copycat, except that an integer value of 0 or 1 is associated, in the heap,
with the name of each message arriving on the $A_2$ and $A_3$ interfaces (hence the \texttt{set} statements, to be used for routing back messages arriving on $A_1$ using \texttt{ifzero} statements).
By abuse of notation, we also write $\delta$ for the net $\singleton(\delta)$. 
\begin{lemma}\label{lem:prog}
  The $\delta$ net is the diagonal net, i.e.\ 
  $\denot{\delta_{\pi_{12},\pi_{23},\mf A} \compose \Pi_{i}}
    = \denot{\copycat_{\pi_i,\mf A }}.$
\end{lemma}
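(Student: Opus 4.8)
The plan is to show that composing the diagonal with a projection annihilates all activity on the discarded copy, so that the remaining machine is operationally the plain copycat; the equation then follows from the copycat correctness already established. Let $A_j$ be the copy of $\mf A$ that $\Pi_i$ discards, so that $\Pi_i$ keeps $A_i$ and routes $A_j$ into a sink $\sink_{A_j}$ whose only code is $\iend$. Using Theorems~\ref{thm:tensordenot} and~\ref{thm:compdenot} I would first rewrite $\denot{\delta\compose\Pi_i} = \denot{\delta}\compose(\denot{\id{A_i}}\tensor\denot{\sink_{A_j}})$, noting that $\sink_{A_j}$ never produces an output and hence never feeds a message back to $\delta$ on an $A_j$ port.

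The technical core is an invariant on the reachable configurations of the composite, proved by induction on trace length in the style of Theorem~\ref{thm:copycatsubset} and Lemma~\ref{thm:copycatplegal}: no observable message ever occurs on an $A_j$ port, and every time a thread reaches the $\niifzero\ 3$ of the $A_1$-code the integer recovered in register $3$ equals $i-2$. These two clauses reinforce each other. Because the sink emits nothing, $\delta$ never receives an initial question on $A_j$, so register $3$ is set, via the $\niset$ in the $A_i$-code and stored by $\icopyi$, only to the value $i-2$; recovering it through the $\niget$ inside $\icopyq$/$\icopya$ therefore always returns $i-2$, so each $\niifzero\ 3$ sparks the $A_i$-port and never the $A_j$-port, keeping the $A_j$ subsystem inert.

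Granting the invariant, on the ports of $A_1$ and $A_i$ the code assigned by $\delta$ coincides with that of $\copycat_{\icopyi,\pi_i,\mf A}$ up to the extra $\niset$, $\niget$ and $\niifzero\ 3$ steps, all of which are silent and touch only the tag component of heap cells. Since traces observe messages and not heaps, and silent steps are absorbed by the $\netstep{}^*$ closures in the definition of $\netstep s$, these differences leave the set of observable traces unchanged; matching the two machines' observable transitions in both directions then gives $\denot{\delta\compose\Pi_i}=\denot{\copycat_{\pi_i,\mf A}}$.

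The step I expect to be the main obstacle is the invariant. One must show that the routing tag is stored and recovered faithfully across arbitrarily deep nestings of pending questions, so that $\niifzero\ 3$ is genuinely deterministic and always selects $A_i$. This is exactly where the copycat-heap bookkeeping of Definition~\ref{def:copycatheap} and the strict-nesting reasoning of Lemma~\ref{thm:copycatplegal} must be adapted, now carrying the $0/1$ tag as an additional heap component alongside the copycat links.
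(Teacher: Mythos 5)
Your proposal follows essentially the same route as the paper's own proof, which is itself only a two-sentence sketch: an induction on trace length in both directions, driven by a correspondence between the two machines' heaps in which the diagonal's cells carry an extra routing integer identifying the side a move came from. The additional observations you supply --- that the sink half of $\Pi_i$ keeps the discarded copy $A_j$ inert, so the tag read at $\niifzero\ 3$ always selects $A_i$, and that the extra $\niset$/$\niget$/$\niifzero$ steps are silent and therefore invisible in the trace semantics --- are precisely the details the paper elides, and they are correct.
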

\begin{proof}
  We show that $s \in \denot{\delta_{\pi_{12},\pi_{23}} \compose \Pi_1}$
  implies $s \in \denot{\copycat_{\pi_{12}, \mf A_1,
  \mf A_2}}$ and the converse (the $\Pi_2$ case is analogous), by induction
  on the trace length. There is a simple relationship between the heap
  structures of the respective net configurations --- they have the same
  structure but the diagonal stores additional integers for identifying what
  ``side'' a move comes from.
\end{proof}

\subsection{Fixpoint}\label{sec:fix}

We define a family of GAMs $\mathit{Fix}_{\mf A}$ with interfaces $(\mf A_1\Rightarrow \mf A_2)\Rightarrow \mf A_3$ where there exist permutations $\pi_{i,j}$ such that $\pi_{i,j} \vdash \mf A_i \pneq \mf A_j$ for $i\neq j\in\{1,2,3\}$. The fixpoint engine is defined as $\mathit{Fix}_{\pi_{12},\pi_{13},\mf A}=\Lambda^{-1}_A(\delta_{\pi_{12},\pi_{13},\mf A})$.

Let $\mathit{fix}_{\pi_{12},\pi_{13},\mf A}:(\mf A\Rightarrow \pi_{12}\cdot\mf A)\Rightarrow \pi_{13}\cdot\mf A$ be the game-semantic strategy for fixpoint in Hyland-Ong games~\cite[p.~364]{DBLP:journals/iandc/HylandO00}.

\begin{theorem}
$\mathit{Fix}_{\pi_{12},\pi_{13},\mf A}$ implements $\mathit{fix}_{\pi_{12},\pi_{13},\mf A}$.
\end{theorem}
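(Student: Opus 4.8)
The plan is to reduce the statement to two ingredients that are already in place: the correctness of the diagonal net $\delta$, and the fact that the inverse-currying operation $\Lambda^{-1}_A$ preserves the \emph{implements} relation. Since by definition $\mathit{Fix}_{\pi_{12},\pi_{13},\mf A} = \Lambda^{-1}_A(\delta_{\pi_{12},\pi_{13},\mf A})$, and the Hyland--Ong fixpoint $\mathit{fix}_{\pi_{12},\pi_{13},\mf A}$ admits exactly the same description as $\Lambda^{-1}$ applied to the diagonal strategy (the construction on p.~364 of \cite{DBLP:journals/iandc/HylandO00}), it suffices to show (i) that $\delta$ implements the game-semantic diagonal strategy $\mathit{diag}$, and (ii) that whenever a net $h$ implements a set of traces $S$, the net $\Lambda^{-1}_A(h)$ implements $\Lambda^{-1}_A(S)$.

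For (i), I would argue from Lemma~\ref{lem:prog}, which gives $\denot{\delta \compose \Pi_i} = \denot{\copycat_{\pi_i,\mf A}}$ for each projection, together with Theorem~\ref{thm:copycatimplements}, by which each $\copycat_{\pi_i,\mf A}$ implements $\Copycat$. The diagonal strategy is characterised in the (saturated) product structure of games as the unique strategy whose composite with each projection is copycat, so these two facts pin down $\denot\delta$ as a net implementing $\mathit{diag}$. Concretely I would verify the two clauses of Definition~\ref{def:gimpl} — that $\mathit{diag} \subseteq \denot\delta$ and that $\denot\delta$ is $\lP$-closed with respect to $\mathit{diag}$ — by transporting the corresponding clauses for the copycats along the projection equalities, using that $\delta$ differs from a copycat only by the integer routing tags, as already noted in the proof of Lemma~\ref{lem:prog}.

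For (ii), I would unfold $\Lambda^{-1}_A(h) = (\id A \tensor h)\compose(\varepsilon_A \tensor \id B)$ and observe that it is assembled purely from copycats (the identity $\id A$ and the counit $\varepsilon_A$, both copycats by the definitions of the compact-closed structure), the tensor, and \netcat composition. Theorems~\ref{thm:tensordenot} and~\ref{thm:compdenot} then let me compute $\denot{\Lambda^{-1}_A(h)} = \Lambda^{-1}_A(\denot h)$, where $\Lambda^{-1}_A$ on the right is the matching operation on sets of traces; so once $\mathit{fix} = \Lambda^{-1}(\mathit{diag})$ is in place, the subset clause of \emph{implements} is immediate from (i). The remaining obligation, and where I expect the real work to be, is the $\lP$-closure clause through the feedback composition.

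The main obstacle is precisely verifying $\lP$-closure across the knot tied by $\varepsilon_A$: I must show that, when $\mathit{Fix}$ is fed a legal fixpoint input, the moves routed around the feedback loop never acquire a justification pointer into a move hidden by the composition. Unlike general GAM composition, the mediating morphism here is a plain copycat rather than the extension operator $\gcomp$, so no pointer extension is involved; I would therefore reuse the heap-invariant technique of Theorems~\ref{thm:copycatplegal} and~\ref{thm:Ksubset}, setting up a copycat-heap invariant in the sense of Definition~\ref{def:copycatheap} for the $\mathit{Fix}$ configuration and proving by induction on trace length that every output in $\ready(n)$ reachable after a legal prefix is itself legal. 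Showing this invariant is preserved by the routing tags of $\delta$ together with the counit copycat is the delicate step; once it holds, $\lP$-closure follows and the theorem is complete.
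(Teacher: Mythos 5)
Your reduction founders on the step you treat as bookkeeping: the claim that $\mathit{fix}_{\pi_{12},\pi_{13},\mf A}$ ``admits exactly the same description as $\Lambda^{-1}$ applied to the diagonal strategy'' and that $\Lambda^{-1}_A$ preserves the \emph{implements} relation. The operation $\Lambda^{-1}_A$ used to define $\mathit{Fix}$ is a compact-closed rewiring at the HRAM level, and while the underlying \netcat{} interfaces $A_1 \fun A_2 \tensor A_3$ and $(A_3 \fun A_1) \fun A_2$ are isomorphic (both being $A_1\dual \tensor A_2 \tensor A_3$), the corresponding \emph{game} interfaces are not: they have different sets of initial ports and, crucially, a reversed enabling relation between $A_3$ and $A_1$ (in the diagonal arena initial $A_2,A_3$-questions enable initial $A_1$-questions; in the fixpoint arena initial $A_1$-questions enable initial $A_3$-questions). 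Consequently the legality conditions of Definition~\ref{def:gimpl}'s ambient set $P_{\mf A}$ change under $\Lambda^{-1}$, the set of legal plays changes, and there is no general lemma that $\Lambda^{-1}$ preserves \emph{implements} --- indeed $\mathit{fix}$ is not literally $\Lambda^{-1}(\mathit{diag})$ as a set of justified plays. The paper's entire proof is the observation that this mismatch is harmless \emph{for this particular pair of arenas}: the moves whose justification discipline differs (initial messages in $A_3$) are Opponent moves in the fixpoint arena, so the diagonal's extra permissiveness is never exercised against a legal fixpoint input, the subset clause is checked directly on the three kinds of moves of the strategy, and $\lP$-closure is unaffected because the machine's outputs are the same copycat responses in either reading. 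Your proposal never engages with this, and without it step (ii) is unsound as stated.

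You also locate the ``real work'' in the wrong place. In \netcat{} the counit is defined as $\counit A \defeq \id A$, and $\id A$ is the \emph{empty} net --- pure connectivity, no engine, no heap. So $\Lambda^{-1}_A(\delta)$ is just $\delta$ with ports renamed and repolarised; by Theorems~\ref{thm:tensordenot}, \ref{thm:compdenot} and \ref{thm:structisdenot} its denotation is $\denot{\delta}$ up to relabelling, and there is no feedback-loop heap invariant to establish and no analogue of the pointer-extension issues that $\gcomp$ exists to solve. The delicate step is not operational but arena-theoretic: re-reading the very same traces against a differently-wired enabling relation. Your ingredient (i) is broadly fine in spirit (Lemma~\ref{lem:prog} plus Theorem~\ref{thm:copycatimplements} do pin down the behaviour of $\delta$, modulo the caveat that Lemma~\ref{lem:prog} is stated for trace composition $\compose$ with the sinks $\Pi_i$, not game composition $\gcompose$, so the ``unique strategy with copycat projections'' characterisation does not transfer verbatim), but the theorem is not obtained by composing (i) with a generic transport along $\Lambda^{-1}$; it is obtained by the direct three-case check together with the Opponent-restriction argument above.
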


The proof of this is immediate considering the three cases of moves from the
definition of the game-semantic strategy. It is interesting to note here that
we ``\emph{force}'' a HRAM with interface $A_1\Rightarrow A_2\otimes A_3$ into
a GAM with game interface $(\mf A_3\Rightarrow \mf A_1)\Rightarrow \mf A_2$,
which has underlying interface $( A_3\Rightarrow  A_1)\Rightarrow  A_2$. In the
\netcat category, which is symmetric compact-closed, the two
interfaces are isomorphic (with $A_1\dual\otimes A_2\otimes A_3$), but as
game interfaces they are not. It is rather surprising that we can reuse our
diagonal GAMs in such brutal fashion: in the game interface for fixpoint there
is a reversed enabling relation between $A_3$ and $A_1$. The reason why this
still leads to legal plays only is because the onus of producing the
justification pointers in the initial move for $A_3$ lies with the Opponent,
which cannot exploit the fact that the diagonal is ``wired illegally''. It only
sees the fixpoint interface and must play accordingly. It is fair to say that
that fixpoint interface is more restrictive to the Opponent than the diagonal
interface, because the diagonal interface allows extra behaviours, e.g.\
sending initial messages in $A_3$, which are no longer legal.

\subsection{Other ICA constants}\label{sec:oica}

A GAM net for an integer literal $n$ can be defined using the following
engine (whose interface corresponds to the ICA $\expp$ type).
\begin{align*}
  \mathname{lit}_n & \defeq \engine {\{\port \lO q,\port \lP a\}} P\text{, where}
  \\ P & \defeq \{ q \mapsto \niflip\ 0,1 \seq 1 \leftarrow \niset\ \emptyset
      \seq 2 \leftarrow \niset\ n \seq \nispark\ a
      \}
\end{align*}
We see that upon getting an input question on port $q$, this engine will
respond with a legal answer containing $n$ as its value (register $2$).

The conditional at type $\expp$ can be defined using the following engine,
with the convention that $\{\port \lO {q_i},\port \lP {a_i}\} = \expp_i$.
\begin{align*}
  \mathname{if}  \defeq & \engine {\expp_1 \fun \expp_2 \fun \expp_3 \fun \expp_4} P\text{, where}
  \\ P  \defeq & \{ q_4 \mapsto \icopyi \seq \nispark\ q_1,
  \\           &  a_1 \mapsto \icopya \seq \niflip\ 0,1 \seq \icopyi \seq 
                      \niifzero\ 2\ (\nispark\ q_3)\ (\nispark\ q_2),
  \\           &  a_2 \mapsto \icopya \seq \nispark\ a_4,
  \\           & a_3 \mapsto \icopya \seq \nispark\ a_4 \}
\end{align*}
We can also define primitive operations, e.g. $+ \hastype \expp \fun \expp \fun \expp$,
in a similar manner. An interesting engine is that for $\mathname{newvar}$:
\begin{align*}
\mathname{newvar} \defeq & \engine {(\expp_1 \tensor (\expp_2 \fun \comm_3) \fun \expp_4) \fun \expp_5}P
\\ P \defeq & \{q_5 \mapsto 3 \leftarrow \niset\ 0 \seq \icopyi \seq \nispark\ q_4,
\\          &  q_1 \mapsto \emptyset,2 \leftarrow \niget\ 0 \seq \iflip\ 0,1 \seq 1 \leftarrow \niset\ \emptyset \seq \nispark\ a_1,
\\          &  q_3 \mapsto \niflip\ 0,1 \seq 1 \leftarrow \nialloc\ 0,1 \seq \nispark\ q_2,
\\          &  a_2 \mapsto \emptyset,3 \leftarrow \niget\ 0 \seq \niupdate\ 3\ 2 \seq \icopya \seq \nispark\ a_3,
\\          &  a_4 \mapsto \icopya \seq \nispark\ a_5 \}
\end{align*}
We see that we store the variable in the second component of the
justification pointer that justifies $q_4$, so that it can be accessed in
subsequent requests. A slight problem is that moves in $\expp_2$ will actually
not be justified by this pointer which we remedy in the $q_3$ case, by storing
a pointer to the pointer with the variable as the second component of the
justifier of $q_2$, which means that we can access and update the variable in $a_2$.

We can easily extend the HRAMs with new instructions to interpret parallel
execution and semaphores, but we omit them from the current presentation.
\section{Seamless distributed compilation for ICA}\label{sec:disc}

\subsection{The language ICA}

ICA is PCF extended with constants to facilitate local effects. Its ground
types are expressions and commands ($\mathsf{exp, com}$), with the type of
assignable variables desugared as $\mathsf{var\defeq exp\times (exp\rightarrow
com)}$. Dereferencing and assignment are desugared as the first, respectively
second, projections from the type of assignable variables. The local variable
binder is $\mathsf{new:(var\rightarrow com)\rightarrow com}$. ICA also has a
type of split binary semaphores $\mathsf{sem\defeq com\times com}$, with the
first and second projections corresponding to $\mathsf{set, get}$, respectively
(see~\cite{DBLP:journals/apal/GhicaM08} for the full definition, including the
game-semantic model). 

In this section we give a compilation method for ICA into GAM nets. The
compilation is compositional on the syntax and it uses the constructs of the
previous section. ICA types are compiled into GAM interfaces which correspond
to their game-semantic arenas in the obvious way. We will use $A, B,\ldots$ to
refer to an ICA type and to the GAM interface. Sec.~\ref{sec:gamn} has already
developed all the infrastructure needed to interpret the constants of ICA
(Sec.~\ref{sec:oica}), including fixpoint (Sec.~\ref{sec:fix}). Given an ICA
type judgment $\Gamma\vdash M:A$ with $\Gamma$ a list of variable-type
assignments $x_i:A_i$, $M$ a term and $A$ a type, a GAM implementing it $G_M$
is defined compositionally on the syntax as follows:
\begin{align*}
G_{\Gamma\vdash MM':A}&=\delta_{\pi_1,\pi_2,\Gamma};_{\mathit{GAM}}(G_{\Gamma\vdash M:A\rightarrow B}\otimes 
G_{\Gamma\vdash M':B});_{\mathit{GAM}}\geval_{A,B}\\
G_{\Gamma\vdash\lambda x:A.M:A\rightarrow B} &= \Lambda_A(G_{\Gamma,x:A\vdash M:B} )\\
G_{x:A,\Gamma\vdash x:A} &= \Pi_{\mf G A} \compose \copycat_{A,\pi},
\end{align*}
Where $\geval_{A,B}\defeq\Lambda^{-1}_{B}(\copycat_{A\Rightarrow B,\pi})$ for a
suitably chosen port renaming $\pi$ and $\Pi_{\mf G A}$ and $\Pi_{\mf G 1}$ and
$\Pi_{\mf G 2}$ are HRAMs with signatures $\Pi_{\mf G i}=(A_1\tensor A_2\fun
A_3,P_i)$ such that they copycat between $A_3$ and $A_i$ and ignore
$A_{j\neq i}$.
 The
interpretation of function application, which is the most complex, is shown
diagrammatically in Fig.~\ref{fig:diagap}. The copycat connections are shown
using dashed lines.
\begin{figure}
\centering

  \begin{tikzpicture}
    \node[scale=0.9]{
      \setlength{\unitlength}{4144sp}%
\begingroup\makeatletter\ifx\SetFigFont\undefined%
\gdef\SetFigFont#1#2#3#4#5{%
  \reset@font\fontsize{#1}{#2pt}%
  \fontfamily{#3}\fontseries{#4}\fontshape{#5}%
  \selectfont}%
\fi\endgroup%
\begin{picture}(3939,2499)(-776,-1783)
\thinlines
{\color[rgb]{0,0,0}\put(811,-1096){\line( 0,-1){225}}
}%
{\color[rgb]{0,0,0}\put(901,-1096){\line( 0,-1){225}}
}%
{\color[rgb]{0,0,0}\put(2701,-1096){\line( 0,-1){225}}
}%
{\color[rgb]{0,0,0}\put(-134,-871){\line(-1, 0){180}}
\put(-314,-871){\line( 0,-1){450}}
}%
{\color[rgb]{0,0,0}\put(316,-961){\line( 1, 0){ 90}}
\put(406,-961){\line( 0,-1){360}}
}%
{\color[rgb]{0,0,0}\put(316,-781){\line( 1, 0){180}}
\put(496,-781){\line( 0,-1){540}}
}%
{\color[rgb]{0,0,0}\multiput(-314,-1321)(0.00000,-64.28571){4}{\line( 0,-1){ 32.143}}
}%
{\color[rgb]{0,0,0}\multiput(2701,-1321)(0.00000,-64.28571){4}{\line( 0,-1){ 32.143}}
}%
{\color[rgb]{0,0,0}\put(496,-1321){\line( 0,-1){ 45}}
\multiput(496,-1366)(57.27273,0.00000){6}{\line( 1, 0){ 28.636}}
\put(811,-1366){\line( 0, 1){ 45}}
}%
{\color[rgb]{0,0,0}\multiput(406,-1321)(0.00000,-60.00000){2}{\line( 0,-1){ 30.000}}
\multiput(406,-1411)(58.23529,0.00000){9}{\line( 1, 0){ 29.118}}
\multiput(901,-1411)(0.00000,60.00000){2}{\line( 0, 1){ 30.000}}
}%
{\color[rgb]{0,0,0}\put(2521,119){\line( 1, 0){180}}
\put(2701,119){\line( 0,-1){765}}
}%
{\color[rgb]{0,0,0}\put(2071,479){\line(-1, 0){270}}
\put(1801,479){\line( 0,-1){1125}}
}%
{\color[rgb]{0,0,0}\put(1441,569){\line( 1, 0){270}}
\put(1711,569){\line( 0,-1){1215}}
}%
{\color[rgb]{0,0,0}\put(1441,389){\line( 1, 0){180}}
\put(1621,389){\line( 0,-1){1035}}
}%
{\color[rgb]{0,0,0}\put(1441,-196){\line( 1, 0){ 90}}
\put(1531,-196){\line( 0,-1){450}}
}%
{\color[rgb]{0,0,0}\put(991,569){\line(-1, 0){180}}
\put(811,569){\line( 0,-1){1215}}
}%
{\color[rgb]{0,0,0}\put(991,-196){\line(-1, 0){ 90}}
\put(901,-196){\line( 0,-1){450}}
}%
{\color[rgb]{0,0,0}\put(2071,-196){\line(-1, 0){ 90}}
\put(1981,-196){\line( 0,-1){450}}
}%
{\color[rgb]{0,0,0}\put(2071,119){\line(-1, 0){180}}
\put(1891,119){\line( 0,-1){765}}
}%
{\color[rgb]{0,0,0}\put(586,-1096){\framebox(2340,450){}}
}%
{\color[rgb]{0,0,0}\put(1711,-646){\line( 0,-1){ 45}}
\multiput(1711,-691)(60.00000,0.00000){2}{\line( 1, 0){ 30.000}}
\put(1801,-691){\line( 0, 1){ 45}}
}%
{\color[rgb]{0,0,0}\multiput(1621,-646)(0.00000,-60.00000){2}{\line( 0,-1){ 30.000}}
\multiput(1621,-736)(60.00000,0.00000){5}{\line( 1, 0){ 30.000}}
\multiput(1891,-736)(0.00000,60.00000){2}{\line( 0, 1){ 30.000}}
}%
{\color[rgb]{0,0,0}\multiput(1531,-646)(0.00000,-55.38462){7}{\line( 0,-1){ 27.692}}
\multiput(1531,-1006)(60.00000,0.00000){8}{\line( 1, 0){ 30.000}}
\multiput(1981,-1006)(0.00000,55.38462){7}{\line( 0, 1){ 27.692}}
}%
{\color[rgb]{0,0,0}\multiput(901,-646)(0.00000,-60.00000){8}{\line( 0,-1){ 30.000}}
}%
{\color[rgb]{0,0,0}\multiput(811,-646)(0.00000,-60.00000){8}{\line( 0,-1){ 30.000}}
}%
{\color[rgb]{0,0,0}\multiput(2701,-646)(0.00000,-60.00000){8}{\line( 0,-1){ 30.000}}
}%
{\color[rgb]{0,0,0}\put(-134,-1096){\framebox(450,450){}}
}%
{\color[rgb]{0,0,0}\put(991,-421){\framebox(450,450){}}
}%
{\color[rgb]{0,0,0}\put(991,254){\framebox(450,450){}}
}%
{\color[rgb]{0,0,0}\put(2701,-1546){\line( 0,-1){225}}
\put(2701,-1771){\line( 1, 0){450}}
}%
{\color[rgb]{0,0,0}\put(-314,-1546){\line( 0,-1){225}}
\put(-314,-1771){\line(-1, 0){450}}
}%
{\color[rgb]{0,0,0}\put(-539,-1546){\framebox(3465,225){}}
}%
{\color[rgb]{0,0,0}\put(2071,-421){\framebox(450,1125){}}
}%
\put(1756,-916){\makebox(0,0)[b]{\smash{{\SetFigFont{10}{12.0}{\familydefault}{\mddefault}{\updefault}{\color[rgb]{0,0,0}$K$}%
}}}}
\put( 91,-916){\makebox(0,0)[b]{\smash{{\SetFigFont{10}{12.0}{\familydefault}{\mddefault}{\updefault}{\color[rgb]{0,0,0}$\delta$}%
}}}}
\put(1216,-241){\makebox(0,0)[b]{\smash{{\SetFigFont{10}{12.0}{\familydefault}{\mddefault}{\updefault}{\color[rgb]{0,0,0}$M'$}%
}}}}
\put(2296,119){\makebox(0,0)[b]{\smash{{\SetFigFont{10}{12.0}{\familydefault}{\mddefault}{\updefault}{\color[rgb]{0,0,0}$\geval$}%
}}}}
\put(1216,434){\makebox(0,0)[b]{\smash{{\SetFigFont{10}{12.0}{\familydefault}{\mddefault}{\updefault}{\color[rgb]{0,0,0}$M$}%
}}}}
\put(-674,-1726){\makebox(0,0)[b]{\smash{{\SetFigFont{10}{12.0}{\familydefault}{\mddefault}{\updefault}{\color[rgb]{0,0,0}$\Gamma$}%
}}}}
\put(1261,-1487){\makebox(0,0)[b]{\smash{{\SetFigFont{10}{12.0}{\familydefault}{\mddefault}{\updefault}{\color[rgb]{0,0,0}$K'$}%
}}}}
\put(3016,-1726){\makebox(0,0)[b]{\smash{{\SetFigFont{10}{12.0}{\familydefault}{\mddefault}{\updefault}{\color[rgb]{0,0,0}$B$}%
}}}}
\end{picture}%
    };
  \end{tikzpicture}

\caption{GAM net for application}
\label{fig:diagap}
\end{figure} 
\begin{theorem}
If $M$ is an ICA term, $G_M$ is the GAM implementing it and $\sigma_M$ its game-semantic strategy then $G_M$ implements $\sigma_M$. 
\end{theorem}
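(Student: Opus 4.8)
The plan is to proceed by structural induction on the ICA term $M$, exploiting the fact that the game-semantic denotation $\sigma_M$ and the GAM $G_M$ are defined by the \emph{same} categorical recipe --- composition, tensor, currying, contraction (the diagonal), evaluation, and a fixed stock of constants. Thus it suffices to check two things: (a) that each basic GAM implements the strategy of the corresponding generator, and (b) that every combinator used in the compilation preserves the relation ``implements'' of Definition~\ref{def:gimpl} along the matching operation on strategy trace-sets. Because the compilation mirrors the Hyland--Ong denotation combinator-for-combinator, these two ingredients feed a routine induction.

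First I would collect the preservation lemmas. Composition is already handled: Theorem~\ref{thm:Kimplements} states that if $f$ implements $S_f$ and $g$ implements $S_g$ then $f \compose_{GAM} g$ implements $S_f \gcompose S_g$, and $\gcompose$ is exactly the nominal presentation of strategy composition. For tensor I would prove a companion lemma: if $f$ implements $S_f$ and $g$ implements $S_g$ then $f \tensor g$ implements $S_f \tensor S_g$. The containment $S_f \tensor S_g \subseteq \denot{f \tensor g}$ is immediate from Theorem~\ref{thm:tensordenot}, and $\lP$-closure follows because any $\lP$-move of $f \tensor g$ is a $\lP$-move of exactly one factor, whose legality is inherited from the $\lP$-closure of that factor together with the interleaving condition. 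The remaining structural morphisms $\Lambda_A$, $\Lambda^{-1}_A$, $\geval$ and the projections $\Pi_{\mf G A}$ are all built from $\eta$, $\varepsilon$, identities (copycats), and these two operations, so that ``implements'' for them reduces to Theorem~\ref{thm:copycatimplements}, Theorem~\ref{thm:Kimplements} and the tensor lemma; the contraction $\delta$ is handled by Lemma~\ref{lem:prog} together with the verification that $\delta$ implements the game-semantic diagonal.

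With this machinery the inductive step is short. For a variable $x{:}A,\Gamma \vdash x{:}A$, the GAM $\Pi_{\mf G A} \compose \copycat_{A,\pi}$ implements the (projection of the) identity strategy by composing the projection fact with Theorem~\ref{thm:copycatimplements}. For an abstraction $\lambda x{:}A.M$, the induction hypothesis gives that $G_M$ implements $\sigma_M$, and the $\Lambda_A$-preservation lemma yields that $\Lambda_A(G_M)$ implements $\Lambda_A(\sigma_M) = \sigma_{\lambda x.M}$. For an application $MM'$, the hypotheses give that $G_M$ and $G_{M'}$ implement $\sigma_M$ and $\sigma_{M'}$; chaining the diagonal, tensor, composition and evaluation preservation lemmas along the defining equation $G_{MM'} = \delta \compose_{GAM} (G_M \tensor G_{M'}) \compose_{GAM} \geval$ shows that $G_{MM'}$ implements the matching composite of strategies, namely $\sigma_{MM'}$. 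The base cases for the constants ($\mathname{lit}_n$, $\mathname{if}$, the arithmetic operators, $\mathname{newvar}$ and the semaphore operators) are discharged directly, by the heap-invariant induction used in the proof of Theorem~\ref{thm:copycatplegal}: one exhibits an invariant tying each reachable configuration to the pending questions of the trace and checks that the engine's $\lP$-outputs are exactly the legal responses prescribed by the strategy. The fixpoint case is already settled by the theorem that $\mathit{Fix}_{\pi_{12},\pi_{13},\mf A}$ implements $\mathit{fix}_{\pi_{12},\pi_{13},\mf A}$.

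I expect the main obstacle to be two-fold. First, one must establish precisely that the nominal trace operators ($\gcompose$, $\tensor$, and currying via $\eta$/$\varepsilon$) correspond exactly, on trace-sets, to the categorical operations defining $\sigma_M$ in the Hyland--Ong model --- in particular that the pointer-reindexing in $\gcompose$ reproduces the ``extension'' of justification pointers in strategy composition and that hiding introduces no spurious traces. Second, among the base cases $\mathname{newvar}$ is the genuinely stateful one: unlike copycat, its heap stores the current value of the variable inside the justification structure, so its direct implements-proof needs a bespoke invariant relating the heap contents to the good-variable discipline of the ICA model rather than a reuse of the copycat argument. Everything else is bookkeeping enabled by the compact-closed structure of \netcat.
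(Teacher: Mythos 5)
Your proposal takes essentially the same route as the paper, whose entire argument is the single sentence that correctness ``follows directly from the correctness of the individual GAM nets and the correctness of GAM composition $;_{\mathit{GAM}}$'' --- i.e.\ the structural induction you spell out, resting on Theorem~\ref{thm:Kimplements}, Theorem~\ref{thm:copycatimplements}, Lemma~\ref{lem:prog}, the fixpoint theorem, and direct checks for the constants. Your version is considerably more explicit than the paper's, and correctly identifies the pieces the paper leaves implicit (the tensor/currying preservation lemmas and the bespoke invariant needed for $\mathname{newvar}$).
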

The correctness of compilation follows directly from the correctness of the individual GAM nets and the correctness of GAM composition $;_{\mathit{GAM}}$.

\subsection{Prototype implementation}

Following the recipe in the previous section we can produce an implementation of any ICA term as a GAM net. GAMs are just special-purpose HRAMs, with no special operations. HRAMs, in turn, can easily be implemented on any conventional computer with the usual store, control and communication facilities. A GAM net is also just a special-purpose HRAM net, which is a powerful abstraction of communication processes, as it subsumes through the \texttt{spark} instruction communication between processes (threads) on the same physical machine or located on distinct physical machines and communicating via a point-to-point network. We have built a prototype compiler based on GAMs by implementing them in C, managing processes using standard UNIX threads and physical network distribution using MPI~\cite{gropp1999using}.\footnote{Download with source code from \url{http://veritygos.org/gams}.}

The actual distribution is achieved using light pragma-like code annotations. In order to execute a program at node $A$ but delegate one computation to node $B$ and another computation to node $C$ we simply annotate an ICA program with node names, e.g.:
\begin{verbatim}
  {new x. x := {f(x)}@B + {g(x)}@C; !x}@A
\end{verbatim}
Note that this gives node $B$, via function $f$, read-write access to memory location $x$ which is located at node $A$. Accessing non-local resources is possible, albeit possibly expensive. 

\begin{figure}
\centering

  \begin{tikzpicture}
    \node[scale=0.9]{
      \setlength{\unitlength}{4144sp}%
\begingroup\makeatletter\ifx\SetFigFont\undefined%
\gdef\SetFigFont#1#2#3#4#5{%
  \reset@font\fontsize{#1}{#2pt}%
  \fontfamily{#3}\fontseries{#4}\fontshape{#5}%
  \selectfont}%
\fi\endgroup%
\begin{picture}(1824,2049)(439,-1513)
\thinlines
{\color[rgb]{0,0,0}\put(1081,-601){\framebox(450,450){}}
}%
{\color[rgb]{0,0,0}\put(1531,389){\line( 1, 0){270}}
\put(1801,389){\line( 0,-1){1215}}
}%
{\color[rgb]{0,0,0}\put(1531,209){\line( 1, 0){180}}
\put(1711,209){\line( 0,-1){1035}}
}%
{\color[rgb]{0,0,0}\put(1081, 74){\framebox(450,450){}}
}%
{\color[rgb]{0,0,0}\put(1531,-376){\line( 1, 0){ 90}}
\put(1621,-376){\line( 0,-1){450}}
}%
{\color[rgb]{0,0,0}\put(1081,299){\line(-1, 0){180}}
\put(901,299){\line( 0,-1){1125}}
}%
{\color[rgb]{0,0,0}\put(1081,-376){\line(-1, 0){ 90}}
\put(991,-376){\line( 0,-1){450}}
}%
{\color[rgb]{0,0,0}\multiput(901,-826)(0.00000,-60.00000){8}{\line( 0,-1){ 30.000}}
}%
{\color[rgb]{0,0,0}\multiput(1621,-826)(0.00000,-60.00000){2}{\line( 0,-1){ 30.000}}
\multiput(1621,-916)(60.00000,0.00000){2}{\line( 1, 0){ 30.000}}
\multiput(1711,-916)(0.00000,60.00000){2}{\line( 0, 1){ 30.000}}
}%
{\color[rgb]{0,0,0}\multiput(1801,-826)(0.00000,-60.00000){8}{\line( 0,-1){ 30.000}}
}%
{\color[rgb]{0,0,0}\put(901,-1276){\line( 0,-1){225}}
\put(901,-1501){\line(-1, 0){450}}
}%
{\color[rgb]{0,0,0}\put(1801,-1276){\line( 0,-1){225}}
\put(1801,-1501){\line( 1, 0){450}}
}%
{\color[rgb]{0,0,0}\put(676,-1276){\framebox(1350,450){}}
}%
{\color[rgb]{0,0,0}\multiput(991,-826)(0.00000,-64.28571){4}{\line( 0,-1){ 32.143}}
\multiput(991,-1051)(-60.00000,0.00000){2}{\line(-1, 0){ 30.000}}
}%
\put(1306,-421){\makebox(0,0)[b]{\smash{{\SetFigFont{10}{12.0}{\familydefault}{\mddefault}{\updefault}{\color[rgb]{0,0,0}$M'$}%
}}}}
\put(1306,254){\makebox(0,0)[b]{\smash{{\SetFigFont{10}{12.0}{\familydefault}{\mddefault}{\updefault}{\color[rgb]{0,0,0}$M$}%
}}}}
\put(586,-1456){\makebox(0,0)[b]{\smash{{\SetFigFont{10}{12.0}{\familydefault}{\mddefault}{\updefault}{\color[rgb]{0,0,0}$\Gamma$}%
}}}}
\put(2116,-1456){\makebox(0,0)[b]{\smash{{\SetFigFont{10}{12.0}{\familydefault}{\mddefault}{\updefault}{\color[rgb]{0,0,0}$B$}%
}}}}
\put(1351,-1096){\makebox(0,0)[b]{\smash{{\SetFigFont{10}{12.0}{\familydefault}{\mddefault}{\updefault}{\color[rgb]{0,0,0}$@$}%
}}}}
\end{picture}%
    };
  \end{tikzpicture}

\caption{Optimised GAM net for application}
\label{fig:diagapopt}
\end{figure} 
Several facts make the compilation process quite remarkable:
\begin{itemize}
\item It is \emph{seamless} (in the sense of~\cite{GhicaF12}), allowing distributed compilation where communication is never explicit but always realised through function calls. 
\item It is \emph{flexible}, allowing any syntactic sub-term to be located at any designated physical location, with no impact on the semantics of the program. The access of \emph{non-local} resources is always possible, albeit possibly at a cost (latency, bandwidth, etc.).
\item It is \emph{dynamic}, allowing the relocation of GAMs to different physical nodes at run time. This can be done with extremely low overhead if the GAM heap is empty. 
\item It does not require any form of \emph{garbage collection}, even on local nodes, although the language combines (ground) state, higher-order functions and concurrency. This is because a pointer associated with a pointer is not needed if and only if the question is answered; then it can be safely deallocated. 
\end{itemize}

The current implementation does not perform any optimisations, and the
resulting code is inefficient. Looking at the implementation of application in
Fig.~\ref{fig:diagap} it is quite clear that a message entering the GAM net via
port $A$ needs to undergo four pointer renamings before reaching the GAM for
$M$. This is the cost we pay for compositionality. However, the particular
configuration for application can be significantly simplified using standard
peephole optimisation, and we can reach the much simpler, still correct
implementation in Fig.~\ref{fig:diagapopt}. Here the functionality of the two
compositions, the diagonal, and the $\geval$ GAMs have been combined and optimised
into a single GAM, requiring only one pointer renaming before reaching $M$.
Other optimisations can be
introduced to simplify GAM nets, in particular to obviate the need for the use
of composition GAMs $K$, for example by showing that composition of first-order
closed terms (such as those used for most constants) can be done directly.

\section{Conclusions, further work}

In a previous paper we have argued that distributed and heterogeneous programming would benefit from the existence of architecture-agnostic, seamless compilation methods for conventional programming languages which can allow the programmer to focus on solving algorithmic problems without being overwhelmed by the minutiae of driving complex computational systems~\cite{GhicaF12}. In \emph{loc. cit.} we give such a compiler for PCF, based directly on the Geometry of Interaction. In this paper we show how Game Semantics can be expressed operationally using abstract machines very similar to networked conventional computers, a further development of the IAM/JAM game machines. We believe any programming language with a semantic model expressed as Hyland-Ong-style pointer games~\cite{DBLP:journals/iandc/HylandO00} can be readily represented using GAMs and then compiled to a variety of platforms such as MPI. Even more promising is the possible leveraging of more powerful infrastructure for distributed computing that can mask much of the complexities of distributed programming, such as fault-tolerance~\cite{murray2011ciel}.

The compositional nature of the compiler is very important because it gives
rise to a very general notion of foreign-function interface, expressible both
as control and as communication, which allows a program to interface with other
programs, in a syntax-independent way (see~\cite{DBLP:conf/memocode/Ghica11}
for a discussion), opening the door to the seamless development of
heterogeneous open systems in a distributed setting.

We believe we have established a solid foundational platform on which to build realistic seamless distributed compilers. Further work is needed in optimising the output of the compiler which is currently, as discussed, inefficient. The sources of inefficiency in this compiler are not just the generation of heavy-duty plumbing, but also the possibly unwise assignment of computation to nodes, requiring excessive network communication. Previous work in game semantics for resource usage can be naturally adapted to the operational setting of the GAMs and facilitate the automation of optimised task assignment~\cite{DBLP:conf/popl/Ghica05}.

\bibliographystyle{abbrv}
\bibliography{\jobname,dblp}
\end{document}